\documentclass[11pt]{article}

\usepackage{arxiv}
\usepackage[utf8]{inputenc} 
\usepackage[T1]{fontenc}    
\usepackage[breaklinks, colorlinks=true, citecolor=black, urlcolor=black, linkcolor=black]{hyperref}
\usepackage{url}            
\usepackage{booktabs}       
\usepackage{amsfonts}       
\usepackage{nicefrac}       
\usepackage[final,nopatch=footnote]{microtype}
\usepackage{graphicx}
\usepackage{natbib}
\usepackage{doi}
\usepackage{amsmath,amssymb,amsthm,mathrsfs,bbm,amsfonts,bm}
\usepackage{color,graphicx,cases,caption,enumerate}
\usepackage{float,multicol}
\usepackage[dvipsnames]{xcolor}
\usepackage{colortbl}
\usepackage{booktabs}
\usepackage{multirow}
\usepackage{setspace}
\usepackage{hyperref}
\usepackage{capt-of}
\usepackage{array}
\usepackage[title]{appendix}
\newcolumntype{C}[1]{>{\centering\let\newline\\\arraybackslash\hspace{0pt}}m{#1}}
\usepackage{natbib}
\usepackage{placeins} 
\usepackage[breaklinks, colorlinks=true, citecolor=black, urlcolor=black, linkcolor=black]{hyperref}
\usepackage{subcaption}
\usepackage{tikz}
\usetikzlibrary{fit,positioning, arrows.meta}
\usepackage{algorithm}
\usepackage{algpseudocode}
\usepackage{amsthm}
\newtheorem{assumption}{Assumption}
\newtheorem{definition}{Definition}[section]
\newtheorem{theorem}[definition]{Theorem}
\newtheorem{corollary}[definition]{Corollary}
\newtheorem{lemma}[definition]{Lemma} 
\newtheorem{proposition}[definition]{Proposition}

\newcommand{\Ree}{\ensuremath{\mathbb{R}}}

\newcommand{\p}{\ensuremath{\mathbb{P}}}

\newcommand{\e}{\ensuremath{\mathbb{E}}}

\title{Identification and Inference for Causal Effects in Extremes under General Conditions}

\author{
  Lisa Leimenstoll\textsuperscript{a,*} \qquad
  Melanie Schienle\textsuperscript{a,b} \\[0.5em]
  \textsuperscript{a}Karlsruhe Institute of Technology (KIT), Institute of Statistics (STAT),\\
  Blücherstr.~17, 76185 Karlsruhe, Germany \\[0.5em]
  \textsuperscript{b} Heidelberg Institute for Theoretical Studies (HITS), Computational Statistics (CST)\\
  Schloss-Wolfsbrunnenweg ~35, 69118 Heidelberg, Germany \\[0.5em]
  \textsuperscript{*}Corresponding author: \texttt{lisa.leimenstoll@kit.edu}
}

\renewcommand{\headeright}{}
\renewcommand{\undertitle}{}
\renewcommand{\headerleft}{Identification and Inference for Causal Effects in Extremes under General Conditions}
\renewcommand{\shorttitle}{}

\usepackage{xcolor}

\begin{document}
\maketitle

\begin{abstract}
\noindent
Understanding the propagation of extreme events is important in many economic and environmental applications, yet most econometric methods for causal inference focus on average effects rather than tail behavior. This paper studies the identification of causal relations in extremes and derives resulting estimators and their asymptotic inference. As measure of causal dependence between extreme realizations of variables, we analyze the asymptotic behavior of the Causal Tail Coefficient (CTC) within a linear structural causal model with heavy-tailed regularly varying innovations. In contrast to the existing literature, we allow the variables in the system to exhibit heterogeneous tail indices and consider the presence of potentially heavy-tailed confounders. We derive theoretical results assessing the limiting behavior of the CTC under these conditions and show how differences in tail behavior can help to reach identification of the causal structure. Light-tailed confounders are asymptotically negligible, but sufficiently heavy-tailed confounders can induce extremal dependence patterns that are observationally indistinguishable from direct causal effects. When suitable proxy information is available, identification can be recovered using an adjusted Causal Tail Coefficient. Based on these results, we develop estimation and inference procedures for causal relations in extremes under general conditions. We establish asymptotic properties of the proposed estimators and derive tests for the causal direction and heavy-tailed confounding. Simulation evidence examines their finite-sample performance and provides guidance on their implementation. Applications to climate and financial extremes illustrate how the proposed methods can uncover causal relations that may remain undetected by approaches targeting average dependence.
\end{abstract}

\keywords{ Causality in Extremes \and Identification and Inference \and Causal Tail Coefficient \and Hidden Confounder \and Climate and Market Extremes }

\section{Introduction}
Identification of causal effects from observational data is one of the central objectives of econometrics. Over the past decades, substantial progress has been made in understanding when causal effects are identifiable from structural assumptions and how they can be estimated consistently. Most existing identification theory, however, is developed for average effects or relies on information from the bulk of the distribution. In many economic applications, however, the phenomena of interest occur only during rare and extreme events, where the assumptions underlying classical methods may no longer be informative.
Examples include financial crises, systemic risk, climate disasters, infrastructure failures, and supply-chain disruptions. In these settings, the relevant causal mechanisms are often activated only by unusually large shocks and impacts particularly matter in the extremes. Consequently, understanding how extreme events propagate through an economic system requires identification methods that explicitly exploit information contained in the tails of the distribution rather than in average behavior.

This paper studies identification of causal effects in the tails of the distribution within linear structural causal models defined on directed acyclic graphs. We consider observational settings in which variables follow heavy-tailed distributions, i.e., extreme values occur with higher probability than under a normal distribution, as is typically the case for financial returns. However, the variables may have potentially different tail indices and unobserved confounding may be present. 
While graphical causal models have become an important framework for formalizing identification assumptions in econometrics, existing results for causal inference in extremes are scarce and so far rely on restrictive assumptions, in particular all variables should exhibit comparable tail behavior. Such assumptions are often violated in empirical applications, where different economic variables can exhibit substantially different probabilities of extreme realizations.

Our first contribution is to characterize identification of causal effects under heterogeneous tail behavior. We derive the limiting behavior of the Causal Tail Coefficient (CTC) \citep{gnecco2021causal} when innovations are regularly varying with different tail indices. Heterogeneity in the tails is prevalent in practice (see e.g. \autoref{fig:histograms}) and the resulting theory demonstrates that heterogeneity in tail behavior is not merely a technical complication but constitutes an additional source of identifying information. Depending on the ordering of the tail indices, heterogeneous tails can either strengthen or weaken identification of causal directions, yielding identification results that are fundamentally different from the equal-tail setting considered in the existing literature.
Our second contribution concerns confounding. We distinguish between light-tailed and heavy-tailed unobserved confounders and show that they affect identification in fundamentally different ways. Whereas light-tailed confounders become asymptotically negligible for tail identification, sufficiently heavy-tailed confounders can generate extremal dependence patterns that are observationally indistinguishable from direct causal effects. We further show that identification can be recovered whenever suitable proxy information for the confounder is available through an adjusted version of the Causal Tail Coefficient.
Our third contribution is on inference. Building on the identification theory, we develop estimation and inference procedures for causal effects in extremes under general conditions. We derive asymptotic properties of the proposed estimators, establish hypothesis tests for causal direction and heavy-tailed confounding, and investigate their finite-sample performance through extensive simulation experiments.

\begin{figure}[ht]
    \centering
    \begin{subfigure}{0.33\textwidth}
        \centering
        \includegraphics[width=\textwidth]{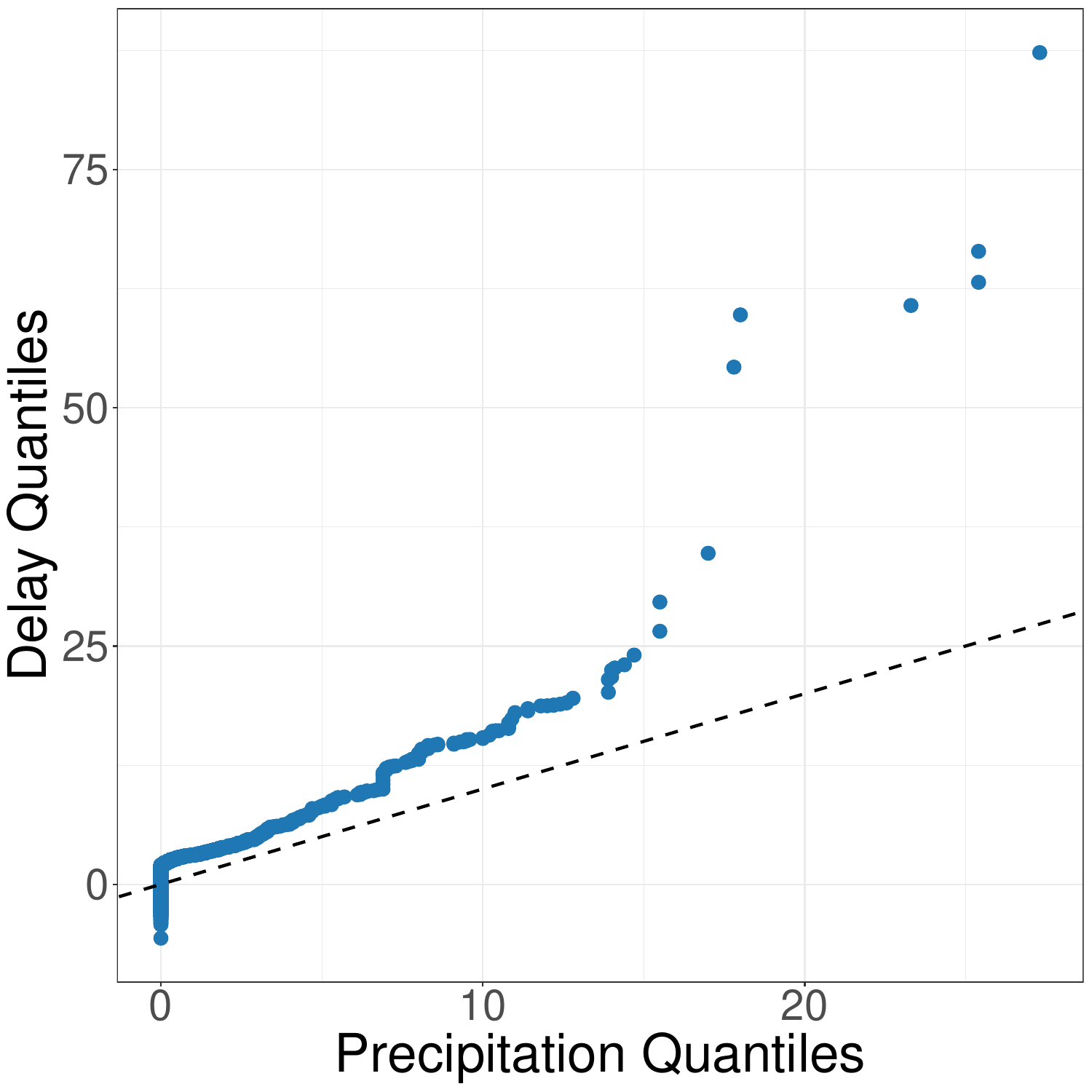}
        \caption{Precipitation and Train Delays}
    \end{subfigure}
    \hfill
    \begin{subfigure}{0.33\textwidth}
        \centering
        \includegraphics[width=\textwidth]{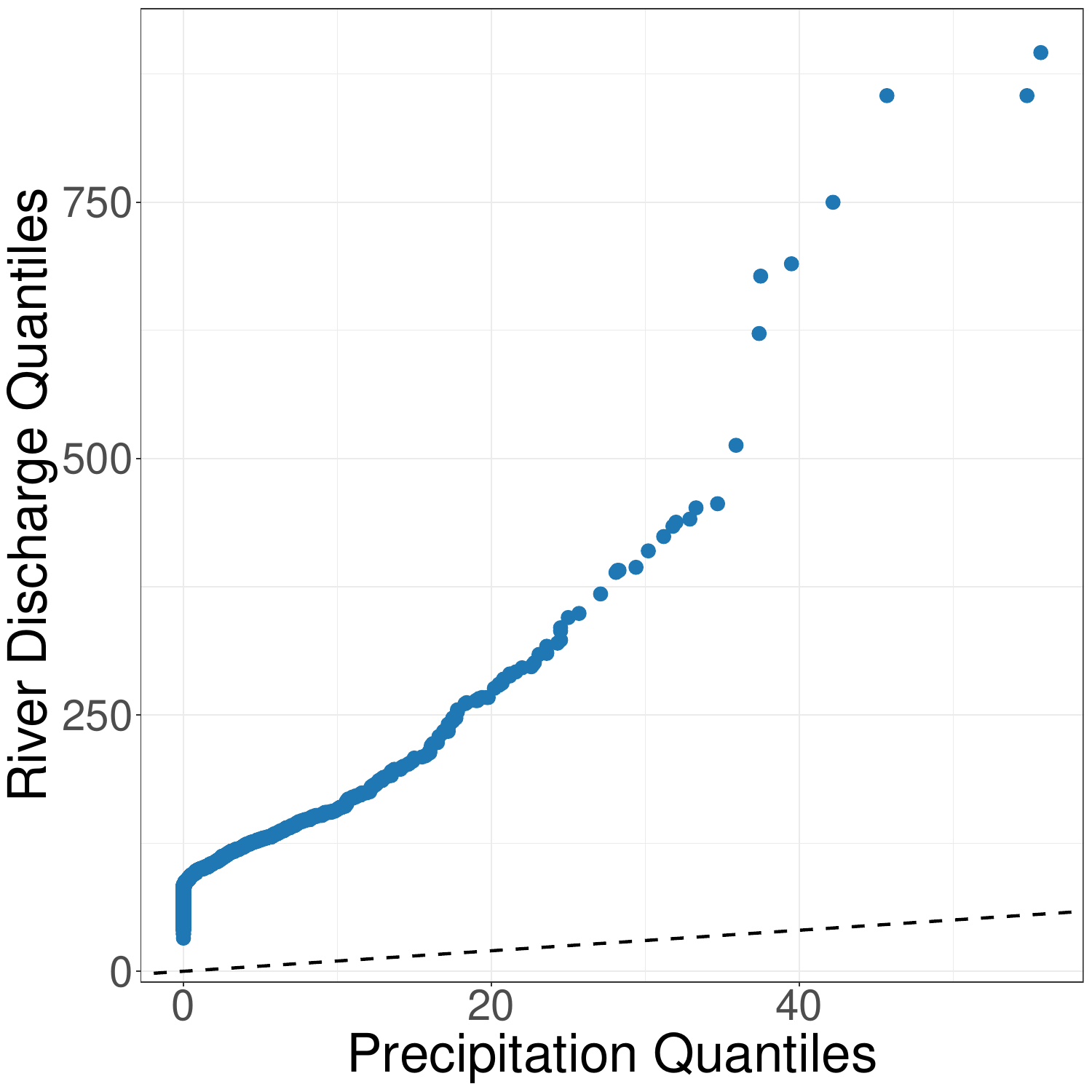}
        \caption{Precipitation and River Flows}
    \end{subfigure}
    \hfill
    \begin{subfigure}{0.33\textwidth}
        \centering
        \includegraphics[width=\textwidth]{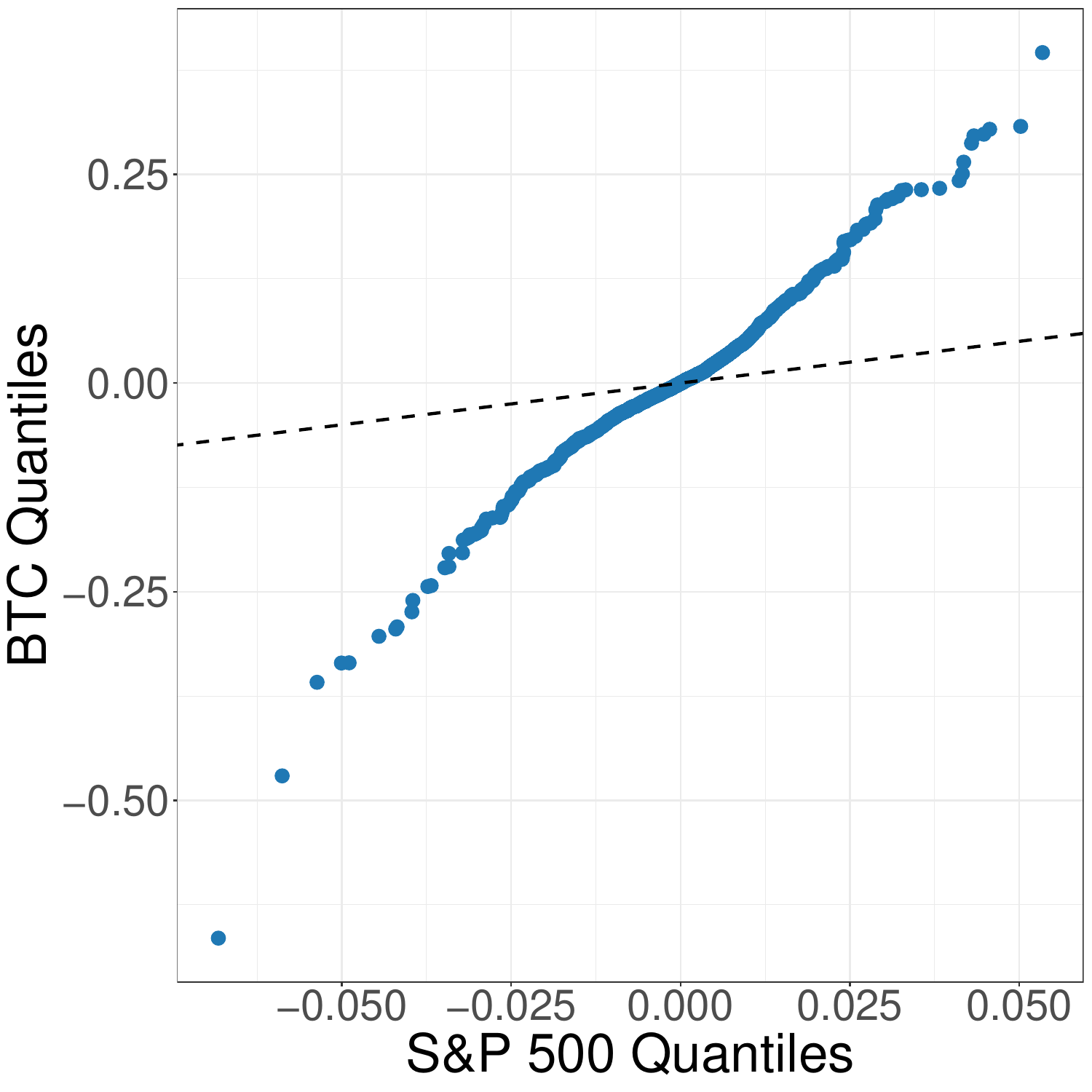}
        \caption{S\&P 500 and Bitcoin}
    \end{subfigure}
    \caption{Quantile-Quantile plots for the three empirical applications in climate and finance. For detailed descriptions of the data we refer to Section~\ref{sec:application}.}
    \label{fig:histograms}
\end{figure}

Finally, we demonstrate the practical relevance of the proposed methodology using applications from climate economics and financial markets. These applications illustrate that economically meaningful causal relations may become identifiable through information contained in extreme observations even when conventional methods focusing on average dependence fail to detect them. In particular, we find a causal impact of heavy precipitation on train delays and also find evidence of a causal impact of extreme precipitation on river discharge and investigate the temporal delay in flood responses. In studying the relation of extremes of the S\&P 500 and Bitcoin returns, we detect causal directions in the positive and negative tails that might be of interest for investors and regulators.

Our work contributes to three strands of the literature. First, we contribute to the literature on causal inference using structural and graphical models. In a structural causal framework, identifying causal relations corresponds to identifying the directed acyclic graph (DAG) underlying the data-generating process. Graphical causal models have become increasingly influential in econometrics as tools for formalizing identification assumptions and reasoning about confounding and causal propagation \citep[see, e.g.,][]{CorSan2024, HuntKlei2022, imb2020}. Our approach adopts this structural perspective and studies how causal relations can be identified when attention is focused on extreme realizations of the variables in the general case of heterogeneous tails.
Second, our work contributes to the growing literature on extreme value econometrics and tail dependence modeling. A large body of work studies dependence between extremes using copulas, multivariate extreme value models, or conditional tail dependence measures \citep[see, e.g.,][]{patton2006modelling, bormann2020detecting, oh2017modeling, jiang2026efficient, bucher2015nonparametric}. For example, \cite{nikoloulopoulos2012vine} assess asymmetric tail dependence in financial returns using vine copula models, while \cite{mhalla2019exceedance} develop a generalized additive modeling framework for estimating covariate-dependent tail dependence in multivariate extremes and demonstrate its effectiveness in environmental applications. Extensions to non-stationary environments include \cite{murphy2024modelling}, who model evolving joint tail risks in environmental data. These approaches provide flexible tools for modeling extremal dependence but typically focus on association rather than causal direction, and often require strong assumptions about the joint tail structure.
Third, our paper contributes to the emerging literature on causal inference for extreme events. The Causal Tail Coefficient (CTC) introduced by \cite{gnecco2021causal} provides a framework for measuring causal propagation between extreme realizations in linear structural causal models. Recent work has extended this framework in several directions. For example, \cite{pasche2023causal} study the role of observable heavy-tailed confounders, while \cite{bodik2024causality} and \cite{bodik2024granger} develop extensions to time-series settings and extreme-event Granger causality. Alternative approaches to causal inference in extremes have also been proposed. For instance, \cite{tran2024estimating} estimate directed tree structures for extremes using max-linear causal graphical models, and \cite{mhalla2020causal} develop a causal discovery method based on the Kolmogorov complexity of extreme conditional quantiles. Despite this growing literature, existing approaches typically rely on restrictive assumptions on tail behavior, such as equal tail indices across variables or specific joint tail structures. In many empirical applications, however, variables exhibit significantly different tail behavior, and heavy-tailed confounders may distort the observed extremal dependence. Modeling the joint tail structure based on a small subset of extreme observations is therefore challenging and may lead to misleading causal conclusions.

The remainder of the paper proceeds as follows. Section~\ref{sec:theory} introduces the structural framework and Section~\ref{sec:CTC} derives the asymptotic properties of the CTC when tail indices differ. Section~\ref{sec:device} develops the proposed testing strategy and Section~\ref{sec:simulation} investigates its finite-sample properties. Section~\ref{sec:application} presents three empirical applications in climate and financial markets. Section~\ref{sec:conclusion} concludes. All proofs are contained in the appendix. All replication material for the simulation study and the empirical applications is available at \url{https://github.com/KITmetricslab/Tail-Causality}.

\section{Structural Model and Identification Problem}
\label{sec:theory}

This section introduces the structural framework used to analyze causal relations between extreme realizations of variables. We consider linear structural causal models associated with directed acyclic graphs and study the behavior of the Causal Tail Coefficient when the involved variables exhibit heavy-tailed distributions with possibly different tail indices.

\subsection{Linear Structural Causal Model for Extremes}

As we are interested in causal relations for extremes, we assume that variables have heavy-tailed distributions where the probability of occurrence of extremes is not too small. For such variables, it is natural to classify their tail behavior via {regular variation}, which provides a flexible framework for modeling extremal dependence both marginally and jointly. A random variable $X$ is regularly varying with index $\alpha > 0$ ($X\in RV(\alpha))$ if
\begin{equation}
\label{eq:reg_var}
    \mathbb{P}(X>x)\sim \ell(x)x^{-\alpha} \textit{ for } x\rightarrow \infty, 
\end{equation}
where $\ell$ is called slowly varying \citep{resnick2007heavy} if for all $t > 0$, it is positive measurable and $\ell(tx)/\ell(x) \to 1$ as $x \to \infty$. We write $f \sim g$ if $f(x)/g(x) \to 1$ as $x \to \infty$. When working with (several) heavy-tailed random variables, it is important to categorize them into classes of regular variation that capture the respective tail behavior. We say that two independent random variables $\epsilon_1$ and $\epsilon_2$ have {comparable upper tails}, if there exist $c_1$ and $c_2$, $\alpha>0$, so that $\mathbb{P}(\epsilon_j>x)\sim c_j\ell(x)x^{-\alpha}$ for $x\rightarrow \infty$, $j=1,2$. We refer to this index $\alpha$ as the {tail index}. Examples of regularly varying variables include Pareto distributed variables, as well as Student's $t$, where the degree of freedom corresponds to the regular variation index, and Cauchy distributed variables. 

If $X$ is regularly varying with tail index $\alpha>0$, then $X$ lies in the so-called Fréchet max-domain of attraction. Consequently, there exists a positive normalizing function $\tau_u$ such that the conditional distribution of exceedances above a high threshold satisfies 
\begin{equation}
\label{eq:GPDas}
    \lim_{u\to\infty}
    \p\left(
        \frac{X-u}{\tau_u}\leq x
        \,\Big|\, X>u
    \right)
    =
    G(x; (1,\xi))
    \qquad x\geq 0,
\end{equation}
where $G(\cdot;(1,\xi))$ denotes a generalized Pareto distribution (GPD) with unit scale and shape parameter $\xi$. More generally, the GPD with scale parameter $\tau>0$ and shape parameter $\xi\in\Ree$ is given by
\begin{equation}
\label{eq:GPD}
    G(x;(\tau,\xi))
    =
    1-\left(1+\frac{\xi x}{\tau}\right)_+^{-1/\xi},
    \qquad x\geq0,
\end{equation}
where we use the compact notation $z_+=\max(0,z)$ for any $z$. For regularly varying $X$ we have $\xi = 1/\alpha > 0$. The shape parameter $\xi$ governs the tail behavior of the distribution \citep{beirlant2005estimation}. In particular, $\xi>0$ corresponds to a heavy tail marked by a comparatively slow polynomial tail decay in its distribution. When estimating the distribution in \eqref{eq:GPD}, a GPD is often fitted to the exceedances of a sufficiently high threshold $u$, where the choice of the threshold matters. This corresponds to the peaks-over-threshold approach (POT) (\cite{pickands1975statistical}, \cite{balkema1974residual}). See also Section~\ref{subsec:tail_index_sim} for details.

\bigskip
For modeling causal relations in extremes, the setting of regularly varying functions and the corresponding multivariate Pareto limits of exceedances is the natural key assumption. Moreover, it has also been shown statistically that a notion of conditional independence tailored to multivariate Pareto laws in fact is a sensible basis for a corresponding graphical modeling framework for multivariate extremes \citep{engelke2020graphical, engelke2022structure}. This creates a direct link between graphical models and extreme value statistics and shows that familiar graph-based dependence concepts in the mean of the distribution under Gaussian distributional assumptions can be carried over to tail events under regular variation and respective GPD limits.
Thus, let $G=(V,E)$ denote a directed acyclic graph (DAG) where the set of nodes 
$V=\{1,\ldots ,p \}$ corresponds to random variables $X_1,\ldots,X_p$ and directed edges $E \subset V \times V$ 
represent causal directions between variables. The data-generating process is described by a linear structural causal model (LSCM)
\begin{equation}\label{eq:LSCM}
            X_j:=\sum_{k\in pa(j,G)} \beta_{kj}X_k+\epsilon_j,\, j\in V
\end{equation}
where $pa(j,G)\subseteq V=\{1,\ldots,p\}$ denotes the parents of node $j$ in the graph marking all $X_k$ with a directed edge from $k$ to $j$. The innovation terms $\epsilon_j$ are assumed to be mutually independent in $RV(\alpha_j)$, and represent exogenous shocks. The parameters $\beta_{kj} \in \Ree \setminus\{0\}$ are structural coefficients in the extremal causal model.   
For an acyclic $G$, recursive substitution of the structural equations \eqref{eq:LSCM} yields
\begin{equation}
    X_j:=\sum_{k\in An(j,G)} \beta_{k\rightarrow j}\epsilon_k,\, j\in V,
\end{equation}
where $An(j,G)$ denotes the set of ancestors of node $j$ (including $j$ itself) that contains all $X_k$ with some directed path from $k$ to $j$ and $\beta_{k \to j}$ denotes the total effect of the noise term $\epsilon_k$ on node $j$. We further define $A_{ij}:=An(i,G)\cap An(j,G)$.

In this framework, directed paths in the graph of \eqref{eq:LSCM} correspond to causal propagation mechanisms \citep{gnecco2021causal} even in the presence of some confounder. In particular, if there exists a directed path from $X_k$ to $X_j$, extreme realizations of $X_k$ may propagate through the system and generate extreme outcomes in $X_j$. For the remainder of this work we assume $\beta_{ij}\geq 0$, but all results extend also to the case of negative weights.

\subsection{The Causal Tail Coefficient } 
To quantify causal propagation between extremes, we build on the Causal Tail Coefficient (CTC) introduced by \citet{gnecco2021causal}.
For any pair of regularly varying variables $X_i$ and $X_j$ in the LSCM \eqref{eq:LSCM} the CTC is defined as 
\begin{equation}
    \label{eq:ctc}
    \Gamma_{X_i\rightarrow X_j} := \underset{u \rightarrow 1-}{\lim} \mathbb{E}(F_j(X_j)|F_i(X_i)>u).        
\end{equation}
where $F_j$ denotes the distribution function of $X_j$ for $j=1,2$. Intuitively, the coefficient measures a normalized version of the expected rank of $X_j$ conditional on $X_i$ taking an extreme upper-tail value. For extreme events, the Causal Tail Coefficient (CTC) reveals causal links between any two nodes in \eqref{eq:LSCM}, analogous to how the precision matrix captures conditional dependencies in standard Gaussian graphical models.

$\Gamma_{X_i\rightarrow X_j}$ ranges between zero and one and remains unchanged under any monotone increasing marginal transformation,  as it depends on the rescaled margins $F_j(X_j)$. For equal tails of the innovations in \eqref{eq:LSCM}, if extreme realizations of $X_i$ systematically lead to extreme realizations of $X_j$, the coefficient approaches one \citep{gnecco2021causal}. In contrast, if no causal relation exists between the variables, the coefficient approaches one half. The CTC therefore provides a measure of directional dependence between extreme realizations of variables in a linear structural causal model.

\bigskip 

For ease of exposition, in this paper we focus on the causal relationship between two variables and a potential confounder in the extreme. Our approach, however, is theoretically scalable to an arbitrary number of fixed dimensions $p$ by systematically examining all pairwise interactions between nodes. For specific structural scenarios, it might also be possible to obtain tractable results for the full graph in dimension $p$. 

Thus in the following, without loss of generality we focus on identifying causal relations between two variables $X_1$ and $X_2$ under general conditions. Three basic causal structures can arise:
\begin{itemize}
	\item Independence: no directed path between the variables, and no common ancestor
	\item Direct causal relation: $X_1\rightarrow X_2$ or $X_2\rightarrow X_1$
	\item Confounding: both variables share a common ancestor H
\end{itemize} 
The goal is to distinguish these cases using information contained in extreme observations. In contrast to the existing theoretical literature on causal tail inference, we allow the variables in the system and a potential confounder to exhibit heterogeneous tail indices, i.e., the tail decay of different variables may differ substantially which is the prevalent case in practice. We also show that this heterogeneity can in fact be helpful for identifying the causal direction.

\section{Identification under Heterogeneous Tails}\label{sec:CTC}
Existing identification results for the CTC as defined in equation \eqref{eq:ctc} typically assume that all involved variables have identical tail indices. While this assumption simplifies the limit behavior of the coefficient in the extreme tail, it is restrictive in practice, as empirical variables often exhibit substantially different tail behavior.
In this subsection we show that heterogeneous tail indices can themselves provide identifying information about the direction of causal propagation. To address this, we require the following assumption that admits different tails in the innovations and forms the basis of our proposed identification strategy.
\begin{assumption}\label{assumption1}
Let $\epsilon_1, \ldots, \epsilon_p$ be independent random variables such that $\epsilon_i \in RV(\alpha_i)$ for $i = 1, \ldots, p$, and denote $\bar{F}_{\epsilon_r}(x) = \mathbb{P}(\epsilon_r > x)$, where $\epsilon_r$ is chosen as the reference innovation with the dominant right tail. Assume that the following conditions hold: 
\begin{enumerate}[(i)]
    \item For some non-negative constants $c_i^{+}, c_i^{-}$ and $c_i$,
    \[
        \lim_{x \to \infty} \frac{\mathbb{P}(\epsilon_i > x)}{\bar{F}_{\epsilon_r}(x)} = c_i^{+}, 
        \qquad 
        \lim_{x \to \infty} \frac{\mathbb{P}(\epsilon_i \leq -x)}{\bar{F}_{\epsilon_r}(x)} = c_i^{-},
        \qquad 
        \lim_{x \to \infty} \frac{\mathbb{P}(\epsilon_i \leq -x)}{\mathbb{P}(\epsilon_i > x)} = c_i, 
        \quad i = 1, \ldots, p.
    \]
    
    \item For all $i \neq j$,
    \[
        \lim_{x \to \infty} 
        \frac{\mathbb{P}(\epsilon_i > x, \epsilon_j > x)}{\bar{F}_{\epsilon_r}(x)} 
        = 
        \lim_{x \to \infty} 
        \frac{\mathbb{P}(\epsilon_i \leq -x, \epsilon_j > x)}{\bar{F}_{\epsilon_r}(x)} 
        = 
        \lim_{x \to \infty} 
        \frac{\mathbb{P}(\epsilon_i \leq -x, \epsilon_j \leq -x)}{\bar{F}_{\epsilon_r}(x)} 
        = 0.
    \]
    \item Whenever $\alpha_i=\alpha_j$, the corresponding right
    tails are asymptotically comparable:
    \[
        \underset{x\to \infty}{\lim}\frac{\p(\epsilon_i>x)}
             {\p(\epsilon_j>x)}
        \to c_{ij}\in(0,\infty).
    \]
\end{enumerate}
For any $s\in\{1,\ldots,p\}$ such that $\alpha_s>\alpha_r$ $c_s^+=c_s^-=0$. If $\alpha_s=\alpha_r$, $c_s^+$ is positive, while $c_s^-$ may be zero or positive.
\end{assumption}

Assumption \ref{assumption1} holds, for example, for Pareto or Student's $t$ distributed variables, if $\epsilon_r$ is chosen so that $\alpha_r= \underset{i=1,\ldots,p}{\min}\alpha_i$.  $c_s^+=0$ for unequal tail indices follows directly by the definition of regularly varying variables. For ease of exposition, we set $c_{ij}=1$ in Assumption \ref{assumption1} (iii) in the following, abstracting from differences in the slowly varying part of the tails \citep[compare, e.g., ][]{gnecco2021causal}. As we focus on the case of $X_1$ and $X_2$, we have $p=2$ without a confounder, or $p=3$ for the case with confounder.

Assumption \ref{assumption1} is key in order to show that the tail distribution of the sum of regularly varying variables $S_p:=\epsilon_1+\epsilon_2+\ldots+\epsilon_p$ can be approximated by the sum of the tail distributions of the individual variables (Lemma~\ref{lem:twotails}). This is the main technical ingredient for deriving the tail behavior of the CTC when tails in the innovations of the LSCM differ.

\subsection{The Case without Confounder}
In the following, we state all results and estimators for a potential causal relation of the form $X_1\rightarrow X_2$. With interchange of variables analogous results apply for $X_2\rightarrow X_1$.

We can use the different limit behavior of $\Gamma_{X_2\rightarrow X_1}$ and $\Gamma_{X_1\rightarrow X_2}$ in the extreme tails in order to identify the underlying causal direction in the tails. The following proposition studies different cases for different tail indices $\alpha_1$ and $\alpha_2$.

\begin{proposition} \label{proposition}
For the extreme LSCM under Assumption \ref{assumption1} it holds that 
\begin{enumerate}
    \item no connection // $X_1$ and $X_2$ are independent:\\  $\Gamma_{X_1\rightarrow X_2}=\Gamma_{X_2\rightarrow X_1}=0.5$ for all $\alpha_1, \alpha_2$.
    \item causal relation $X_1\rightarrow X_2$: \\ 
    $\Gamma_{X_1\rightarrow X_2}=1$ and $\Gamma_{X_2\rightarrow X_1}=\left\{\begin{tabular}{ll} 0.5 & for $\alpha_1>\alpha_2$ \\ $c \in (0.5, 1)$ & for $\alpha_1=\alpha_2$ with $c$ depending on $\beta_{1\rightarrow 2},\alpha_1,\alpha_2$ \\  1 & for $\alpha_2>\alpha_1$ \end{tabular} \right.$ 
\end{enumerate}
\end{proposition}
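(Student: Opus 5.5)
We treat the two parts of Proposition~\ref{proposition} separately. In both, the tail of a sum of independent regularly varying terms is governed by Lemma~\ref{lem:twotails}, i.e.\ $\p(S>x)\sim\sum_k \p(\beta_k\epsilon_k>x)$ under Assumption~\ref{assumption1}. Since $F_i$ is continuous and increasing in the tail, conditioning on $F_i(X_i)>u$ is equivalent to conditioning on $X_i>q_i(u):=F_i^{-1}(u)\to\infty$.

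\emph{Independence.} If $X_1,X_2$ are independent, then $F_2(X_2)$ is independent of the event $\{F_1(X_1)>u\}$. Hence $\e(F_2(X_2)\mid F_1(X_1)>u)=\e F_2(X_2)=1/2$ for every $u$, because $F_2(X_2)$ is uniform by continuity. The same argument gives $\Gamma_{X_2\rightarrow X_1}=1/2$, and no tail assumption is needed.

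\emph{Causal relation $X_1\rightarrow X_2$.} Write $X_1=\epsilon_1$ and $X_2=\beta\epsilon_1+\epsilon_2$ with $\beta=\beta_{1\to2}>0$.

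For $\Gamma_{X_1\rightarrow X_2}=1$, fix $M$ and condition on $X_1>x$. Then $X_2\geq \beta x+\epsilon_2$, so $\p(X_2\le M\mid X_1>x)\le \p(\epsilon_2\le M-\beta x)\to0$, independently of the tails. Hence $F_2(X_2)\to1$ in conditional probability, and bounded convergence gives the limit $1$.

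For $\Gamma_{X_2\rightarrow X_1}$, condition on $X_2>x$ and decompose according to which innovation drives the exceedance. By Lemma~\ref{lem:twotails}, $\p(X_2>x)\sim \p(\beta\epsilon_1>x)+\p(\epsilon_2>x)$. The single-big-jump principle gives
\[
\p(\beta\epsilon_1>(1-\delta)x\mid X_2>x)\to p:=\lim_{x\to\infty}\frac{\p(\beta\epsilon_1>x)}{\p(\beta\epsilon_1>x)+\p(\epsilon_2>x)} ,
\]
and the complementary event is asymptotically the event $\{\epsilon_2>(1-\delta)x,\ \epsilon_1=O_p(1)\}$. On the first event $F_1(X_1)\to1$. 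On the second, $\epsilon_1$ asymptotically keeps its unconditional law, because conditioning on $\epsilon_2$ large is independent of $\epsilon_1$; so $\e F_1(X_1)\to1/2$ there. Hence $\Gamma_{X_2\rightarrow X_1}=p+(1-p)/2=(1+p)/2$.

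The ratio $p$ is determined by the tail indices:
\begin{itemize}
\item If $\alpha_1>\alpha_2$, then $\p(\epsilon_1>x)=o(\p(\epsilon_2>x))$, so $p=0$ and the limit is $0.5$.
\item If $\alpha_2>\alpha_1$, then $p=1$ and the limit is $1$.
\item If $\alpha_1=\alpha_2=\alpha$, then with $c_{12}=1$ we get $p=\beta^{\alpha}/(1+\beta^{\alpha})\in(0,1)$, so $c=\tfrac12+\tfrac12\beta^{\alpha}/(1+\beta^{\alpha})\in(0.5,1)$.
\end{itemize}

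The main obstacle is making the decomposition rigorous. We must show that the intermediate event, where both $\epsilon_1$ and $\epsilon_2$ are large but neither exceeds $(1-\delta)x$, has vanishing conditional probability. We must also show that on $\{\epsilon_2 \text{ large}\}$ the conditional law of $\epsilon_1$ converges to its unconditional law. We would first attempt a truncation at $\epsilon_1\le M$ and $\epsilon_1>M$, with $M\to\infty$ after $x\to\infty$. The piece $\epsilon_1>M$ together with $\epsilon_2$ near $x$ would be controlled by Assumption~\ref{assumption1}(ii) together with Potter bounds. Negative parts of the innovations would be handled through the constants $c_i^-$: on $\{\epsilon_2\text{ large}\}$, $\epsilon_1$ may be negative with its unconditional probability, and this is already accounted for by the limit $1/2$.
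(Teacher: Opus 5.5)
Your proposal is correct and follows essentially the paper's route. The paper imports the representation of \citet{gnecco2021causal}, $\Gamma_{X_2\rightarrow X_1}=\frac12+\frac12\lim_{x\to\infty}\beta^{\alpha_1}\p(\epsilon_1>x)/\{\beta^{\alpha_1}\p(\epsilon_1>x)+\p(\epsilon_2>x)\}$, asserts via Lemma~\ref{lem:twotails} that its derivation carries over to unequal tails, and reads off which power of $x$ dominates, which is exactly your $(1+p)/2$. The only differences are these: you derive that representation yourself through the single-big-jump decomposition, and you get part~1 and $\Gamma_{X_1\rightarrow X_2}=1$ by direct elementary arguments instead of from the formula. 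The step you flag as the main obstacle closes by dominated convergence: for each fixed $y$, $\bar F_{\epsilon_2}(x-\beta y)/\bar F_{\epsilon_2}(x)\to 1$, which gives $\p(\epsilon_1\le y,\,X_2>x)\sim\p(\epsilon_1\le y)\,\bar F_{\epsilon_2}(x)$.
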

Consequences of the above proposition are the following three cases for identification: Case 1: If $\alpha_1>\alpha_2$, we identify that $X_1$ tail causes $X_2$ if $\Gamma_{X_1\rightarrow X_2}=1$, $\Gamma_{X_2\rightarrow X_1}=0.5$ and $\Delta_{X_1 \to X_2}:=\Gamma_{X_1\rightarrow X_2}-\Gamma_{X_2\rightarrow X_1}=0.5$. Case 2: If $\alpha_1<\alpha_2$, the direction of causality cannot be discerned since $\Gamma_{X_1\rightarrow X_2}=1$ and  $\Gamma_{X_2\rightarrow X_1}=1$.
Case 3: For $\alpha_1=\alpha_2$ we can identify that $X_1$ tail causes $X_2$ if $\Gamma_{X_1\rightarrow X_2}=1$, $\Gamma_{X_2\rightarrow X_1}=c$ and $\Delta_{X_1\to X_2}=1-c>0$, where $c\in (0.5,1)$ depending on the causal effects $\beta_{1\rightarrow 2}$ and the tail index. Our contributions are the cases 1 and 2 where the tail differences allow for identification. Case 3 has been extensively studied in \cite{gnecco2021causal} and is contained here for completeness.

\subsection{The Case with Confounder}

In this subsection we consider the extension of the bivariate setting to an additional confounder $H$. In econometric settings this setting is of particular practical relevance. In graph terms, a hidden confounder is an unobserved common ancestor of the two variables where we assume that confounding effects in the LSCM are also linear as illustrated in Figure \ref{fig:cases}. Here we distinguish essentially between two additional cases depending on the degree of heavy-tailedness of $\epsilon_H\in RV(\alpha_H)$ relative to $\epsilon_1$ and $\epsilon_2$ as measured by the respective tail indices. In particular, if $\alpha_H>\max(\alpha_1,\alpha_2)$ the above results continue to hold without any changes as the proposition below indicates. This case corresponds to a confounder whose tail is lighter than the tails of both innovations. If, however, $\alpha_H\leq\max(\alpha_1,\alpha_2)$ identification of the causal direction via the standard CTC can fail. As a remedy, we propose to use instead the adjusted CTC \citep{pasche2023causal}, which conditions on an observed (proxy of the) confounding vector $H$ 
\begin{equation}\label{eq:ctc_confounder}
    \Gamma_{X_2\rightarrow X_1|H}:=\underset{u\rightarrow 1-}{\lim} \e_{(X_1,X_2,H)}\{F_{1}(X_1|H)|F_{2}(X_2|H)>u\}.
\end{equation}
We assume that conditioning on $H$ accounts for all relevant confounding effects requiring a rather strong proxy of the true confounder $H$. Under this assumption, the following results hold in the presence of a confounder $H$.

\begin{proposition} \label{proposition_confounder}
For the extreme LSCM under Assumption \ref{assumption1} and in the presence of a confounder with $\epsilon_H\in RV(\alpha_H)$ it holds that 
\begin{enumerate}
    \item no causal connection between $X_1$ and $X_2$:\\  
\[
    \begin{array}{rclll}
    \Gamma_{X_1\rightarrow X_2}&=&&\Gamma_{X_2\rightarrow X_1}=0.5, & \quad \mbox{ for all } \alpha_H>\max(\alpha_1,\alpha_2).\\
    \Gamma_{X_1\rightarrow X_2}&=&&\Gamma_{X_2\rightarrow X_1}=1, & \quad \mbox{ for all } \alpha_H <\min(\alpha_1,\alpha_2).\\
     \Gamma_{X_1\rightarrow X_2}&=&c_{H1},\, &\Gamma_{X_2\rightarrow X_1}=c_{H2}, & \quad \mbox{ for all } \min(\alpha_1,\alpha_2)\leq \alpha_H\leq\max(\alpha_1,\alpha_2), 
    \end{array}
\]
    with $c_{H1},c_{H2}\in[0.5,1]$. For all $\alpha_H$, including the case $\alpha_H\leq\max(\alpha_1,\alpha_2)$ it holds that\\ 
    \begin{align*}
    \Gamma_{X_1\rightarrow X_2|H}=\Gamma_{X_2\rightarrow X_1|H}=0.5, & \quad \mbox{ for all } \alpha_1,\alpha_2. \\
    \end{align*}

    \item causal relation $X_1\rightarrow X_2$: $\quad$ $\Gamma_{X_1\rightarrow X_2} = 1$ and \\ 
 \[
\begin{array}{rcll}
\Gamma_{X_2\rightarrow X_1}
&=&
\begin{cases}
0.5, & \text{for } \alpha_1>\alpha_2,\\
c_H \in (0.5,1), &  \text{for } \alpha_1=\alpha_2,\\
1, & \text{for } \alpha_2>\alpha_1,
\end{cases}
&\quad  \text{for all } \alpha_H>\max(\alpha_1,\alpha_2), \\[1.2em]

\Gamma_{X_2\rightarrow X_1} &=& \quad \, 1,
& \quad  \text{for all } \alpha_H<\min(\alpha_1,\alpha_2), \\[0.4em]
\Gamma_{X_2\rightarrow X_1} &=&\quad \, c_H' \in [0.5,1],
& \quad  \text{for all } \min(\alpha_1,\alpha_2)\le \alpha_H\le \max(\alpha_1,\alpha_2).
\end{array}
\]

with $c_H, c_H'$  depending on  $\beta_{12},\beta_{H1},\beta_{H2},\alpha_1,\alpha_2,\alpha_H$.\\
For all $\alpha_H$ in particular also in the case $\alpha_H\leq\max(\alpha_1,\alpha_2)$ it holds that\\ 

    $\Gamma_{X_1\rightarrow X_2|H}=1$ and $\Gamma_{X_2\rightarrow X_1|H}=\left\{\begin{tabular}{ll} 0.5 & for $\alpha_1>\alpha_2$, \\ 
    $c_H' \in (0.5, 1)$ & for $\alpha_1=\alpha_2$ with $c_H'$ depending on $\beta_{1\rightarrow 2},\alpha_1,\alpha_2$, \\  
    1 & for $\alpha_2>\alpha_1$. \end{tabular} \right.$
\end{enumerate}
\end{proposition}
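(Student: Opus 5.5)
The proof reduces every case to the single-big-jump principle for sums of independent regularly varying variables (Lemma~\ref{lem:twotails}). After recursive substitution, each node is a linear combination of independent innovations:
\[
X_1=\beta_{H1}\epsilon_H+\epsilon_1,\qquad X_2=\beta_{H2}\epsilon_H+\beta_{12}X_1+\epsilon_2 .
\]
Here $\beta_{12}=0$ in the non-causal case. Under Assumption~\ref{assumption1}, Lemma~\ref{lem:twotails} gives
\[
\p(X_j>x)\sim\sum_{k\in An(j)}\p(\beta_{k\to j}\epsilon_k>x),
\]
where only the innovations with the smallest tail index among the ancestors of $j$ contribute to leading order. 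The same lemma also gives the conditional description we need: given $X_i>x$ with $x\to\infty$, with probability tending to one exactly one innovation $\epsilon_k$ is large. That innovation is chosen with asymptotic probability
\[
p_k^{(i)}=\lim_{x\to\infty}\frac{\p(\beta_{k\to i}\epsilon_k>x)}{\p(X_i>x)},
\]
and all other innovations stay $O_p(1)$. The key step is therefore to rewrite the CTC as
\[
\Gamma_{X_i\to X_j}=\sum_k p_k^{(i)}\,\lim \e\{F_j(X_j)\mid \epsilon_k \text{ large}\}.
\]
If $k\in A_{ij}$, a large $\epsilon_k$ drives $X_j\to\infty$, so $F_j(X_j)\to1$. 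If $k\notin An(j)$, then $X_j$ is independent of $\epsilon_k$, and $F_j(X_j)$ is uniform with conditional mean $1/2$. Hence
\[
\Gamma_{X_i\to X_j}=\tfrac12+\tfrac12\sum_{k\in A_{ij}}p_k^{(i)}.
\]
This is the structure of the argument in \citet{gnecco2021causal}, with the weights $p_k^{(i)}$ now depending on the tail indices.

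With this formula, each case of Proposition~\ref{proposition_confounder} is bookkeeping on the weights $p_k^{(i)}$.

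\emph{No causal link.} Here $A_{12}=\{H\}$. If $\alpha_H>\max(\alpha_1,\alpha_2)$, then $p_H^{(i)}=0$ and both coefficients equal $0.5$. If $\alpha_H<\min(\alpha_1,\alpha_2)$, then $p_H^{(i)}=1$ and both equal $1$. In the intermediate range, $p_H^{(i)}\in[0,1]$ is given by ratios of the constants $c^+$, with positive mass only under tail ties, so $c_{Hi}\in[0.5,1]$.

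\emph{$X_1\to X_2$.} Now $A_{12}=\{1,H\}=An(1)$, so $\Gamma_{X_1\to X_2}=1$ always. For $\Gamma_{X_2\to X_1}$, the relevant quantity is the share of the tail of $X_2$ coming from $\epsilon_2$:
\begin{itemize}
\item If $\alpha_H>\max(\alpha_1,\alpha_2)$, then $p_H^{(2)}=0$ and we recover Proposition~\ref{proposition}: weight $1$, $0$, or a value in $(0,1)$ on $\epsilon_2$ according as $\alpha_2<\alpha_1$, $\alpha_2>\alpha_1$, or $\alpha_2=\alpha_1$.
\item If $\alpha_H<\min(\alpha_1,\alpha_2)$, the $H$-share is $1$, giving $\Gamma_{X_2\to X_1}=1$.
\item In the intermediate range the weights give some $c_H'\in[0.5,1]$.
\end{itemize}
Proposition~\ref{proposition} itself is the special case with $\epsilon_H$ absent, which I would prove first.

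For the adjusted CTC, I will use the assumption that conditioning on $H$ removes all confounding. Conditionally on $H=h$, the conditional distribution of $X_1$ is a shift of $\epsilon_1$, and the conditional distribution of $X_2$ is a shift of $\beta_{12}\epsilon_1+\epsilon_2$. The conditional ranks $F_j(X_j\mid H)$ are then exactly the ranks in the unconfounded bivariate model with shifted innovations. Tail indices and the constants in Assumption~\ref{assumption1} are invariant under shifts, so the bivariate argument applies conditionally on each $h$. Integrating over $h$ by dominated convergence (the integrands lie in $[0,1]$) gives the stated values.

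The main obstacle is making the single-big-jump conditioning rigorous. Knowing the tails via Lemma~\ref{lem:twotails} is not enough. I must show that $\e\{F_j(X_j)\mathbf 1(\epsilon_k \text{ large})\mid X_i>x\}$ converges, which means controlling the events where two innovations are simultaneously large (using Assumption~\ref{assumption1}(ii)) and the negative parts (using the $c_i^-$). I would handle this by splitting $\{X_i>x\}$ into $\{\beta_{k\to i}\epsilon_k>(1-\delta)x\}$ and a remainder of relative probability $o(1)$ as $\delta\to0$, following the proof in \citet{gnecco2021causal}. There is a second subtlety. The claim that $c_H$ and $c$ lie strictly inside $(0.5,1)$ when tail indices tie needs the positivity of $c_s^+$ from Assumption~\ref{assumption1}. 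The uniform integrability in the conditional-on-$H$ step also needs care when $H$ itself has the heaviest tail.
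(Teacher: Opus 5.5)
Your proposal is correct and takes the same route as the paper, which likewise extends the representation \eqref{eq:ctc_rearranged} of \citet{gnecco2021causal} to unequal tails via Lemma~\ref{lem:twotails}. Your weights $p_k^{(i)}$ are the limits of its normalized terms $\beta_{k\to i}^{\alpha_k}\p(\epsilon_k>x)/\sum_{h\in An(i,G)}\beta_{h\to i}^{\alpha_h}\p(\epsilon_h>x)$, and the paper then reads off the dominating innovation in each range of $\alpha_H$ just as you do.

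For the adjusted CTC you are more explicit than the paper, which only asserts that conditioning removes the confounder. The step is also easier than you fear: $F_1(X_1\mid H)=F_{\epsilon_1}(\epsilon_1)$ and $F_2(X_2\mid H)=G(\beta_{12}\epsilon_1+\epsilon_2)$, with $G$ the c.d.f.\ of $\beta_{12}\epsilon_1+\epsilon_2$. The pair of conditional ranks is therefore independent of $H$, and no uniform-integrability issue arises.

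One small slip: in the intermediate range $p_H^{(i)}$ can be positive without a tie, e.g.\ $p_H^{(2)}=1$ when $\alpha_1<\alpha_H<\alpha_2$. This still gives $c_{Hi}\in[0.5,1]$.
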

From Proposition \ref{proposition_confounder} it follows that for a lighter-tailed confounder $H$ relative to $X_1, X_2$ , the three cases for identification below Proposition \ref{proposition}  apply as before. Thus in particular, we can still get identification of $X_1\rightarrow X_2$ if $\alpha_H>\alpha_1\geq \alpha_2$. As confounding effects might be asymmetric, i.e. $\beta_{H  1}\neq \beta_{H  2}$ in Figure \ref{fig:cases},  they influence non-identification via CTC for all other cases of $\alpha_H$. For the adjusted CTC, however, we get identification for any $\alpha_H$ in particular also for $\alpha_H\leq\max(\alpha_1,\alpha_2)$ as in case 1 above under $\alpha_1>\alpha_2$. 

\begin{figure}[ht]
\centering

\begin{minipage}[t]{0.49\textwidth}
\centering
\caption*{\textbf{(A)} No connection}
\vspace{0.6em}
\begin{tikzpicture}[>=stealth, scale=0.75, transform shape]
    \node[draw, circle] (x1) {$X_1$};
    \node[draw, circle, right=2cm of x1] (x2) {$X_2$};
\end{tikzpicture}
\end{minipage}
\hfill
\begin{minipage}[t]{0.49\textwidth}
\centering
\caption*{\textbf{(B)} Causal connection}
\vspace{0.6em}
\begin{tikzpicture}[>=stealth, scale=0.75, transform shape]
    \node[draw, circle] (x1) {$X_1$};
    \node[draw, circle, right=2cm of x1] (x2) {$X_2$};
    \draw[->] (x1) -- node[midway, above] {$\beta_{12}$} (x2);
\end{tikzpicture}
\end{minipage}

\vspace{0.2em}

\begin{minipage}[t]{0.49\textwidth}
\centering
\scriptsize
\renewcommand{\arraystretch}{1.2}
\begin{tabular}{c|cc}
Index Case & $\Gamma_{X_1\to X_2}$ & $\Gamma_{X_2\to X_1}$ \\
\hline
$\alpha_1 > \alpha_2$ & 0.5 & \fbox{0.5} \\
$\alpha_1 = \alpha_2$ & 0.5 & 0.5 \\
$\alpha_1 < \alpha_2$ & 0.5 & 0.5
\end{tabular}
\end{minipage}
\hfill
\begin{minipage}[t]{0.49\textwidth}
\centering
\scriptsize
\renewcommand{\arraystretch}{1.2}
\begin{tabular}{c|cc}
Index Case & $\Gamma_{X_1\to X_2}$ & $\Gamma_{X_2\to X_1}$ \\
\hline
$\alpha_1 > \alpha_2$ & \cellcolor{gray!20}1 & \cellcolor{gray!20}\fbox{0.5} \\
$\alpha_1 = \alpha_2$ & \cellcolor{gray!20}1 & \cellcolor{gray!20}(0.5,1) \\
$\alpha_1 < \alpha_2$ & 1 & 1
\end{tabular}
\end{minipage}

\vspace{1.8em}

\begin{minipage}[t]{0.49\textwidth}
\centering
\caption*{\textbf{(C)} No connection with light-tailed confounder $\tilde{H}$}
\begin{tikzpicture}[>=stealth, scale=0.75, transform shape]
    \node[draw, circle] (h) {$\tilde{H}$};
    \node[draw, circle, below left=1cm and 1.2cm of h] (x1) {$X_1$};
    \node[draw, circle, below right=1cm and 1.2cm of h] (x2) {$X_2$};

    \draw[->] (h) -- node[midway, above left] {$\beta_{\tilde{H}1}$} (x1);
    \draw[->] (h) -- node[midway, above right] {$\beta_{\tilde{H}2}$} (x2);
\end{tikzpicture}
\end{minipage}
\hfill
\begin{minipage}[t]{0.49\textwidth}
\centering
\caption*{\textbf{(D)} Causal connection with light-tailed confounder $\tilde{H}$}
\begin{tikzpicture}[>=stealth, scale=0.75, transform shape]
    \node[draw, circle] (h) {$\tilde{H}$};
    \node[draw, circle, below left=1cm and 1.2cm of h] (x1) {$X_1$};
    \node[draw, circle, below right=1cm and 1.2cm of h] (x2) {$X_2$};

    \draw[->] (h) -- node[midway, above left] {$\beta_{\tilde{H}1}$} (x1);
    \draw[->] (h) -- node[midway, above right] {$\beta_{\tilde{H}2}$} (x2);
    \draw[->] (x1) -- node[midway, above] {$\beta_{12}$} (x2);
\end{tikzpicture}
\end{minipage}

\vspace{0.2em}

\begin{minipage}[t]{0.49\textwidth}
\centering
\scriptsize
\renewcommand{\arraystretch}{1.2}
\begin{tabular}{c|cc}
Index Case & $\Gamma_{X_1\to X_2}$ & $\Gamma_{X_2\to X_1}$ \\
\hline
$\alpha_1 > \alpha_2$ & 0.5 & \fbox{0.5}  \\
$\alpha_1 = \alpha_2$ & 0.5 & 0.5 \\
$\alpha_1 < \alpha_2$ & 0.5 & 0.5
\end{tabular}
\end{minipage}
\hfill
\begin{minipage}[t]{0.49\textwidth}
\centering
\scriptsize
\renewcommand{\arraystretch}{1.2}
\begin{tabular}{c|cc}
Index Case & $\Gamma_{X_1\to X_2}$ & $\Gamma_{X_2\to X_1}$ \\
\hline
$\alpha_1 > \alpha_2$ &\cellcolor{gray!20} 1 & \cellcolor{gray!20}\fbox{0.5}  \\
$\alpha_1 = \alpha_2$ & \cellcolor{gray!20}1 & \cellcolor{gray!20}(0.5,1) \\
$\alpha_1 < \alpha_2$ & 1 & 1
\end{tabular}
\end{minipage}

\vspace{1.8em}

\begin{minipage}[t]{0.49\textwidth}
\centering
\caption*{\textbf{(E)} No connection with heavy-tailed confounder $H$}
\begin{tikzpicture}[>=stealth, scale=0.75, transform shape]
    \node[draw, circle] (h) {$H$};
    \node[draw, circle, below left=1cm and 1.2cm of h] (x1) {$X_1$};
    \node[draw, circle, below right=1cm and 1.2cm of h] (x2) {$X_2$};

    \draw[->] (h) -- node[midway, above left] {$\beta_{H1}$} (x1);
    \draw[->] (h) -- node[midway, above right] {$\beta_{H2}$} (x2);
\end{tikzpicture}
\end{minipage}
\hfill
\begin{minipage}[t]{0.49\textwidth}
\centering
\caption*{\textbf{(F)} Causal connection with heavy-tailed confounder $H$}
\begin{tikzpicture}[>=stealth, scale=0.75, transform shape]
    \node[draw, circle] (h) {$H$};
    \node[draw, circle, below left=1cm and 1.2cm of h] (x1) {$X_1$};
    \node[draw, circle, below right=1cm and 1.2cm of h] (x2) {$X_2$};

    \draw[->] (h) -- node[midway, above left] {$\beta_{H1}$} (x1);
    \draw[->] (h) -- node[midway, above right] {$\beta_{H2}$} (x2);
    \draw[->] (x1) -- node[midway, above] {$\beta_{12}$} (x2);
\end{tikzpicture}
\end{minipage}

\vspace{0.2em}

\begin{minipage}[t]{0.49\textwidth}
\centering
\scriptsize
\renewcommand{\arraystretch}{1.2}
\begin{tabular}{c|cc|cc}
Index Case & $\Gamma_{X_1\to X_2}$ & $\Gamma_{X_2\to X_1}$ & $\Gamma_{X_1\to X_2\mid H}$ & $\Gamma_{X_2\to X_1\mid H}$ \\
\hline
$\alpha_1 > \alpha_2$ & 1 & \fbox{1} & 0.5 & 0.5 \\
$\alpha_1 = \alpha_2$ & 1 & 1 & 0.5 & 0.5 \\
$\alpha_1 < \alpha_2$ & 1 & 1 & 0.5 & 0.5
\end{tabular}
\end{minipage}
\hfill
\begin{minipage}[t]{0.49\textwidth}
\centering
\scriptsize
\renewcommand{\arraystretch}{1.2}
\begin{tabular}{c|cc|cc}
Index Case & $\Gamma_{X_1\to X_2}$ & $\Gamma_{X_2\to X_1}$ & $\Gamma_{X_1\to X_2\mid H}$ & $\Gamma_{X_2\to X_1\mid H}$ \\
\hline
$\alpha_1 > \alpha_2$ & 1 & \fbox{1} & \cellcolor{gray!20}1 & \cellcolor{gray!20}0.5 \\
$\alpha_1 = \alpha_2$ & 1 & 1 & \cellcolor{gray!20}1 & \cellcolor{gray!20}(0.5,1) \\
$\alpha_1 < \alpha_2$ & 1 & 1 & 1 & 1
\end{tabular}
\end{minipage}

\caption{Possible causal configurations between $X_1$ and $X_2$ with and without a confounder $H$. Panels (A) and (B) show the cases without confounding, panels (C) and (D) the cases with a light-tailed confounder $(\alpha_H>\max(\alpha_1,\alpha_2))$, and panels (E) and (F) the cases with a heavy-tailed confounder $(\alpha_H<\min(\alpha_1,\alpha_2))$. The corresponding implications from Proposition~\ref{proposition} and~\ref{proposition_confounder} for the CTC are shown below each graph. Grey shading highlights cases in which the causal direction can be distinguished, while boxed entries indicate settings in which the confounder can be detected using the Confounder-Test in Section~\ref{sec:device}. In cases where a heavy-tailed confounder $H$ is present, we can apply the adjusted CTC \eqref{eq:ctc_confounder} to recover the values from the cases without confounding.}
\label{fig:cases}
\end{figure}

Figure \ref{fig:cases} illustrates the implications of Propositions~\ref{proposition} and~\ref{proposition_confounder} for the considered scenarios, including both a heavy-tailed confounder ($\alpha_H<\min(\alpha_1,\alpha_2)$) and a light-tailed confounder ($\alpha_H>\max(\alpha_1,\alpha_2)$). Additional cases are provided in Appendix~\ref{app:examples}. It also shows that in the case $\alpha_1> \alpha_2$ a positive difference $|\Delta|:=|\Gamma_{X_1\rightarrow X_2}-\Gamma_{X_2\rightarrow X_1}| >0$ could be either due to a true causal relation or due to the existence of a confounder. Thus identification can be blurred by the existence of a heavy-tailed confounder. When accounting for a heavy-tailed confounder through $\Gamma_{X_2\rightarrow X_1|H}$, the same effect might occur if the considered graph is incomplete in the sense that there are additional heavy-tailed confounding effects that $H$ does not account for.

\section{Estimation and Inference} \label{sec:device}
From the identification results of the previous section we can obtain respective estimators and derive inference results that facilitate the empirical detection of causal relations between extremes from data. In particular, we use sample analogue estimators for the CTC and propose a testing framework for identifying causal directions from observational data. In practice the relation of tail indices is unknown and must be estimated from data. For details on this we refer to Section~\ref{subsec:tail_index_sim}.

In the following we assume that $\left((X_{i,1},X_{i,2})\right)_{i=1}^n$ are independent observations of the random vector $(X_1,X_2)$. A non-parametric estimator for $\Gamma_{X_1\rightarrow X_2}$ is given by
\begin{equation}
\label{eq:estimator}
    \hat{\Gamma}_{X_1\rightarrow X_2}=\frac{1}{k}\sum_{i=1}^n \hat{F}_2(X_{i,2})\textbf{1}\left(X_{i,1}>X_{(n-k),1}\right),
\end{equation}
where $X_{(\ell),1}$ denotes the $\ell$-th order statistic of $X_1$, $k\in \{1,\ldots,n-1\}$ marks the rank threshold and $\hat{F}_2$ denotes the empirical cumulative distribution function of $X_2$. 

\begin{assumption}\label{ass2}
$X_1$ and $X_2$ have continuous marginal distribution functions $F_1$ and $F_2$  with existing density functions $f_1$ and $f_2$. Moreover the von Mises Condition 
\begin{equation*}
    \underset{x \to \infty}{\lim}\frac{x f_j(x)}{1-F_j(x)}=\frac{1}{\gamma_j},\quad\textrm{ for } \gamma_j>0
\end{equation*}
holds for $j\in \{1,2\}$. Analogously, all innovations $\epsilon_i$ appearing in the structural equations considered below, with distribution functions $F_{\epsilon_i}$ and densities $f_{\epsilon_i}$, satisfy the corresponding von Mises condition.
\end{assumption}

Assumption~\ref{ass2} is standard in the context of heavy-tailed models (see e.g. \cite{haan2006extreme}, Theorem 1.1.11). In view of \eqref{eq:reg_var}, it is typically satisfied for most variables that fulfill Assumption~\ref{assumption1}, with $\gamma_j=1/\alpha_{X_j}$ where $\alpha_{X_j}$ denotes the tail index of $X_j$. Assumption~\ref{ass2} is key for consistency of the estimator $\hat{\Gamma}_{X_1\rightarrow X_2}$ of $\Gamma_{X_1\rightarrow X_2}$, i.e. $|\hat{\Gamma}_{X_1\rightarrow X_2}- \Gamma_{X_1\rightarrow X_2}|= o_P(1)$ for $k,n\rightarrow\infty$ and $k/n\rightarrow 0$ \citep{gnecco2021causal}. 

In the presence of a light-tailed confounder the case of no causal link between $X_1$ and $X_2$ can be tested directly using the following theorem.
\begin{theorem} \label{theorem:independence}
Let Assumptions \ref{assumption1} and \ref{ass2} hold. Under $H_0$ of no directed path between $X_1$ and $X_2$ and no confounder, we obtain, for $k/n \to 0$ as $k,n \to \infty$,
\begin{equation}\sqrt{k}\left(\hat{\Gamma}_{X_1\rightarrow X_2}-0.5\right)\xrightarrow{d} N\left(0,\frac{1}{12}\right) \textrm{ and } \sqrt{k}\left(\hat{\Gamma}_{X_2\rightarrow X_1}-0.5\right)\xrightarrow{d} N\left(0,\frac{1}{12}\right).
\label{eq:test_id}
\end{equation}

If $X_1$ and $X_2$ share a common light-tailed confounder $H$ with $\alpha_H>\max\{\alpha_1,\alpha_2\}$, define
$    \rho_j:=\min\{1,\alpha_H-\alpha_j\}$ for $j\in\{1,2\}$.
Then the above convergence result \eqref{eq:test_id} still holds provided that $k=\lfloor n^\nu \rfloor$ with 
\begin{equation*}
    \nu < \min_{j\in\{1,2\}}
    \frac{2\rho_j}{\alpha_j+2\rho_j}.
\end{equation*}
\end{theorem}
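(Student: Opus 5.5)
Under independence the statistic is an average of ranks over the top-$k$ block. Write $\hat\Gamma_{X_1\to X_2}=\frac1k\sum_{i=1}^n \hat F_2(X_{i,2})\mathbf 1(X_{i,1}>X_{(n-k),1})$.

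\textbf{Independent case.} If $X_1\perp X_2$, the index set $I_k=\{i: X_{i,1}>X_{(n-k),1}\}$ has size $k$ and depends only on the first coordinates. Conditional on $I_k$, the vector of ranks of $X_2$ among all $n$ observations is therefore a uniform random permutation. Hence $\hat\Gamma$ equals $\frac1{kn}$ times the sum of a simple random sample of size $k$ drawn without replacement from $\{1,\dots,n\}$. Its mean is $\frac{n+1}{2n}=0.5+O(1/n)$. Its variance is $\frac{1}{k}\frac{n^2-1}{12n^2}\frac{n-k}{n-1}=\frac{1}{12k}(1+o(1))$, since $k/n\to0$. The bias $\sqrt k/n\to0$ is negligible. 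Hájek's CLT for sampling without replacement (or Lindeberg, after comparing with i.i.d.\ uniforms) gives $\sqrt k(\hat\Gamma-0.5)\to N(0,1/12)$. The same argument applies to $\hat\Gamma_{X_2\to X_1}$. Continuity (Assumption~\ref{ass2}) rules out ties.

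\textbf{Light-tailed confounder.} Now $X_j=\epsilon_j+\beta_{Hj}\epsilon_H$ with $\alpha_H>\alpha_j$. The plan is a coupling with the independent pair $(\epsilon_1,\epsilon_2)$. Let $\tilde\Gamma$ be the same statistic computed from $(\epsilon_{i,1},\epsilon_{i,2})$; it satisfies the CLT above. It then suffices to show $\sqrt k(\hat\Gamma-\tilde\Gamma)=o_P(1)$. The difference has two sources:
\begin{enumerate}[(a)]
\item the top-$k$ set of $X_1$ differs from that of $\epsilon_1$;
\item the ranks $\hat F_2(X_{i,2})$ differ from those of $\epsilon_{i,2}$.
\end{enumerate}

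For (a), the exceedance threshold is $u_n\approx (n/k)^{1/\alpha_1}$. An index can switch membership only if $\epsilon_{i,1}$ lies within $|\beta_{H1}\epsilon_{i,H}|$ of $u_n$, or if $|\epsilon_{i,H}|$ itself is of order $u_n$. Using the von Mises density bound $f_1(x)\approx \alpha_1\bar F_1(x)/x$, the expected number of switches is
\[
  n\Big[\bar F_1(u_n)\,\mathbb E\big(|\epsilon_H|\wedge u_n\big)/u_n+\bar F_H(u_n)\Big]\approx k\,u_n^{-\min(1,\alpha_H-\alpha_1)}=k\,(n/k)^{-\rho_1/\alpha_1}.
\]
For $\alpha_H<1$ the truncated mean grows like $u_n^{1-\alpha_H}$, which gives the $\min$ with $1$ in $\rho_1$. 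Each switch changes $k\hat\Gamma$ by at most $1$. So this contribution to $\sqrt k(\hat\Gamma-\tilde\Gamma)$ is $O_P\big(\sqrt k\,(n/k)^{-\rho_1/\alpha_1}\big)$.

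For (b), a rank change $|\hat F_2(X_{i,2})-\hat F_2^{\epsilon}(\epsilon_{i,2})|$ is bounded by the empirical mass in a window of width $|\beta_{H2}\epsilon_{i,H}|$. Averaged over the $k$ selected indices this is small in the bulk. The worry is only at extreme positions of $X_2$, which by a similar computation yields a term of order $\sqrt k\,(n/k)^{-\rho_2/\alpha_2}$ or smaller. The effect of the confounder on the bulk ranks is $O_P(n^{-1/2})$ on average, uniformly by a DKW-type bound.

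Requiring $\sqrt k\,(n/k)^{-\rho_j/\alpha_j}\to0$ with $k=n^\nu$ gives $\nu/2<(1-\nu)\rho_j/\alpha_j$, i.e.\ $\nu<2\rho_j/(\alpha_j+2\rho_j)$, which is exactly the stated condition. A final subtlety is that $\tilde\Gamma$ uses the marginal ranks of $\epsilon$ rather than of $X$. Ranks are invariant under monotone maps, but adding $\epsilon_H$ is not monotone, so (b) is what handles this.

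The main obstacle is (a)–(b): making the switching count rigorous uniformly over the random threshold $X_{(n-k),1}$. I would first replace that threshold by the deterministic quantile $u_n$, using that $X_{(n-k),1}/u_n\to1$ at rate $k^{-1/2}$. I would then bound the symmetric difference of exceedance sets via a Potter bound on $\bar F_1$ combined with the von Mises condition.
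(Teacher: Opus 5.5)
Your no-confounder argument is correct but differs from the paper's. You condition on the top-$k$ index set of $X_1$, so the $X_2$-ranks form a uniform random permutation and $\hat\Gamma$ is an exact without-replacement sample mean. The paper instead conditions on the order statistic, averages $k$ i.i.d.\ uniforms $F_1(X_{i,1})$, and passes to $\hat F_1$ by DKW.

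The confounder case has a genuine gap. The reduction to $\sqrt k(\hat\Gamma-\tilde\Gamma)=o_P(1)$, with $\tilde\Gamma$ computed from $(\epsilon_{i,1},\epsilon_{i,2})$, is false, and it fails at step (b). For a selected index, $X_{i,1}$ is extreme, but $\epsilon_{i,H}$ remains of order one: given an exceedance of $X_1$, the law of $(\epsilon_2,\epsilon_H)$ converges to the unconditional one. Meanwhile $X_{i,2}$ lies in the bulk, where an order-one shift moves the rank by an order-one amount. A DKW bound controls $\hat F_2-F_2$, not $F_2(\epsilon_{i,2}+\beta_{H2}\epsilon_{i,H})-F_{\epsilon_2}(\epsilon_{i,2})$.

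To see this concretely, suppose the two top-$k$ sets coincided. Conditionally on the top-$k$ set $I_\epsilon$ of $(\epsilon_{i,1})_i$, the terms $D_i:=F_2(X_{i,2})-F_{\epsilon_2}(\epsilon_{i,2})$, $i\in I_\epsilon$, are i.i.d.\ and centred. Their variance is $\tfrac16(1-\rho_S)>0$, where $\rho_S<1$ is the Spearman correlation of $X_2$ and $\epsilon_2$ when $\beta_{H2}\neq0$. Hence $\sqrt k(\hat\Gamma-\tilde\Gamma)$ has a nondegenerate Gaussian limit. This also means $\rho_2$ does not enter $\hat\Gamma_{X_1\to X_2}$ through rank changes. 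Each estimator only needs the $\rho_j$ of its conditioning variable, and the minimum in the theorem simply covers both directions.

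The repair is to change only the selection, not the scored coordinate. Compare $\hat\Gamma_{X_1\to X_2}$ with $\check\Gamma:=\frac1k\sum_i\hat F_2(X_{i,2})\mathbf 1\{\epsilon_{i,1}>\epsilon_{(n-k),1}\}$. Since $(\epsilon_{i,1})_i$ is independent of $(X_{i,2})_i=(\epsilon_{i,2}+\beta_{H2}\epsilon_{i,H})_i$, your permutation argument applies verbatim to $\check\Gamma$. Moreover $k|\hat\Gamma-\check\Gamma|\le|I_X\setminus I_\epsilon|$, where $I_X$ is the top-$k$ set of $X_1$.

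For this count, your plan of replacing $X_{(n-k),1}$ by $u_n$ on both sides and using the triangle inequality only gives an $O_P(1)$ bound after normalisation. The reason is that $|\#\{i:X_{i,1}>u_n\}-k|$ is of exact order $\sqrt k$. Use instead that both sets have exactly $k$ elements. With $T:=X_{(n-k),1}$, one gets $|I_X\setminus I_\epsilon|\le\#\{i:\mathbf 1\{X_{i,1}>T\}\neq\mathbf 1\{\epsilon_{i,1}>T\}\}$. Conditionally on $T$, the observations above and below $T$ are i.i.d.\ from the respective conditional laws. The bounds \eqref{eq:AminusB_rate} and \eqref{eq:BminusA_rate}, with $\epsilon_1$ heavy and $\beta_{H1}\epsilon_H$ light, then give a conditional expected count $O(k\,T^{-\rho_1+\delta})$, which yields exactly $\nu<2\rho_1/(\alpha_1+2\rho_1)$.

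Conditioning on the order statistic is also the paper's device, but the paper never couples to the innovations. It conditions on $\sigma(T_n,J_n)$ so the selected pairs are i.i.d.\ given the exceedance. It then bounds the bias $\mu(T_n)-\tfrac12$ via Corollary~\ref{cor:bounded_conditional_rate} with $g=F_1$ and applies a conditional Lindeberg CLT.
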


Note that the above theorem shows that under no confounding, the null of independence can be tested for any magnitude and relation of the tail indices $\alpha_1$ and $\alpha_2$. If no confounder is present, there is no additional rate restriction required, since under the null $X_1$ and $X_2$ are fully independent and not only asymptotically independent in the tails.

The distributional convergence holds even in the case of a confounder $H$ if its tail is lighter than that of $X_1$ and $X_2$ for a dedicated choice of $k=\lfloor n^\nu\rfloor$ depending on the involved tail indices that drives the overall rate of convergence. Note that $\nu$ is always positive for light-tailed confounders $\alpha_H>\max\{\alpha_1,\alpha_2\}$ taking values in $(0,1)$. Generally, the growth rate of $k$ decreases with the size of $\alpha_1$ and $\alpha_2$, so that larger admissible values of $k$ correspond to heavier tails in $\epsilon_1$ and $\epsilon_2$ if the confounder has much lighter tails than any of the $\epsilon_j$ ($\rho_j=1$), $j\in \{1,2\}$. If instead $\alpha_H-\alpha_j<1$, so that $\rho_j=\alpha_H-\alpha_j$, the admissible growth rate additionally decreases as the tail indices of the confounder and the innovations become closer. Thus, $k$ must grow more slowly when the innovations are lighter-tailed or when the confounder is only slightly lighter-tailed than the innovations. 

The results of Theorem \ref{theorem:independence} can be easily extended if additional ancestors of $X_1$ or $X_2$ are present. Then the corresponding rate is determined by the heaviest-tailed relevant ancestral innovation, that is, by the smallest tail index among the relevant ancestors, rather than only by $\alpha_1$ and $\alpha_2$.

In light of Proposition \ref{proposition_confounder}, our simulations in Section~\ref{sec:simulation} suggest that our tests can even be extended to the case of heavy-tailed confounders by using the adjusted CTC $\Gamma_{X_1\rightarrow X_2|H}$ from \eqref{eq:ctc_confounder} as a basis for the test statistic that applies for all $\alpha_H$ in particular also in the case $\alpha_H\leq\max(\alpha_1,\alpha_2)$. 
For an estimator of the adjusted CTC  a generalized Pareto distribution is fitted to the data of $X_1$ and $X_2$, where the scale parameter $\tau$ accounts for the confounder $H$  \citep{pasche2023causal}. This parametric pre-step using the asymptotic GPD approximation allows us to control the overall statistical properties of the estimator pragmatically
\begin{equation*}
    \widehat{\Gamma}_{X_1\to X_2\mid H}
= \frac{1}{k_l}\sum_{i=1}^{n}
\widehat{F}_2\!\left\{X_{i,2};\,\widehat{\tau}_2(i),\,\widehat{\xi}_2\right\}
\,\mathbf{1}\!\left[
\widehat{F}_1\!\left\{X_{i,1};\,\widehat{\tau}_1(i),\,\widehat{\xi}_1\right\}
> 1-\frac{k}{n}
\right],
\end{equation*}
where $k_l := \left|\left\{\, i \in \{1,\ldots,n\} : \widehat{F}_1\left(X_{i,1};\,\widehat{\tau}_1(i),\,\widehat{\xi}_1\right)> 1-\frac{k}{n}\,\right\}\right|$. The scale parameter is modeled as $\tau_j(i) = \tau_j^0 + \tau_j^{1 \top} H_i,\, i = 1,\ldots,n,\, j = 1,2$ and estimates of $\tau_j^0, \tau_j^1, \xi_j$ are obtained by maximum likelihood. This estimator was proposed under the condition that the confounder shares the same tail indices as $\alpha_1$ and $\alpha_2$. However, our simulations show that it also adequately adjusts for heavier-tailed confounding, specifically when $\alpha_H< \min(\alpha_1,\alpha_2)$, i.e. our simulations suggest that Theorem~\ref{theorem:independence} continues to hold for an adjusted CTC basis.

\bigskip

For determining the causal direction, we test whether the asymmetry in the direct causal relationship within our data is statistically significant with $\Delta_{X_1\to X_2}=\Gamma_{X_1 \to X_2} -\Gamma_{X_2 \to X_1}$. Building on the identified cases for different $\alpha_1,\alpha_2$ in Proposition  \ref{proposition} and the consistency of the empirical estimator in \eqref{eq:estimator} we derive as $k,n\to \infty$ and $k/n \to 0$ under standard conditions that
\begin{equation*}
    \hat{\Delta}_{X_1 \to X_2}=\hat{\Gamma}_{X_1\rightarrow X_2}-\hat{\Gamma}_{X_2\rightarrow X_1}=\Delta_{X_1 \to X_2}+o_P(1).
\end{equation*}
where for $\alpha_1\geq\alpha_2$ and sufficiently large sample size, $\hat{\Delta}_{X_1\to X_2}>0$ marks a causal link $X_1\to X_2$.

For deriving the asymptotic distribution of the corresponding test statistic in Theorem~\ref{theorem:causal}, we first establish the correspondence between the CTC and the survival copula in a lemma. This result is of independent interest. In addition, existing convergence results for upper tail copulas also simplify the proof of the main theorem. 
For $U_1 := F_1(X_1),\, U_2 := F_2(X_2)$ we can define
$\Gamma_{X_1\to X_2}(t) = \e(U_2\mid U_1>1-t)$ for $t\in(0,1)$. This corresponds to the CTC in the limiting case $t\to 0$. We denote by $\bar C$ the survival copula of $(U_1,U_2)$, i.e.
\[
\bar C(u_1,u_2)
=
\p(U_1>1-u_1,\;U_2>1-u_2),
\quad u=(u_1,u_2)\in[0,1]^2.
\]
Further, let $\hat{\bar C}_n$ denote the empirical survival copula,
\[
\hat{\bar C}_n(u_1,u_2)
=
\frac{1}{n}\sum_{m=1}^n
\mathbf{1}\left\{
\hat F_{1}(X_{m,1})>1-u_1,\;
\hat F_{2}(X_{m,2})>1-u_2
\right\},
\quad u\in[0,1]^2.
\]
\begin{lemma}\label{lemma:empcopula}

Under Assumptions \ref{assumption1} and \ref{ass2} the following representations hold:
\begin{enumerate}
    \item For every $t\in(0,1)$,
    \[
    \Gamma_{X_1\to X_2}(t)
    =
    \frac{1}{t}\int_0^1 \bar C(t,r)\,dr.
    \]
    Consequently, if the limit exists,
    \[
    \Gamma_{X_1\to X_2}
    =
    \lim_{t\downarrow 0}\Gamma_{X_1\to X_2}(t)
    =
    \lim_{t\downarrow 0}\frac{1}{t}\int_0^1 \bar C(t,r)\,dr.
    \]

    \item The CTC estimator $\hat{\Gamma}_{X_1\to X_2}$ from \eqref{eq:estimator} admits the representation
    \[
    \hat{\Gamma}_{X_1\to X_2}
    =
    \frac{1}{k}\sum_{s=1}^n
    \hat{\bar C}_n\left(\frac{k}{n},\frac{s}{n}\right).
    \]
\end{enumerate}
\end{lemma}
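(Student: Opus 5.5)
Both representations come from writing the rank or the probability-integral transform of $X_2$ as an integral (or a sum) of indicators and then interchanging the order of integration (or summation). The tail assumptions are not needed. Assumption~\ref{ass2} enters only through the continuity of $F_1$ and $F_2$, which makes $U_1$ and $U_2$ uniform on $(0,1)$ and rules out ties almost surely.

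\emph{Part 1.} By continuity of $F_1$, $U_1\sim\mathrm{Unif}(0,1)$, so $\p(U_1>1-t)=t$ and
\[
\Gamma_{X_1\to X_2}(t)=\frac{1}{t}\,\e\bigl(U_2\mathbf{1}\{U_1>1-t\}\bigr).
\]
Since $U_2\in[0,1]$, the layer-cake identity gives $U_2=\int_0^1\mathbf{1}\{U_2>s\}\,ds$. The substitution $s=1-r$ turns this into
\[
U_2=\int_0^1\mathbf{1}\{U_2>1-r\}\,dr .
\]
Inserting this and applying Tonelli's theorem (the integrand is nonnegative and bounded) gives
\[
\e\bigl(U_2\mathbf{1}\{U_1>1-t\}\bigr)=\int_0^1\p(U_1>1-t,\,U_2>1-r)\,dr=\int_0^1\bar C(t,r)\,dr .
\]
This proves the identity for every $t\in(0,1)$. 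The statement about the limit follows by letting $t\downarrow0$ on both sides, wherever the limit exists; Propositions~\ref{proposition} and~\ref{proposition_confounder} guarantee its existence in the models considered.

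\emph{Part 2.} I would work on the almost sure event of no ties, which has probability one by continuity. Let $R_i\in\{1,\dots,n\}$ be the rank of $X_{i,2}$, so that $\hat F_2(X_{i,2})=R_i/n$. The condition $R_i/n>1-s/n$ is equivalent to $s>n-R_i$, and exactly $R_i$ indices $s\in\{1,\dots,n\}$ satisfy it. Hence
\[
\hat F_2(X_{i,2})=\frac1n\sum_{s=1}^n\mathbf{1}\{\hat F_2(X_{i,2})>1-s/n\}.
\]
Similarly, $X_{i,1}>X_{(n-k),1}$ holds exactly when the rank of $X_{i,1}$ exceeds $n-k$, that is, when $\hat F_1(X_{i,1})>1-k/n$. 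Substituting both expressions into \eqref{eq:estimator} and exchanging the two finite sums over $i$ and $s$ gives
\[
\hat\Gamma_{X_1\to X_2}=\frac1k\sum_{s=1}^n\frac1n\sum_{i=1}^n\mathbf{1}\{\hat F_1(X_{i,1})>1-k/n,\ \hat F_2(X_{i,2})>1-s/n\}=\frac1k\sum_{s=1}^n\hat{\bar C}_n\Bigl(\frac kn,\frac sn\Bigr).
\]

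The only delicate point is the bookkeeping in Part 2. One has to check that the order-statistic threshold $X_{(n-k),1}$ selects exactly the top $k$ ranks under the paper's convention $\hat F(x)=n^{-1}\sum_m\mathbf{1}\{X_m\le x\}$. One also has to check that the strict inequalities produce the count $R_i$ rather than $R_i\pm1$. Both checks rely on the absence of ties, which is why the continuity in Assumption~\ref{ass2} is invoked.
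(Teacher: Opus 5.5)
Your proposal is correct and follows essentially the same route as the paper. Part 1 uses the layer-cake representation of $U_2$, $\p(U_1>1-t)=t$ and the substitution $r=1-s$, and Part 2 writes $\hat F_2(X_{i,2})=R_i/n$ as a count of indicators and interchanges the finite sums; the only cosmetic difference is that the paper first writes $R_{m,2}=\sum_{r}\mathbf{1}\{R_{m,2}\ge r\}$ and reindexes with $s=n-r+1$, whereas you count the indices $s>n-R_i$ directly.
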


For the asymptotic result below, we require that the survival copula $\bar C$ of $(U_1,U_2):=\bigl(F_1(X_1),F_2(X_2)\bigr)$
satisfies the regularity conditions required for weak convergence of the
empirical copula process. 
\begin{assumption}\label{ass3} For each $j\in\{1,2\}$, the $j$th first-order partial derivative $\dot{\bar C}_j$ of ${\bar C}$ exists and is continuous on $V_{2,j}:=\{u\in[0,1]^2 : 0 < u_j < 1\}$.
\end{assumption}
See, e.g. \citep{segers2012asymptotics}, who shows that Assumption \ref{ass3} together with Assumption \ref{ass2} implies weak convergence of the empirical survival copula process. Note that while Assumption \ref{ass3} is standard and not restrictive in the interior of the support $t\in (0,1)$, the continuity of  $\dot{\bar C}_j$ might, however, be problematic in the limit case of $t\to 0$ under linear relationships between $X_1$ and $X_2$. We refer to \citet{bucher2013multiplier} for details on this point. For this paper, we therefore restrict attention to $t>0$ in the theorem below. This is sufficient in practice for finite samples.

\begin{theorem}
\label{theorem:causal}

We set $k=\lfloor tn\rfloor \in \mathbb{N}$ for $t\in (0,1)$. Then under Assumptions \ref{assumption1}-\ref{ass3} there exist $\sigma_{\Gamma_1}^2,\sigma_{\Gamma_2}^2, \sigma_\Delta^2\ge 0$ such that for $n\to\infty$
\[
\sqrt{n}\Bigl(
\hat{\Gamma}_{X_1\to X_2}-\Gamma_{X_1\to X_2}(t)
\Bigr)
\xrightarrow{d}
\mathcal N(0,\sigma_{\Gamma_1}^2),
\quad \mbox{and}\quad 
\sqrt{n}\Bigl(
\hat{\Gamma}_{X_2\to X_1}-\Gamma_{X_2\to X_1}(t)
\Bigr)
\xrightarrow{d}
\mathcal N(0,\sigma_{\Gamma_2}^2),
\]
This implies that
\[
\sqrt{n}\Bigl(
\hat{\Delta}_{X_1\to X_2}-\Delta_{X_1\to X_2}(t)
\Bigr)
\xrightarrow{d}
\mathcal N(0,\sigma_\Delta^2).
\]

\end{theorem}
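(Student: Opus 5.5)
The proof rests on Lemma~\ref{lemma:empcopula}, which turns both the estimator and its target into linear functionals of the (empirical) survival copula. After that, the result follows from weak convergence of the empirical copula process plus the continuous mapping theorem. With $k=\lfloor tn\rfloor$, part 2 of the lemma gives
\[
\hat\Gamma_{X_1\to X_2}=\frac{1}{k}\sum_{s=1}^n \hat{\bar C}_n\Bigl(\frac{k}{n},\frac{s}{n}\Bigr).
\]
This is a Riemann sum, which I write as $\frac{n}{k}\int_0^1 \hat{\bar C}_n(k/n,r)\,dr$ up to an error of order $1/n$ uniformly. The error is controlled because $\hat{\bar C}_n(u,\cdot)$ is a step function in its second argument, constant on the intervals $((s-1)/n,s/n]$. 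Part 1 of the lemma gives $\Gamma_{X_1\to X_2}(t)=\frac1t\int_0^1\bar C(t,r)\,dr$. Replacing $k/n$ by $t$ costs $O(1/n)$, since $|k/n-t|\le 1/n$ and $\bar C$ is Lipschitz with constant $1$ in each argument. Hence
\[
\sqrt n\bigl(\hat\Gamma_{X_1\to X_2}-\Gamma_{X_1\to X_2}(t)\bigr)=\frac1t\int_0^1 \mathbb{G}_n(t,r)\,dr+o_P(1),
\qquad \mathbb{G}_n:=\sqrt n(\hat{\bar C}_n-\bar C).
\]

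Next I invoke the weak convergence result of \citet{segers2012asymptotics}. Under Assumption~\ref{ass3}, with continuous margins by Assumption~\ref{ass2}, $\mathbb{G}_n\rightsquigarrow\mathbb{G}$ in $\ell^\infty([0,1]^2)$. The limit is the centered Gaussian field
\[
\mathbb{G}(u)=\mathbb{B}(u)-\dot{\bar C}_1(u)\mathbb{B}(u_1,1)-\dot{\bar C}_2(u)\mathbb{B}(1,u_2),
\]
where $\mathbb B$ is a $\bar C$-Brownian bridge. This is applied to the survival copula, which is simply the copula of $(1-U_1,1-U_2)$, so the ordinary theory transfers. The map $g\mapsto \frac1t\int_0^1 g(t,r)\,dr$ is linear and bounded on $\ell^\infty$ (at least on continuous limits; measurability is handled by restricting to the closed separable support of $\mathbb G$). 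The continuous mapping theorem then gives convergence to the normal law
\[
\mathcal N(0,\sigma_{\Gamma_1}^2),\qquad \sigma_{\Gamma_1}^2=t^{-2}\int_0^1\!\!\int_0^1 \mathrm{Cov}\bigl(\mathbb G(t,r),\mathbb G(t,r')\bigr)\,dr\,dr'.
\]
The second claim is the same argument with the roles swapped, using $\bar C(r,t)$, i.e.\ the functional $\frac1t\int_0^1 \mathbb G(r,t)\,dr$.

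For $\hat\Delta$, I apply the continuous mapping theorem to the joint map
\[
g\mapsto \frac1t\int_0^1\bigl(g(t,r)-g(r,t)\bigr)\,dr.
\]
Both coordinates are driven by the same process $\mathbb G_n$, so joint convergence is automatic. The limit is a Gaussian variable with $\sigma_\Delta^2=\sigma_{\Gamma_1}^2+\sigma_{\Gamma_2}^2-2\,\mathrm{Cov}$. Nonnegativity of the variances is trivial, which is all that is claimed; degenerate zero variances are allowed.

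The main obstacle is verifying Assumption~\ref{ass3} along the lines used. The integral runs over $r\in[0,1]$, including the boundary $r\in\{0,1\}$, where $\dot{\bar C}_2$ need not be continuous. The extended Segers result covers this: convergence holds in $\ell^\infty([0,1]^2)$ with $\dot{\bar C}_j$ defined as zero or one-sided where needed. The first argument is fixed at $t\in(0,1)$, so $\dot{\bar C}_1$ is evaluated in the interior. A secondary technical point is the Riemann-sum replacement, which I handle by the exact identity $\frac1k\sum_s \hat{\bar C}_n(k/n,s/n)=\frac nk\int_0^1\hat{\bar C}_n(k/n,\lceil nr\rceil/n)\,dr$. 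Monotonicity of $\hat{\bar C}_n$ then bounds the discrepancy by $\frac{n}{k}\cdot\frac1n=O(1/n)$, which is $o(n^{-1/2})$.
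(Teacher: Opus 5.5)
Your argument is correct and is essentially the paper's proof: the representation from Lemma~\ref{lemma:empcopula}, weak convergence of the empirical survival copula process via \citet{segers2012asymptotics}, a linear continuous-mapping step, $O(1/n)$ control of the discretization and $k/n$-versus-$t$ errors, and joint convergence for $\hat\Delta_{X_1\to X_2}$ through the common process $\mathbb G_n$; the only variation is that you apply the ordinary continuous mapping theorem to the exact step-function integral, whereas the paper keeps the Riemann sum and uses the extended continuous mapping theorem. One imprecision: replacing $k/n$ by $t$ inside $\hat{\bar C}_n$ does not follow from Lipschitz continuity of $\bar C$, but it does hold because the two empirical thresholds differ by at most one rank, so $|\hat{\bar C}_n(k/n,v)-\hat{\bar C}_n(t,v)|\le 1/n$ uniformly in $v$, exactly as in the paper's bound on the term $D_{n,2}$.
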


Note that combining Proposition~\ref{proposition_confounder} and Theorem~\ref{theorem:causal} implies that for $\alpha_1>\alpha_2$ and $\alpha_H> max(\alpha_1,\alpha_2)$ the causal direction $X_1\to X_2$ can be empirically tested by assessing $\hat\Delta$ as a test statistic where the asymptotic distribution under the null of no causal link is provided by Theorem~\ref{theorem:causal}. Thus, we formally check 
\begin{equation*}
    H_{0}: \beta_{1  2}=0 \textrm{ against } H_{1}:\beta_{1 2}> 0.
\end{equation*}
Empirically, this causal identification scheme also works for specific small causal effects $\beta_{12}$ even in the case $\alpha_1\leq\alpha_2$ (see simulations in Section~\ref{sec:simulation}) complementing the general theoretical results for the case $\alpha_1<\alpha_2$ of Proposition~\ref{proposition_confounder}.

Since the true variances in Theorem~\ref{theorem:causal} are unknown in practice, we adopt a bootstrap permutation test, following the approach of \citet{pasche2023causal}.

We randomly permute the indices $j=1,2$ for each pair $(\hat{F}_1(X_{i,1}),\hat{F}_2(X_{i,2}))$, $i=1,\ldots,n$, apply bootstrap resampling, calculate each time $\hat{\Delta}_{X_1 \to X_2}$ and compute a bootstrapped p-value to determine the test outcome. With the result for Theorem~\ref{theorem:causal} we refine the test procedure of \cite{pasche2023causal} for settings where the direction or sign of the effect is unknown and use instead a two-sided alternative  $H_1:\beta_{1 2}\neq 0$. To test the presence of a causal effect in the tails for a significance level of $\alpha$ we denote the empirical $(1-\alpha/2)$-quantile of the bootstrap samples by $\Delta_{\alpha}$ and reject $H_0$ if $|\hat{\Delta}_{X_1 \to X_2}|>\Delta_{\alpha}$. If the null hypothesis cannot be rejected, the observed asymmetry in our data is insufficiently pronounced, or the data may be influenced by an unobserved heavy-tailed confounder. 

\bigskip

As Propositions~\ref{proposition} and~\ref{proposition_confounder} and Figure \ref{fig:cases} show, identification of causality can be blurred by a heavy-tailed confounder. We therefore propose the following test to check if such a confounder is present. We want to assess
\begin{equation*}
    H_0: \Gamma_{X_2\to X_1}=0.5 \textrm{ against } H_1:\Gamma_{X_2\to X_1}> 0.5.
\end{equation*}
with the test statistic
\begin{equation}\label{eq:test_statistic}
    T_{C}:=\sqrt{k} \left( \hat{\Gamma}_{X_2\to X_1}-0.5\right).
\end{equation}
The theorem below provides the asymptotic distribution of the test statistic $T_C$.
\begin{theorem}\label{theorem}
    Let Assumptions \ref{assumption1} and \ref{ass2} hold. Moreover assume that $X_1\to X_2$ with $\alpha_2<\alpha_1$. Define $\rho:=\min\{1,\alpha_1-\alpha_2\}$ and let $k=\lfloor n^\nu\rfloor$ with $0<\nu<\frac{2\rho}{\alpha_2+2\rho}$.
    Then the estimator $\hat{\Gamma}_{X_2\to X_1}$  satisfies 
     \begin{equation*}
     \sqrt{k}\left(\hat{\Gamma}_{X_2\to X_1}-0.5\right) \xrightarrow{d} N\left(0 ,\frac{1}{12}\right),  
     \end{equation*}
under the $H_0$ of no confounder.
\end{theorem}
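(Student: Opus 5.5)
The plan is to reduce the statement to the light-tailed-confounder part of Theorem~\ref{theorem:independence}, with the roles of the variables arranged so that $X_1$ acts as the "confounding" ancestral component of $X_2$. Under $X_1\to X_2$ with no confounder we have $X_2=\beta_{12}\epsilon_1+\epsilon_2$ and $X_1=\epsilon_1$. Because $\alpha_1>\alpha_2$, Lemma~\ref{lem:twotails} shows that the tail of $X_2$ is driven by $\epsilon_2$ alone. A large value of $X_2$ is therefore asymptotically caused by $\epsilon_2$, and $X_1=\epsilon_1$ is then an independent, lighter-tailed quantity. This is structurally the same situation as a light-tailed confounder with $\alpha_H=\alpha_1>\alpha_2$. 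The rate $\rho=\min\{1,\alpha_1-\alpha_2\}$ and the constraint $\nu<2\rho/(\alpha_2+2\rho)$ are exactly those of Theorem~\ref{theorem:independence} with $(\alpha_H,\alpha_j)$ replaced by $(\alpha_1,\alpha_2)$.

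First I would write
\[
\sqrt{k}\bigl(\hat\Gamma_{X_2\to X_1}-0.5\bigr)=\sqrt{k}\bigl(\hat\Gamma_{X_2\to X_1}-\Gamma_{X_2\to X_1}(k/n)\bigr)+\sqrt{k}\bigl(\Gamma_{X_2\to X_1}(k/n)-0.5\bigr).
\]
The first summand is the stochastic part. I would further split it into an oracle term, in which $\hat F_1$ is replaced by $F_1$ and the order statistic $X_{(n-k),2}$ by the true quantile $F_2^{-1}(1-k/n)$, plus rank-estimation remainders. The oracle term is an average of roughly $k$ i.i.d. copies of $U_1=F_1(X_1)$ drawn conditionally on $U_2>1-k/n$. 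By the same approximate-independence argument, this conditional law converges to uniform on $[0,1]$, which has variance $1/12$. A Lindeberg CLT for triangular arrays of bounded summands then gives $N(0,1/12)$.

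The remainders would be handled as in Theorem~\ref{theorem:independence}, or through the copula representation of Lemma~\ref{lemma:empcopula}. The error from using $\hat F_1$ is $O_P(n^{-1/2})$ uniformly, so its contribution is $\sqrt{k}\,O_P(n^{-1/2})=o_P(1)$ since $k/n\to0$. The error from the random threshold affects only $O_P(\sqrt{k})$ boundary indices, each carrying weight $1/k$ times a bounded difference, and I would show it is $o_P(1)$ using the asymptotic independence of $U_1$ and the event $\{U_2>1-k/n\}$.

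The main obstacle is the bias term $\sqrt{k}\,|\Gamma_{X_2\to X_1}(k/n)-0.5|\to0$. I would bound it by $\sqrt{k}$ times the probability, conditional on $X_2>u_n$, that $\epsilon_1$ is non-negligible, where $u_n=F_2^{-1}(1-k/n)\asymp (n/k)^{1/\alpha_2}$. To do this I would split the event $\{X_2>u_n\}$ according to whether $\beta_{12}|\epsilon_1|>\delta u_n$ or not. On the first part, $\p(\beta_{12}\epsilon_1>\delta u_n)/\p(X_2>u_n)\asymp u_n^{-(\alpha_1-\alpha_2)}$. On the complement, $U_1$ is independent of $\epsilon_2$, but the threshold shift $\epsilon_2>u_n-\beta_{12}\epsilon_1$ perturbs the conditional law. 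Via the von Mises condition in Assumption~\ref{ass2}, this perturbation is of relative order $u_n^{-1}\e|\epsilon_1|$ when that moment exists, or $u_n^{-(\alpha_1-\alpha_2)}$ otherwise; together this gives an order $u_n^{-\rho}$, which is where the cap $\rho\le1$ comes from. The bias is thus $O\bigl(\sqrt{k}\,(k/n)^{\rho/\alpha_2}\bigr)$, up to slowly varying factors. With $k=n^\nu$ this vanishes iff $\nu/2<(1-\nu)\rho/\alpha_2$, i.e. $\nu<2\rho/(\alpha_2+2\rho)$, which is the stated condition. The delicate part is making the von Mises-based density-ratio expansion uniform over the conditioning on $\epsilon_1$ and controlling the slowly varying functions; for that I would use Potter bounds.
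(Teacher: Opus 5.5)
Your plan coincides with the paper's proof in its main components. You reduce to the light-tailed-confounder case of Theorem~\ref{theorem:independence}, with $\epsilon_1$ playing the role of $\epsilon_H$. You bound the bias by splitting on the size of $\epsilon_1$ and combining the von Mises increment bound with Karamata/Potter; this is precisely Lemma~\ref{lem:independece} and Corollary~\ref{cor:bounded_conditional_rate}, which the paper proves beforehand and simply invokes. You use a Lindeberg CLT for bounded summands and the uniform $O_{\p}(n^{-1/2})$ (DKW) control of $\hat F_1-F_1$.

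The one genuine difference is how you treat the random threshold. The paper does not replace $X_{(n-k),2}$ by $u_n$ in the estimator. It conditions on $T_n=X_{(n-k),2}$ and the index set $J_n$ of the $k$ exceedances, under which the selected pairs are i.i.d.\ from the law of $(X_1,X_2)$ given $X_2>T_n$. It then applies a conditional CLT around $\mu(T_n)=\e[F_1(X_1)\mid X_2>T_n]$. Finally it disposes of $\sqrt{k}(\mu(T_n)-\tfrac12)$ by evaluating the deterministic bound $|\mu(u)-\tfrac12|=O(u^{-\rho+\delta})$ at $u=T_n$, using $T_n/u_n\to1$ in probability. No boundary analysis is needed.

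Your route via $u_n$ plus a boundary remainder can be made to work, but not as literally stated. Suppose you replace $X_{(n-k),2}$ by $u_n$ inside $\frac1k\sum_i F_1(X_{i,1})\mathbf 1\{\cdot\}$. The oracle is then a triangular-array sum with a $\mathrm{Bin}(n,k/n)$ number $N_n$ of nonzero terms. Since $\e[U_1^2\mid X_2>u_n]\to1/3$, it satisfies $\sqrt{k}(\text{oracle}-\mu(u_n))\to N(0,1/3)$, not $N(0,1/12)$. Moreover, $\sqrt{k}$ times the threshold remainder equals $\tfrac12(k-N_n)/\sqrt{k}+o_{\p}(1)\to N(0,1/4)$, which is not $o_{\p}(1)$. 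Asymptotic independence cannot remove this term, because the boundary summands have mean close to $1/2$, not $0$.

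The fix is to use that exactly $k$ indices are selected, and write $\hat\Gamma_{X_2\to X_1}-\tfrac12=\frac1k\sum_i(\hat F_1(X_{i,1})-\tfrac12)\mathbf 1\{X_{i,2}>X_{(n-k),2}\}$ before swapping thresholds (or self-normalize by $N_n$). The boundary summands are then centered, and your argument goes through. A band version of the bias bound shows their $\sqrt{k}$-scaled mean contribution is $O(\sqrt{k}\,u_n^{-\rho+\delta})=o(1)$ under the same condition $\nu<2\rho/(\alpha_2+2\rho)$. A maximal inequality over thresholds within $O_{\p}(\sqrt{k})$ order statistics of $u_n$ shows the $\sqrt{k}$-scaled fluctuation is $O_{\p}(k^{-1/4})$.

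The paper's conditioning argument buys exactly the avoidance of this step. Your route, once centered, has the advantage of cleanly separating a deterministic bias at a fixed threshold from an unconditional i.i.d.\ triangular-array CLT.
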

Similar to Theorem~\ref{theorem:independence}, the admissible growth rate of $k=\lfloor n^\nu \rfloor$ depends on the involved tail indices. For $\alpha_1-\alpha_2\geq1$, such that $\rho=1$, the upper bound on $\nu$ increases as $\alpha_2$ decreases. Hence, heavier tailed $\epsilon_2$ allow for a faster growth of $k$, reflecting that the relative contribution of $\epsilon_1$ to extreme realizations of $X_2$ becomes negligible more quickly. If instead $\alpha_1-\alpha_2<1$, then the admissible growth rate additionally decreases as the two tail indices become closer. Intuitively, when the two tail indices are close, the contribution of $X_1$ to extreme realizations of $X_2$ vanishes only far in the tail, requiring a more extreme threshold and hence a slower growth of $k$.

Based on Theorem~\ref{theorem}, under $H_0$, the test statistic $T_C$ converges in distribution to $\mathcal{N}\left(0,1/12\right)$. Hence, at significance level $\alpha$, we reject $H_0$ whenever $T_C>z_{\alpha}$, where $z_{\alpha}$ denotes the $(1-\alpha)$-quantile of $\mathcal N(0,1/12)$.

In practice, we suggest using the Confounder-Test together with the Causality-Test above. If all confounding effects are observable, identification of causal effects and causal directions is possible through joint assessment using both tests. Thus if $H_0$ is not rejected, we find no evidence for heavy-tailed confounding and infer the causal direction $X_1\to X_2$. Otherwise, we account for confounding by using the adjusted estimator $\hat{\Gamma}_{X_1\to X_2\mid H}$ from \eqref{eq:ctc_confounder} and return to the Causality-Test. The Confounder-Test is particularly relevant for assessing whether the adjusted CTC-type tests work as they require that the used proxy for the confounder captures all relevant confounding effects.

\bigskip

Since the proposed test for confounding is only applicable in scenarios where $\alpha_1>\alpha_2$, we propose tests in Section~\ref{sec:simulation} to assess this condition in practice.

In the estimation of $\Gamma_{X_i\rightarrow X_j}$ and all proposed tests the optimal choice of the tuning parameter $k$ is crucial. 
In Section~\ref{sec:simulation} below, we conduct a separate simulation study in order to derive practical guidance on this. In particular we suggest a $k$ that is minimizing the percentage of incorrectly inferred causal connections.

\section{Simulation Evidence}
\label{sec:simulation}
In this section, we first evaluate the finite-sample behavior of the CTC estimators and of the corresponding asymmetry measure $\hat{\Delta}_{X_1\to X_2}$ across different tail, model and sample size scenarios. In a pre-step, we comprehensively address the choice of tuning parameter $k$ in the CTC. As a benchmark we use methods that focus on causality in the center of the distribution and that fail if the tail causality differs from the one in the center.
In the second subsection, we evaluate the proposed tests from Section~\ref{sec:device}. For the crucial conditions on the tail indices, we provide a pretest and also assess its finite-sample performance.
 
We study two types of models. The first model is
\begin{equation}
 H=\epsilon_H,\quad X_1=\beta_{H1}H+\epsilon_1, \quad X_2=\beta_{1 2} X_1+\beta_{H2}H+\epsilon_2.\label{eq:model1}
\end{equation}

where $\epsilon_1$, $\epsilon_2$ and $\epsilon_H$ follow the same distribution family with possibly different tails and $\beta_{1 2}, \beta_{H1},\beta_{H2}\in [0,1)$. For $\beta_{1 2}=\beta_{H1}=\beta_{H2}=0$ this equals configuration (A) and for $\beta_{1 2} >0,\beta_{H1}=\beta_{H2}=0$ configuration (B) in Figure \ref{fig:cases}. For the configurations (C)-(F) the same model is applicable with $\beta_{H1},\beta_{H2}>0$. To study the performance of the methods in the tail of the data we further examine a model, where the causal connection is limited to the tail of the variables
\begin{equation}
 X_1=\epsilon_1, \quad X_2=\textbf{1}(X_1>q_{0.95})\beta_{1 2} X_1+\epsilon_2,\label{eq:model2}
\end{equation}
where $q_{0.95}$ denotes the $95\%$-quantile of the distribution of $X_1$. In this case, the causal relationship between $X_1$ and $X_2$ can still be described with an LSCM in the tails, whereas the overall relationship is non-linear. 

As a benchmark we use the \textit{Linear, Non-Gaussian, Acyclic Model (LiNGAM)} causal discovery algorithm \citep{shimizu2006linear}. This technique focuses not only on potential causality in the tail of the data, but instead on global causal dependencies between the variables. It utilizes the findings of independent component analysis (ICA) \citep{comon1994independent} to estimate the directed acyclic graph for an LSCM under general non-Gaussian errors. If this classical general LiNGAM fails to detect a causal relation, we use the pairwise LiNGAM (\cite{hyvarinen2013pairwise}) which infers the causal direction via the likelihood ratio of the two competing models but does not indicate whether a causal relationship actually exists \citep{gnecco2021causal}. While the CTC is robust against hidden confounders, as long as their tail is not heavier than the tails of the observed variables, LiNGAM lacks the ability to deal with hidden common causes.

For the complete replication code in R we refer to \url{https://github.com/KITmetricslab/Tail-Causality}. With our software code, we build on the  \textit{CausalXtreme} package \cite{gnecco2021causal} for the CTC but also for the pairwise LiNGAM benchmark, and use the code provided in \cite{pasche2023causal} for the adjusted CTC version and permutation test. For the classical LiNGAM benchmark we rely on the R package \textit{pcalg} \citep{lingampackage}.

\subsection{Choice of the Tuning Parameter \textit{k}: Independent Pre-Study} 
\label{subsec:best_k}

We sample from \eqref{eq:model1}, where $\epsilon_1$ follows a Student's $t$ distribution with $\alpha_1$ degrees of freedom and $\epsilon_2$ follows a Student's $t$ distribution with $\alpha_2$ degrees of freedom. We consider sample sizes $n\in\{500,1000,10{,}000\}$ and causal coefficients $\beta_{12}\in(0.1,0.9)$. In order to evaluate the optimal choice of $k$ for different tail indices $\alpha_1$ and $\alpha_2$ we evaluate the proportion of cases in which $\hat{\Gamma}_{X_2\rightarrow X_1}$ is incorrectly greater than $\hat{\Gamma}_{X_1\rightarrow X_2}$. The results for different fractional exponents $\nu$ with $k=\lfloor n^\nu \rfloor$ are shown in \autoref{fig:errors_kvarying_student}. 

Additionally, in order to compare the true DAG with the estimated one, the structural intervention distance (SID) introduced by \cite{peters2015structural} is used to quantify the outcome for different fractional exponents $\nu$ and $k=n^\nu$.\footnote{As in \cite{gnecco2021causal} we use the EASE-algorithm, which is based on the CTC and tries to recover the underlying DAG for a set of variables $X_1,\ldots,X_p$ in an LSCM.} The SID measures the similarity between two DAGs based on their associated causal inference statements. This design is close to  \cite{gnecco2021causal} but extends it to unequal-tail settings. 
In our setting we simulate $80$ different DAGs for each parameter combination. The design includes sample sizes $n\in\{500,1000,10{,}000\}$, numbers of variables $p\in\{4,8,10,16,20,30, 50\}$ 
and tail indices $\alpha_1,\alpha_2\in\{2,3,4\}$. The variation of $p$ gives us additional insights into the choice of this parameter in higher dimensional settings, which generalizes our pairwise setting. For each simulated DAG, the first half of the variables have Student's $t$ distributed innovations with $\alpha_1$ degrees of freedom, while the remaining variables have Student's $t$ distributed innovations with $\alpha_2$ degrees of freedom. The resulting SIDs are shown in \autoref{fig:SID_kvarying_student}.

\begin{figure}[htbp]
    \centering
    \begin{subfigure}[b]{0.48\textwidth}
        \centering
        \includegraphics[width=\textwidth]{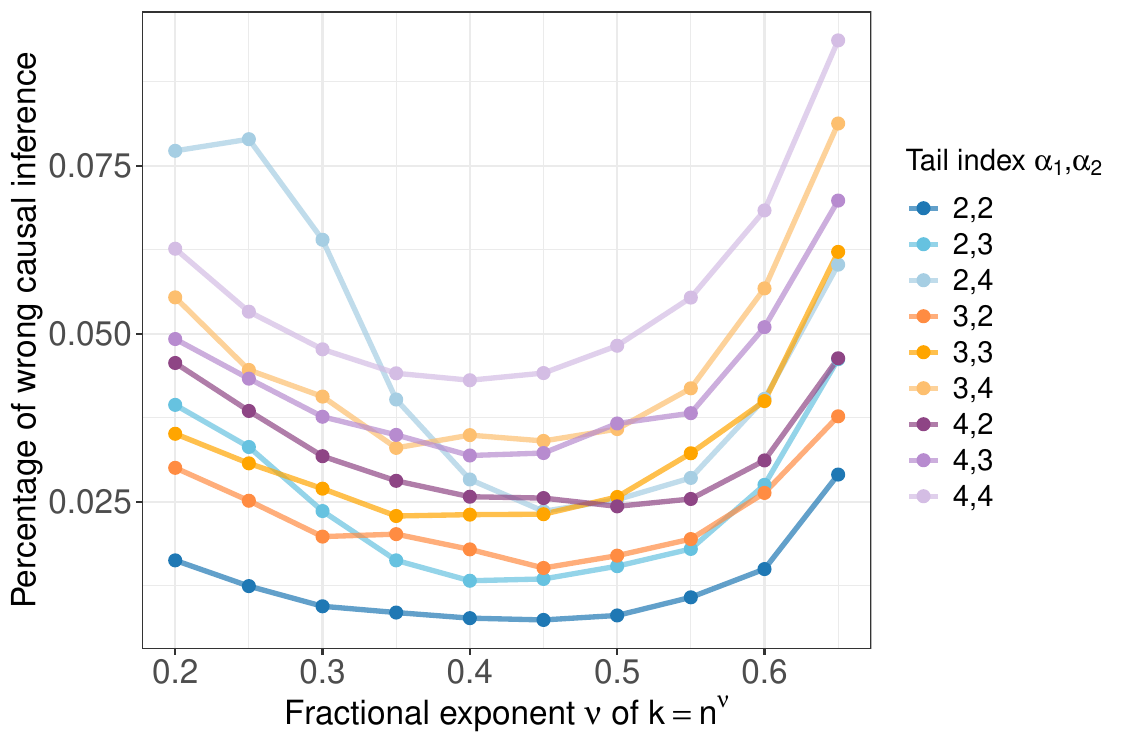}
        \caption{Percentage of incorrect causal inference between two variables.}
        \label{fig:errors_kvarying_student}
    \end{subfigure}
    \hspace{0.02\textwidth} 
    \begin{subfigure}[b]{0.48\textwidth}
        \centering
        \includegraphics[width=\textwidth]{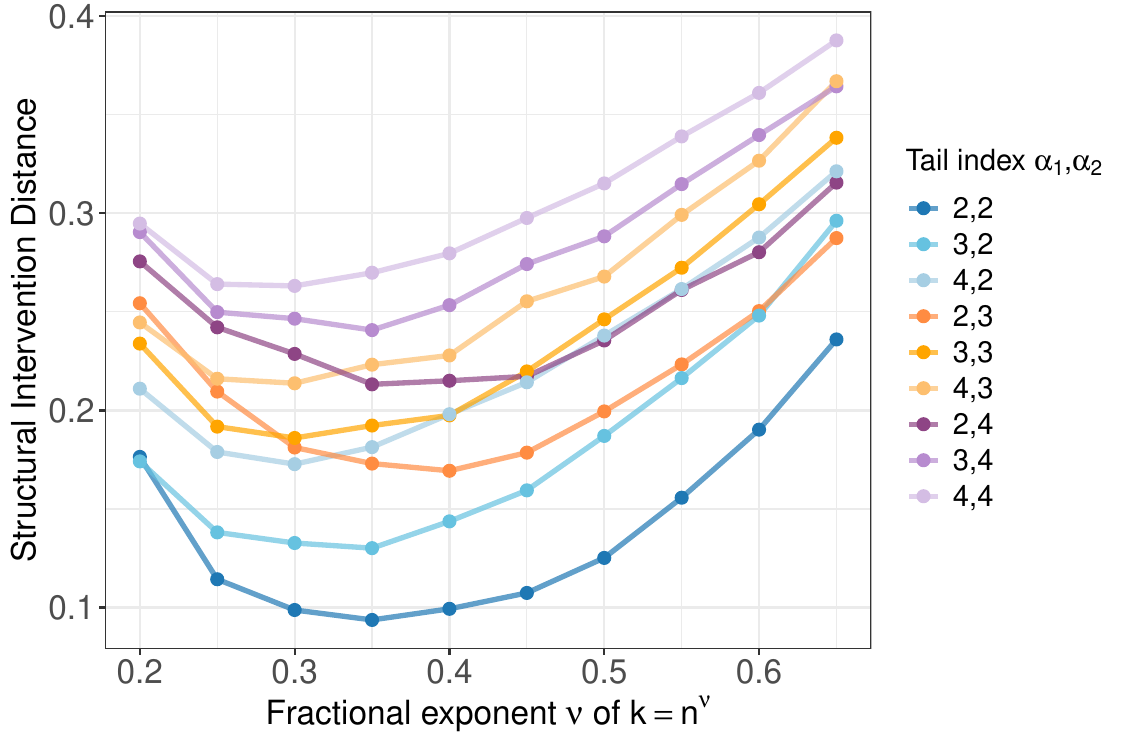}
        \caption{Mean structural intervention distance across simulated DAGs.}
        \label{fig:SID_kvarying_student}
    \end{subfigure}
    \label{fig:best_k}
    \caption{Evaluation of different fractional exponents $\nu$ in $k=\lfloor n^\nu\rfloor$ for different tail-index combinations $(\alpha_1,\alpha_2)$.}
\end{figure}

While the general tail index results in \autoref{fig:SID_kvarying_student}, similar to the equal tail index results in \cite{gnecco2021causal}, suggest selecting $\nu$ between $0.3$ and $0.4$, \autoref{fig:errors_kvarying_student} indicates that a larger value of $k$ is preferable when only two variables are considered. In such two-variable scenarios, choosing $\nu$ between $0.4$ and $0.5$ appears reasonable. Higher tail indices, which correspond to lighter tails, increase the likelihood of incorrect causal inference for both $\alpha_1$ and $\alpha_2$. Based on these observations, we set $k = \lfloor n^{0.4} \rfloor$ in the application, as it demonstrates strong performance across a variety of cases involving only two variables.

\subsection{CTC Estimates: Finite-Sample Rate of Convergence}
\label{subsec:sim_sample_rate}
In a first step, we analyze the finite-sample rate of convergence of the CTC for different tail index scenarios. While the theoretical findings indicate that the CTC exhibits a value of one in the presence of a causal relationship, even when $\alpha_1>\alpha_2$, we expect slower convergence in such cases. 
In the subsequent analysis, we examine the convergence of $\hat{\Gamma}_{X_1\to X_2}$ and $\hat{\Gamma}_{X_2\to X_1}$ under \eqref{eq:model1}. We then use \eqref{eq:model2} to compare the CTC-based approach with LiNGAM in settings where causality is present only in the tails. We study both models \eqref{eq:model1} and \eqref{eq:model2} for 
\begin{itemize}
    \item different types of regularly varying distributions, i.e. $\epsilon_1$ and $\epsilon_2$ are a) Student's $t$ or b) Pareto distributed  with respective tail indices $\alpha_1$ and $\alpha_2$ both on the grid of  $\{2, 3, 4\}$,
    \item different sample sizes on the grid $3000,4000,\ldots,9000,10000$,
    \item different values of $\beta_{12}\in\{0.1,0.3,0.5\}$.
\end{itemize}

According to the results in the previous subsection, we choose $k=25\approx 3000^{0.4}$ fixed across all settings which is optimal for the sample size 3000 and an admissible conservative choice for all others.

\begin{figure}[ht]
    \centering
    \includegraphics[width=\linewidth]{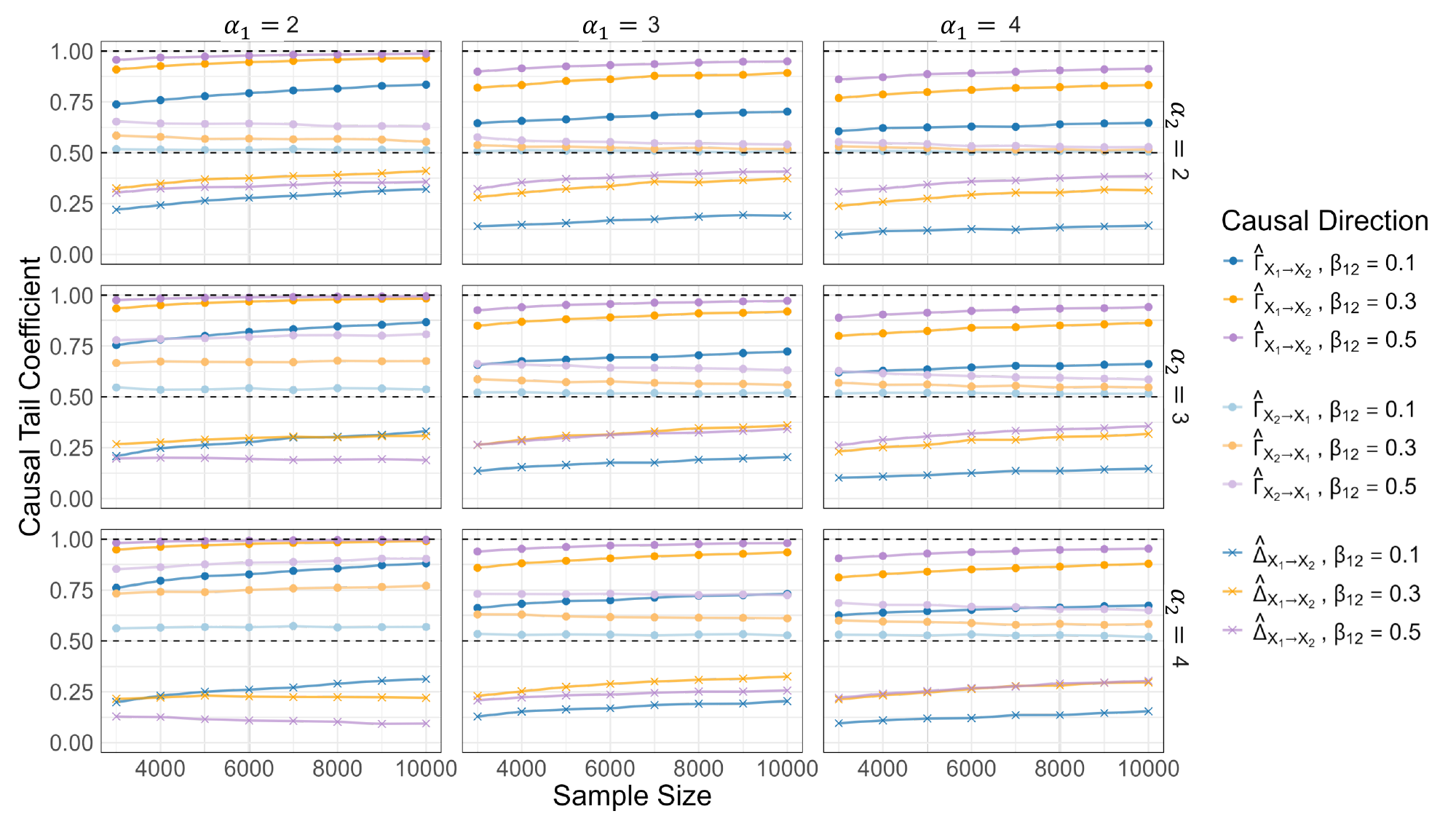}
    \caption{CTC estimates under model \eqref{eq:model1} for different sample sizes, causal coefficients $\beta_{12}$, and tail indices $\alpha_1$ and $\alpha_2$. The innovations $\epsilon_1$ and $\epsilon_2$ follow Student's $t$ distributions with $\alpha_1$ and $\alpha_2$ degrees of freedom.}
    \label{fig:simu_conv_t}
\end{figure}

The results for model~\eqref{eq:model1} with Student’s $t$ distributed noise terms and without confounding effects are shown in \autoref{fig:simu_conv_t}, while results with a confounder present are shown in Appendix~\ref{app:add_simulation}, \autoref{fig:add_simu_conv_t_with_conf_mod9}. For model~\eqref{eq:model2}, the corresponding results are given in Appendix~\ref{app:add_simulation}, \autoref{fig:add_simu_conv_mod10}, but they do not significantly differ from those in \autoref{fig:simu_conv_t}. The findings for model~\eqref{eq:model1} with Pareto distributed errors are also reported in Appendix~\ref{app:add_simulation}, \autoref{fig:add_simu_conv_pareto}. The different panels in \autoref{fig:simu_conv_t} correspond to different combinations of tail indices, while different colors indicate the structural coefficient $\beta_{12}$. Darker colors mark estimates of $\Gamma_{X_1\to X_2}$, and lighter colors correspond to estimates of $\Gamma_{X_2\to X_1}$. The crosses indicate the differences $\hat{\Delta}_{X_1\to X_2}$ between the two estimators. We recall that for $\alpha_1>\alpha_2$, the theoretical value of $\Gamma_{X_2\to X_1}$ is 0.5. This value is closely met even with small sample sizes when the causal weight $\beta_{12}$ is small (e.g. $0.1$) or when the difference in tail indices is large (e.g. $\alpha_1=4$, $\alpha_2=2$). In contrast, smaller differences in tail indices (e.g. $\alpha_1=4$, $\alpha_2=3$) and higher causal coefficients ($\beta_{12}=0.5$) require substantially larger sample sizes (over 10,000) to approach this value. However, a close look at the estimate of $\Gamma_{X_1\rightarrow X_2}$ reveals that the situation is reversed. For $\alpha_1<\alpha_2$ $\hat{\Gamma}_{X_1\rightarrow X_2}$ rapidly approaches 1, particularly for $\beta_{1 2}$ close to $0.5$. For $\alpha_1=\alpha_2$ the convergence is already slower and for $\alpha_1>\alpha_2$, the situation worsens, as the convergence becomes even slower. Compared to the results of the Pareto distribution in Appendix~\ref{app:add_simulation}, the convergence is slower for Student's $t$ distributed variables. That indicates that the convergence speed is not only based on the tail indices and the causal weight $\beta_{1 2}$ but also on the underlying distribution family. In addition to the theoretical results for the case $\alpha_1<\alpha_2$, the simulation study shows that there  still exist specific situations where it is possible to correctly infer the causal direction, in particular when the causal weight $\beta_{1 2}$ is sufficiently small, since this leads to significant differences $\hat{\Delta}_{X_1 \to X_2}$ between $\hat{\Gamma}_{X_1\rightarrow X_2}$ and $\hat{\Gamma}_{X_2\rightarrow X_1}$ for limited sample sizes. Across all considered cases $\hat{\Delta}_{X_1 \to X_2}$ is sufficiently large to uncover the causal effect.

It is therefore crucial to consider the range of values that can realistically be expected in practice. While theoretical results suggest that in population $\Gamma_{X_1\rightarrow X_2}=1$ in the case of a causal relationship between $X_1$ and $X_2$, such an outcome should not be expected when working with the estimate for limited sample sizes or large tail indices. Consequently, for our causal tail testing strategy, we focus on $\hat{\Delta}_{X_1 \to X_2}$ rather than on the concrete values of $\hat{\Gamma}_{X_1 \to X_2}$ and $\hat{\Gamma}_{X_2 \to X_1}$. To interpret  these values, we must determine which relation of the tail indices $\alpha_1$ and $\alpha_2$ holds, as $\hat{\Gamma}_{X_2 \to X_1}$ takes on different values depending on the specific tail index case.

\begin{figure}[ht]
    \centering
    \includegraphics[trim=0 0 0 0, clip,width=\linewidth]{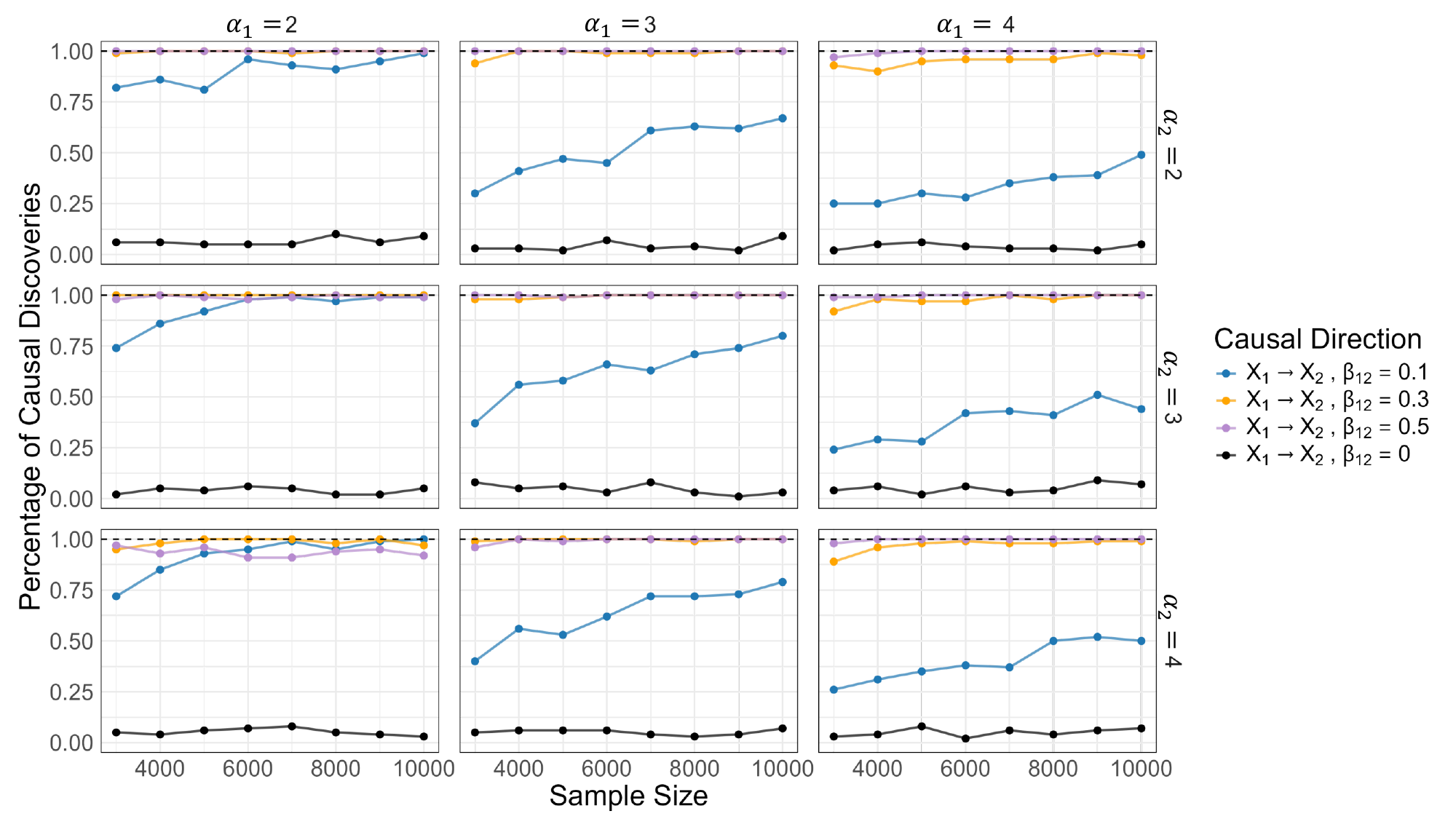}
    \caption{Percentage of causal discoveries for the Causality-Test under model \eqref{eq:model2}, based on the rejection rule $|\hat{\Delta}_{X_1\to X_2}|>\Delta_{0.05}$, for different sample sizes, causal coefficients $\beta_{12}$, and tail indices $\alpha_1$ and $\alpha_2$, where $\epsilon_1$ and $\epsilon_2$ follow Student's $t$ distributions with $\alpha_1$ and $\alpha_2$ degrees of freedom.}
    \label{fig:simulation_ctc_test}
\end{figure}

\begin{figure}[ht]
    \centering
    \includegraphics[trim=0 0 0 0, clip,width=\linewidth]{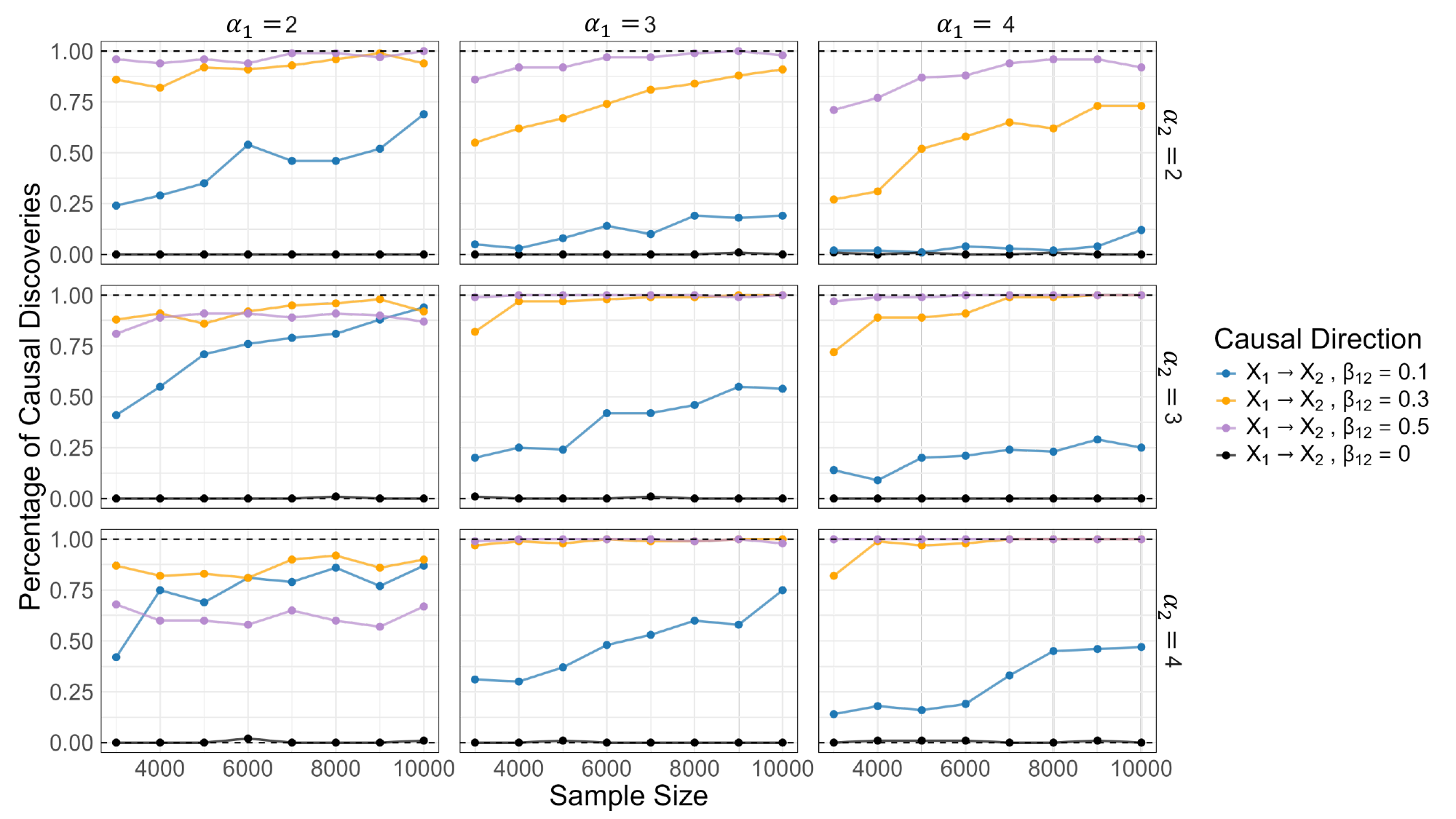}
    \caption{Percentage of causal discoveries for LiNGAM under model \eqref{eq:model2} for different sample sizes, causal coefficients $\beta_{12}$, and tail indices $\alpha_1$ and $\alpha_2$, where $\epsilon_1$ and $\epsilon_2$ follow Student's $t$ distributions with $\alpha_1$ and $\alpha_2$ degrees of freedom.}
    \label{fig:simulation_lingam}
\end{figure}

In the event of a causal relationship manifesting solely in the tails of $X_1$ and $X_2$, as in \eqref{eq:model2}, \autoref{fig:simulation_ctc_test} reports how often the Causality-Test rejects the null hypothesis at the $5\%$ significance level. Here we only sample from configurations with $\beta_{12}\in \{0, 0.1, 0.3, 0.5\}$ without confounders for a realistic assessment of the LiNGAM benchmark.
The corresponding results for LiNGAM are presented in \autoref{fig:simulation_lingam}. The results illustrate the percentage of causal discoveries. Ideally, this percentage should be close to $0$ for $\beta_{12}=0$ and close to $1$ for $\beta_{12}>0$. LiNGAM encounters difficulties when the causal relationship is confined to the tail region, especially when $X_1$ has a lighter tail than $X_2$, that is, $\alpha_1>\alpha_2$, and the causal coefficient is small, for example $\beta_{12}=0.1$. Similar difficulties arise in other heterogeneous-tail settings. In contrast, the Causality-Test results in \autoref{fig:simulation_ctc_test} show substantially better performance. Even when tail indices differ, the CTC-based test detects causal relationships in the tails, consistent with the findings in Section~\ref{sec:pretest_sim}.

The corresponding results for model~\eqref{eq:model1} are reported in Appendix~\ref{app:add_figures}, \autoref{fig:simulation_causality_test_mod9} for the Causality-Test and \autoref{fig:simulation_lingam_mod9} for LiNGAM. As expected, the performance of LiNGAM improves substantially when the causal relationship is also present in the bulk of the data, since it exploits information from the entire sample, in contrast to the proposed Causality-Test, which is restricted to the tails. The Causality-Test therefore shows similar performance for \eqref{eq:model1} and \eqref{eq:model2}.

The findings demonstrate the strong performance of the Causality-Test in detecting causal effects that occur primarily in the tails of the distribution. While LiNGAM is a useful benchmark for LSCMs when causal effects are linear and present throughout the full distribution, it frequently fails to capture causal effects that are confined to extremes, especially in heterogeneous tail settings. This highlights the benefit of using CTC-based methods when the research question concerns causal propagation in the tails rather than average or global dependence.

\subsection{Finite-Sample Performance of the Proposed Tests}
\label{sec:test_eval}
In this subsection, the proposed tail tests of Section~\ref{sec:device} are evaluated. We draw $1{,}000$ samples from each configuration (A)-(F) shown in \autoref{fig:cases} and each tail index combination of $\alpha_1,\alpha_2\in \{2,3,4\}$. The causal coefficients $\beta_{12},\beta_{H 1}, \beta_{H 2}$ are sampled from a $\mathrm{Unif}(0.1,0.9)$ distribution, allowing for asymmetric confounding effects with $\beta_{H  1}\neq \beta_{H  2}$. In light of the results in Subsection~\ref{subsec:sim_sample_rate}, we use a
sample size of $n=10{,}000$ for each LSCM configuration. The errors $\epsilon_1,\epsilon_2,\epsilon_H$ are sampled from Student’s $t$ distributions with degrees of freedom $\alpha_1,\alpha_2,\alpha_H$, respectively, where
\[
\alpha_H =
\begin{cases}
\max(\alpha_1,\alpha_2)+1, & \text{for configurations (C) and (D)},\\[0.3em]
\min(\alpha_1,\alpha_2)-1, & \text{for configurations (E) and (F)}.
\end{cases}
\]
We refrain from settings with $\alpha_H=1$ focusing on cases where the expectation of the confounder is finite and numerical instabilities are avoided. Following Subsection~\ref{subsec:best_k}, we set $k=39\approx 10{,}000^{0.4}$.

\subsubsection{Tail Pre-Tests}
\label{subsec:tail_index_sim}
As discussed in the previous section, depending on the tail indices $\alpha_1$ and $\alpha_2$ of $\epsilon_1$ and $\epsilon_2$, causal directions between $X_1$ and $X_2$ can be identified. In practice, these tail indices are unknown. Since $\epsilon_1$ and $\epsilon_2$ are unobservable, we can only estimate the tail indices of $X_1$ and $X_2$ and can draw conclusions about $\alpha_1$ and $\alpha_2$ from these. As $k,\,n \to \infty$ with $k/n\to 0$, we would not expect to observe $\hat{\alpha}_1<\hat{\alpha}_2$ in practice if $X_2=\beta_{1\to 2} X_1+\ldots+\epsilon_2$ with $\beta_{1\to 2}>0$, because the heavy tail of $X_1$ is asymptotically inherited by $X_2$. Instead, one would expect $\hat{\alpha}_1\approx \hat{\alpha}_2$. We show below, however, that in finite samples this tail transfer is not fully reflected in the estimates, so that such scenarios can still be empirically distinguished.

Since the proposed tests for causality and confounder are primarily applicable in scenarios where $\alpha_1>\alpha_2$, we provide a pre-test to check this condition in practice. To distinguish the cases $H_0:\alpha_1=\alpha_2$ and $H_1:\alpha_1 \neq \alpha_2$, we propose a test for different tail indices (Tail Index-Test) based on \cite{hoga2018detecting}, which extends the Hill estimator
\citep{hill1975simple} 
\begin{equation}\label{eq:hill_estimator}
    \hat{\gamma}(k)=\frac{1}{k}\sum_{i=1}^k \log\left(\frac{X_{(n-i+1)}}{X_{(n-k)}}\right)
\end{equation} 
that is a consistent estimator of $1/\alpha$ under standard regular variation conditions for iid data. In contrast to \eqref{eq:hill_estimator}, 
the proposed estimator is also applicable to time-series data under mild conditions on the dependence structure. Further details are given in Appendix~\ref{app:tail_test}, as well as an alternative test based on the asymptotic normality of the Hill estimator for iid samples.

We compute the test statistic according to the algorithm described in Appendix~\ref{app:tail_test} and reject $H_0$ at the 5\% significance level if the test statistic satisfies $t_I>55.44$. This critical value is simulated and taken from \autoref{tab:teststatistic_quantiles} reported in Appendix~\ref{app:tail_test} that contains the critical values across different significance levels. If $H_0$ is not rejected, the data provides no evidence against equal tail indices. In this case, the theoretical value of $\Gamma_{X_2\to X_1}$ lies in $(0.5,1)$ and is not point-identified, which precludes a direct comparison with its empirical counterpart. We therefore rely on the CTC asymmetry and conclude $X_1\to X_2$. If $H_0$ is rejected, and if the estimated ordering indicates $\hat{\alpha}_1>\hat{\alpha}_2$, then, in the absence of a heavy-tailed confounder, the theory predicts $\Gamma_{X_2\to X_1}=0.5$, which can be assessed using the Confounder-Test.

We next investigate the performance of the Tail Index-Test proposed by \citet{hoga2018detecting}, with further details provided in Appendix~\ref{app:tail_test}. Results for the iid version of the test are reported in Appendix~\ref{app:add_simulation}, \autoref{tab:alternative_index_test}, showing a similar performance to the test considered here.  \autoref{tab:index_test_hoga} reports Tail Index-Test decisions for the configurations shown in \autoref{fig:cases}. Since $\epsilon_1$ and $\epsilon_2$ are not observable in practice, the test is applied to the observed series $X_{1,1},\ldots, X_{n,1}$ and $X_{1,2},\ldots, X_{n,2}$. We set $k_{opt}=500$. 

As $k,n\to \infty$ with $k/n \to 0$, one would expect the tail index of $\epsilon_1$ to be inherited by $X_2$ in the presence of a causal link when $\epsilon_1$ is heavier-tailed than $\epsilon_2$. However, the finite-sample results reported in \autoref{tab:index_test_hoga} indicate that this transfer does not occur, as the test continues to distinguish between the tail indices of $\epsilon_1$ and $\epsilon_2$. Its performance deteriorates further in more complex settings, such as configurations (D) and (F). Moreover, the test has difficulty discriminating between tail indices that are close in value (e.g. $\alpha_1=3, \alpha_2=4$).

\begin{table}[ht]
    \centering
\begin{tabular}{cc|cc|cc|cc}
\toprule
\multicolumn{2}{c}{Tail indices} &
\multicolumn{2}{c}{no conf.} &
\multicolumn{2}{c}{light-tailed conf.} &
\multicolumn{2}{c}{heavy-tailed conf.} \\
$\alpha_1$ & $\alpha_2$ &
(A) & (B) & (C) & (D) & (E) & (F) \\
\midrule
\cellcolor{gray!20}2 & \cellcolor{gray!20}2 &  4.6 & 11.2 & 13.0 & 19.2 &   --   &   --   \\
2 & 3 & 89.2 & 92.0 & 85.9 & 94.9 &   --   &   --   \\
2 & 4 & 98.5 & 98.4 & 99.0 & 99.3 &   --   &   --   \\
\midrule
3 & 2 & 89.6 & 66.8 & 86.5 & 57.0 &   --   &   --   \\
\cellcolor{gray!20}3 & \cellcolor{gray!20}3 &  3.6 & 16.6 & 14.2 & 29.2 &  9.3 & 12.7 \\
3 & 4 & 49.0 & 74.3 & 49.2 & 74.2 & 32.2 & 26.9 \\
\midrule
4 & 2 & 98.8 & 95.7 & 99.3 & 92.9 &   --   &   --   \\
4 & 3 & 51.8 & 20.5 & 49.0 & 16.8 & 33.9 & 26.3 \\
\cellcolor{gray!20}4 & \cellcolor{gray!20}4 &  4.9 & 18.6 & 12.6 & 25.1 &  3.4 & 11.5 \\
\bottomrule
\end{tabular}
\caption{Empirical rejection rates (in \%) of the Tail Index-Test at the $5\%$ level ($t_I>55.44$) for different tail index combinations ($\alpha_1,\alpha_2$) and configurations (A)-(F), where gray cells mark the equal index cases ($H_0$ true).}
\label{tab:index_test_hoga}
\end{table}

\subsubsection{Test of Causality}
\label{sec:pretest_sim}

We study $H_0:\beta_{1 2}=0$ at the fixed $5\%$ significance level for configurations of model~\eqref{eq:model1}. The percentage of test rejections across configurations and tail index combinations are displayed in \autoref{tab:causality_results_extended} for both the CTC as in \eqref{eq:ctc} and the confounder-adjusted CTC in \eqref{eq:ctc_confounder}. Configurations (A) and (B) evaluate Causality-Test performance without confounding and show good control of error rates. In contrast, when a confounder is present, the error rate increases substantially, highlighting the necessity of detecting confounders and employing the CTC adjusted for their presence. Using the adjusted CTC, the Causality-Test error rates drop in configurations (C)-(F) once the appropriate confounder is controlled for. For light‑tailed confounders, the rejection rates are then close to those in the non-confounding cases. For heavy-tailed confounders, the effect cannot be completely removed, but the adjusted CTC still yields markedly better results than the basic CTC. In configurations (A) and (B), the tail indices do not seem to affect the test performance, indicating that the Causality-Test is applicable even when the tail indices differ.

For configurations (A) and (B), we additionally study the variation of the test performance for different $\alpha$-levels. For this we assess whether the proposed test maintains the nominal level across different values of $\alpha$. The results are shown in \autoref{tab:causality_alpha_AB}. As expected, the test maintains the nominal level and as $\alpha$ increases, the type-I error rises while the type-II error declines. Taken together, the results support the use of the Causality-Test in settings without confounding and in settings with confounding when it is properly accounted for.

\begin{table}[ht]
    \centering
\begin{tabular}{cc|cc|cc|cc|cc|cc|cc}
\toprule
\multicolumn{2}{c}{Tail indices} &
\multicolumn{2}{c}{(A)} &
\multicolumn{2}{c}{(B)} &
\multicolumn{2}{c}{(C)} &
\multicolumn{2}{c}{(D)} &
\multicolumn{2}{c}{(E)} &
\multicolumn{2}{c}{(F)} \\
$\alpha_1$ & $\alpha_2$ &
CT & CT\textsubscript{adj} &
CT & CT\textsubscript{adj} &
CT & CT\textsubscript{adj} &
CT & CT\textsubscript{adj} &
CT & CT\textsubscript{adj} &
CT & CT\textsubscript{adj}\\
\midrule
2 & 2 &  5.6 & --    &  100.0 & --    &  6.2 &  4.8 &  99.9 & 100.0 & --& -- & -- & -- \\
2 & 3 &  5.3 & --    &  99.1 & --    & 11.1 &  5.5 &  98.1 &  99.0 & -- & -- & -- & --\\
2 & 4 &  5.6 &--    & 87.0 &--    & 18.8 &  7.0 & 85.1 &  86.6 & -- & -- & -- & --\\
\midrule
3 & 2 &  5.8 & --    &  99.2 & --    & 12.4 &  5.4 &  98.4 &  98.5 & -- & -- & -- & --\\
3 & 3 &  4.4 & --    &  99.2 & --    & 10.3 &  4.9 &  97.1 &  98.5 & 42.8 &  8.9 & 60.3 & 93.6\\
3 & 4 &  4.3 & --    &  98.8 & --    & 11.0 &  5.6 &  90.6 &  95.3 & 54.4 & 15.6 & 41.5 & 74.2\\
\midrule
4 & 2 &  5.6 & --    &  97.6 & --    & 19.8 &  6.9 &  97.1 &  96.0 & -- & -- & -- & --\\
4 & 3 &  4.9 & --    &  96.6 & --    & 10.7 &  3.7 &  96.7 &  95.5 & 51.9 & 16.8 & 69.4 & 92.1\\
4 & 4 &  5.4 & --    &  97.1 &--    &  9.1 &  4.6 & 85.0 &  91.6 & 34.1 &  8.2 & 66.3 & 87.6\\

\bottomrule
\end{tabular}
\caption{Empirical rejection rates (in \%) of the standard Causality-Test (CT) and the confounder-adjusted Causality-Test (CT\textsubscript{adj}) at the $5\%$ level across configurations (A)-(F) and tail index
combinations $(\alpha_1,\alpha_2)$. For (A), (C) and (E) $H_0$ holds.}
\label{tab:causality_results_extended}
\end{table}

\begin{table}[ht]
    \centering
\begin{tabular}{cc|cccc|cccc}
\toprule
\multicolumn{2}{c}{Tail indices} &
\multicolumn{4}{c}{(A) no causal connection} &
\multicolumn{4}{c}{(B) causal connection} \\
$\alpha_1$ & $\alpha_2$ &
$1\%$ & $2.5\%$ & $5\%$ & $10\%$ &
$1\%$ & $2.5\%$ & $5\%$ & $10\%$ \\
\midrule
2 & 2 &
0.7 & 2.8 & 5.6 & 9.8 &
100.0 & 100.0 & 100.0 & 100.0 \\
2 & 3 &
1.0 & 3.3 & 5.3 & 9.8 &
97.2 & 98.4 & 99.1 & 99.4 \\
2 & 4 &
0.9 & 2.5 & 5.6 & 10.5 &
76.9 & 82.9 & 87.0 & 90.4 \\
\midrule
3 & 2 &
0.7 & 3.1 & 5.8 & 10.1 &
97.2 & 98.6 & 99.2 & 99.7 \\
3 & 3 &
0.8 & 2.1 & 4.4 & 8.7 &
97.9 & 98.6 & 99.2 & 99.5 \\
3 & 4 &
1.0 & 2.2 & 4.3 & 9.6 &
97.0 & 98.3 & 98.8 & 99.5 \\
\midrule
4 & 2 &
1.4 & 2.4 & 5.6 & 9.8 &
94.1 & 95.8 & 97.6 & 98.6 \\
4 & 3 &
1.2 & 2.7 & 4.9 & 8.6 &
93.7 & 95.5 & 96.6 & 97.7 \\
4 & 4 &
1.1 & 2.4 & 5.4 & 9.5 &
93.8 & 96.1 & 97.1 & 98.8 \\
\bottomrule
\end{tabular}
\caption{Empirical rejection rates (in \%) of the Causality-Test for
configurations (A) and (B) at nominal levels $\alpha\in\{1\%,2.5\%,5\%,10\%\}$
and different tail index combinations $(\alpha_1,\alpha_2)$.}
\label{tab:causality_alpha_AB}
\end{table}

\subsubsection{Confounder Test}
\label{sec:confounder_sim} 
The results of the Confounder-Test for scenarios with $\alpha_1>\alpha_2$ are presented in \autoref{fig:conf_test_table}. While the test performs well in detecting scenarios without a confounder and without a causal link (configuration (A)), the performance drops significantly for configurations with a causal link (configuration (B)). The choice $k=\lfloor n^{0.4}\rfloor$ is motivated by overall finite-sample performance but need not satisfy the stricter rate condition of Theorem~\ref{theorem} for every tail-index configuration. In particular, for $(\alpha_1,\alpha_2)=(4,3)$, the theorem requires $\nu<0.4$. This helps explain the size distortions observed for the Confounder-Test. We therefore run an additional simulation study for varying $k$ to assess whether the Confounder-Test also maintains the significance level in the presence of causal links. For configurations (B) and (F), we run the simulation $1{,}000$ times for each $k\in \{3,   6,  10,  15,  25,  39,  63, 100\}$ and $(\alpha_1,\alpha_2)\in \{(3,2),(4,2),(4,3)\}$, $\alpha_H=2$. The results are shown in \autoref{fig:typeI_conftest} and \autoref{fig:typeII_conftest}. As $k$ decreases, the type-I error (configuration (B)) declines and approaches the nominal $\alpha$-level more quickly when the gap between $\alpha_1$ and $\alpha_2$ is larger. Conversely, the type-II error (configuration (F)) decreases as $k$ increases. These findings indicate that the Confounder-Test controls the significance level only when applied sufficiently far into the tails, and that its performance is highly sensitive to the tail indices. Overall, these results indicate a trade-off in choosing $k$. It should be small enough to limit false detection of a heavy-tailed confounder, yet large enough to ensure adequate power to detect one when present.

\autoref{fig:confounder_test_beta_plots} displays Confounder-Test decisions in settings with and without confounding for combinations of tail indices $\alpha_1$ and $\alpha_2$ of $\epsilon_1$ and $\epsilon_2$ and for different causal coefficients $\beta_{H 1}$ and $\beta_{H2}$. In finite samples, test outcomes depend not only on whether $\alpha_1>\alpha_2$ or $\alpha_1\leq \alpha_2$, but also on the magnitude of the difference between the tail indices and on the strength of the causal weights. Larger confounding coefficients increase the likelihood that the Confounder-Test rejects $H_0$.

\begin{figure}[ht]
    \centering

    \begin{subfigure}[b]{\textwidth}
        \centering
        \begin{tabular}{cc|cc|cc|cc}
        \toprule
        \multicolumn{2}{c}{Tail indices} &
        \multicolumn{2}{c}{no conf.} &
        \multicolumn{2}{c}{light-tailed conf.} &
        \multicolumn{2}{c}{heavy-tailed conf.} \\
        $\alpha_1$ & $\alpha_2$ &
        (A) & (B) & (C) & (D) & (E) & (F) \\
        \midrule
        3 & 2 & 0.3   & 33.6  &  6.5  & 49.8 &   --   &   --   \\
        4 & 2 & 0.5   & 20.5  &  5.9  & 37.9 &   --   &   --   \\
        4 & 3 & 0.3   & 62.8  & 19.2  & 85.9 &  87.0 &  99.6 \\
        \bottomrule
        \end{tabular}
        \caption{Empirical rejection rates (in \%) of the Confounder-Test at
        the $5\%$ level for different tail index combinations
        $(\alpha_1,\alpha_2)$ and configurations (A)-(F).}
        \label{fig:conf_test_table}
    \end{subfigure}

    \vspace{0.8em}

    \begin{subfigure}[b]{\textwidth}
        \centering
        \begin{subfigure}[b]{0.48\textwidth}
            \centering
            \includegraphics[width=\textwidth]{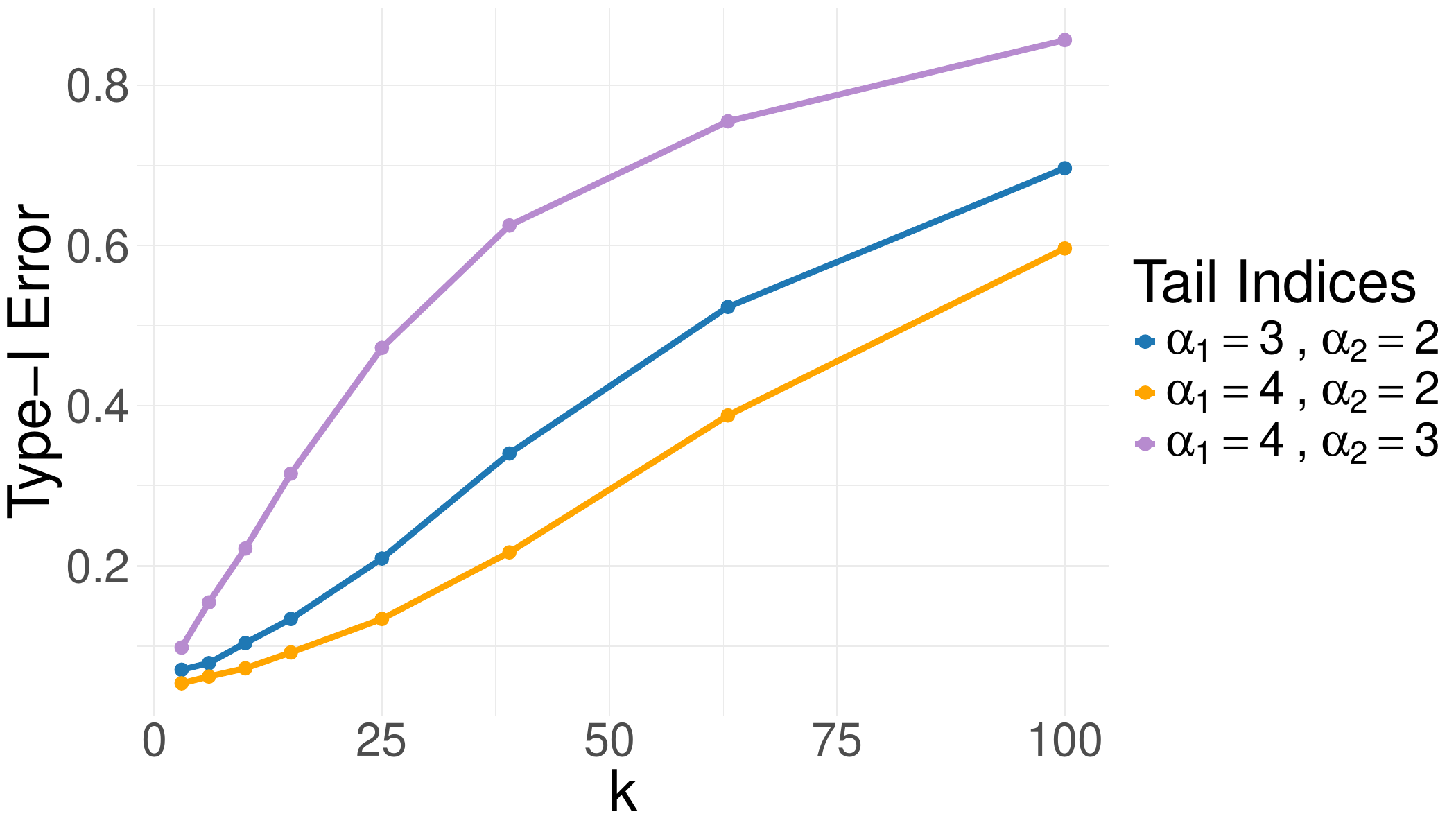}
            \caption{Type-I error based on configuration (B).}
            \label{fig:typeI_conftest}
        \end{subfigure}
        \hspace{0.02\textwidth}
        \begin{subfigure}[b]{0.48\textwidth}
            \centering
            \includegraphics[width=\textwidth]{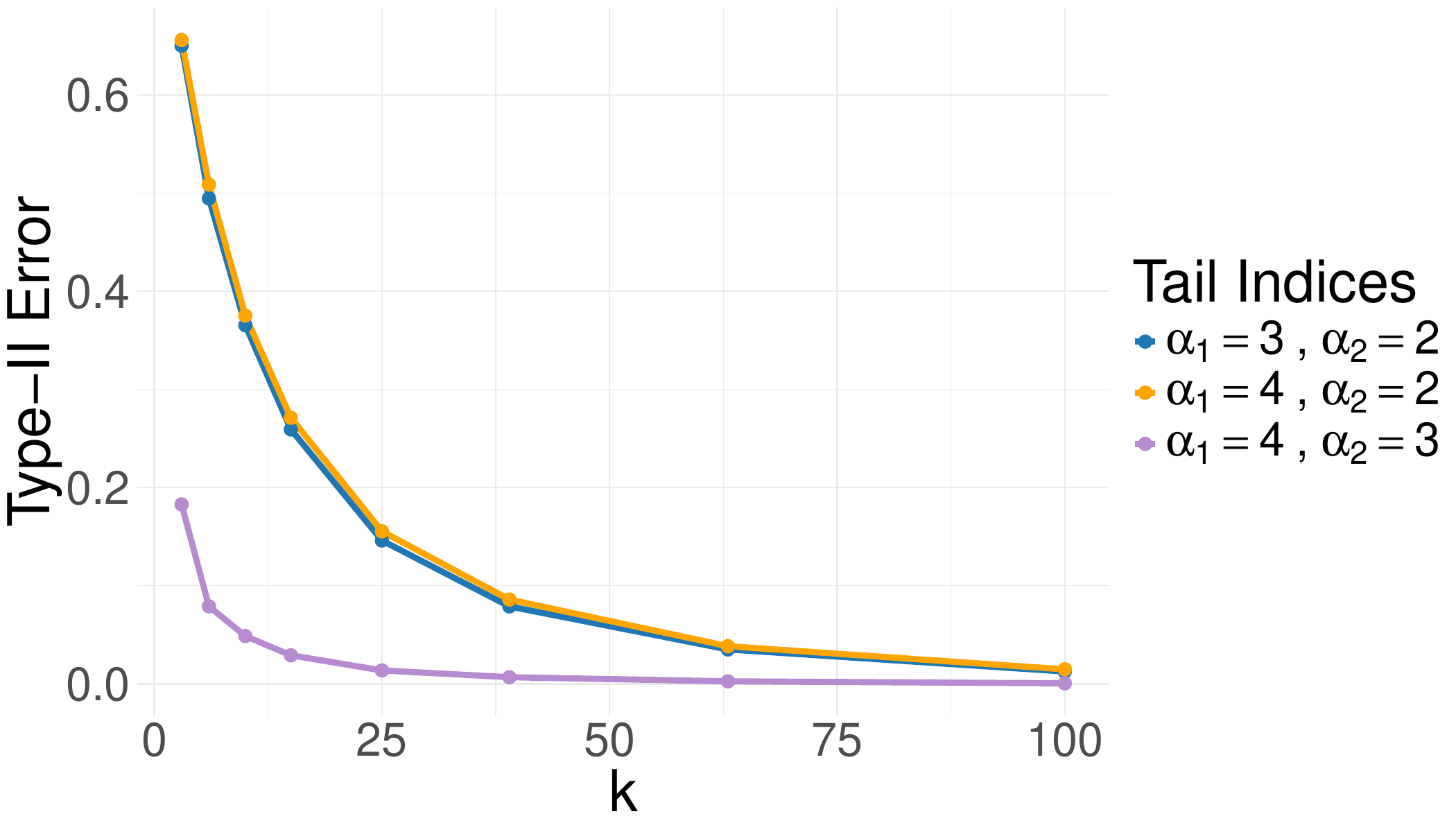}
            \caption{Type-II error based on configuration (F).}
            \label{fig:typeII_conftest}
        \end{subfigure}
    \end{subfigure}

    \caption{Confounder-Test performance: (a) empirical rejection rates at the
    $5\%$ level for different tail index combinations $(\alpha_1,\alpha_2)$ and
    configurations (A)-(F), and (b) type-I and (c) type-II errors as functions of
    $k$ for different tail index combinations $(\alpha_1,\alpha_2)$ and fixed $\alpha_H=2$.}
    \label{fig:confounder_test_overview}
\end{figure}

\begin{figure}[htbp]
    \centering
    \begin{subfigure}[b]{0.48\textwidth}
        \centering
        \includegraphics[width=\textwidth]{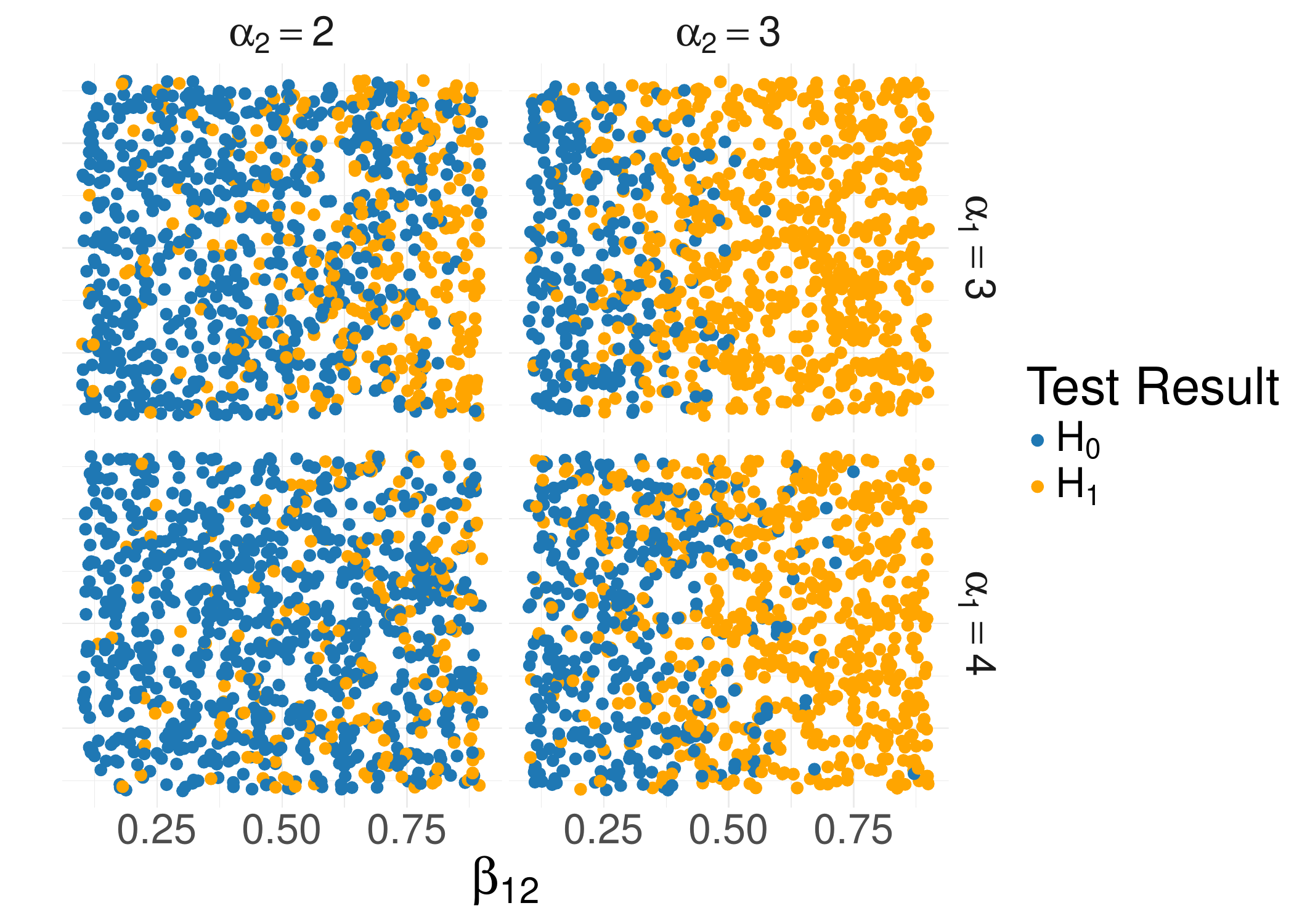}
        \caption{Confounder-Test decisions for configuration (B) and different tail indices $\alpha_1$ and $\alpha_2$ depending on $\beta_{12}$.}
        \label{fig:conftest_beta_plot1}
    \end{subfigure}
    \hspace{0.02\textwidth} 
    \begin{subfigure}[b]{0.48\textwidth}
        \centering
        \includegraphics[width=\textwidth]{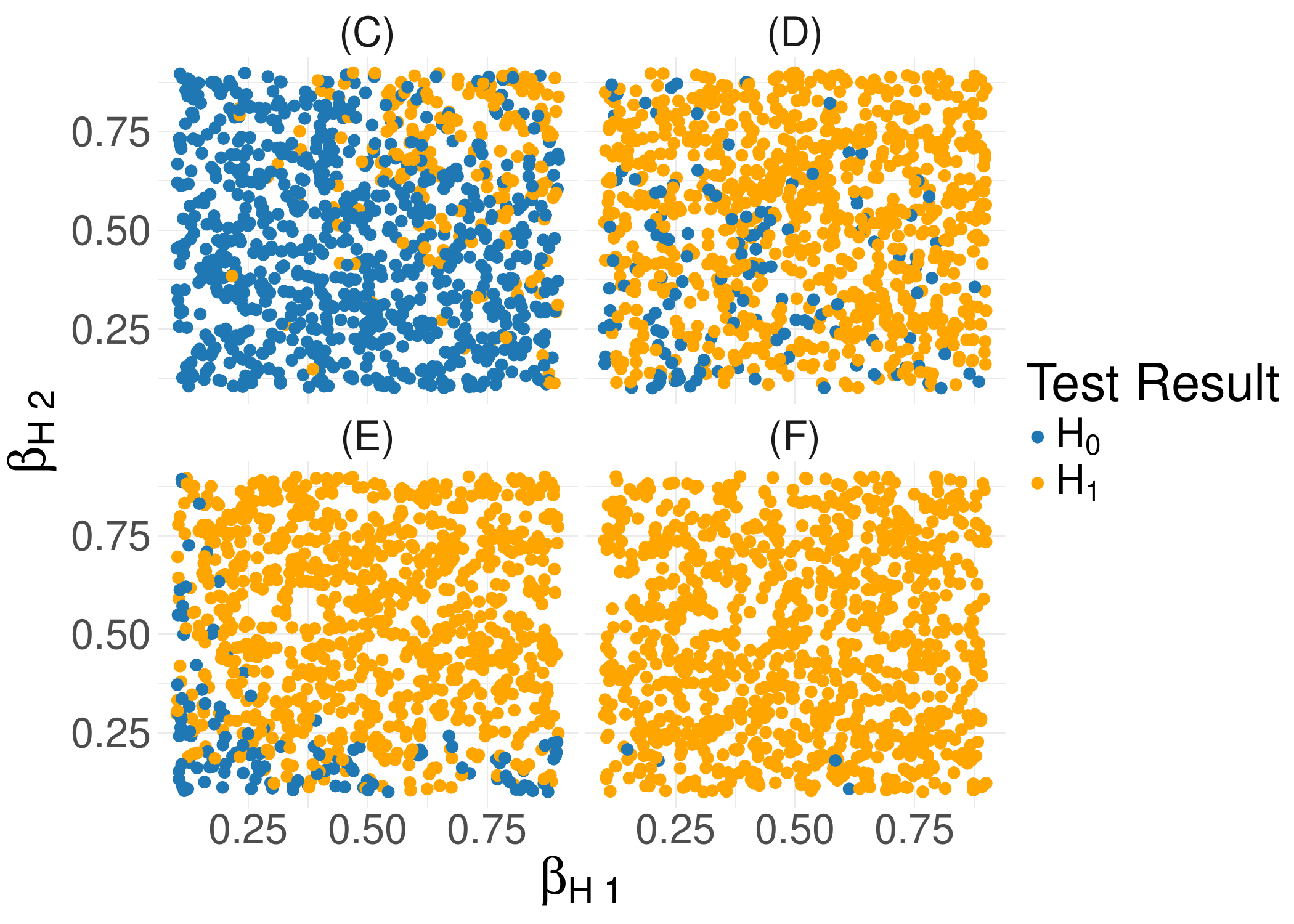}
        \caption{Confounder-Test decisions for configurations (C)-(F) and tail indices $\alpha_1=4$ and $\alpha_2=3$ depending on $\beta_{H1}$ and $\beta_{H2}$}
        \label{fig:conftest_beta_plot2}
    \end{subfigure}
    \caption{Confounder-Test decisions for different tail indices $\alpha_1\geq\alpha_2$ and confounding coefficients $\beta_{H1}$ and $\beta_{H2}$.}
    \label{fig:confounder_test_beta_plots}
\end{figure}

\section{Applications}
\label{sec:application}

In this section, we use the proposed identification schemes of Section~\ref{sec:device} in order to study three distinct application settings that cover diverse fields such as climate and finance. In two of the applications considered, the very nature of the setting suggests that there should exist a specific causal direction. We employ these as baseline scenarios in order to verify if the tail tests work even though standard assumptions such as, e.g., iid observations do not hold. However, the third financial application yields interesting novel insights in the detected causal direction for which neither previous empirical evidence nor theoretical considerations provide a clear causal direction. All of the applications share that we investigate extreme causal relations for quantities that we show to have heavy tails that do not necessarily coincide in their tail behavior (see Figure \ref{fig:histograms}).

We estimate the CTC, $\Delta_{X_1\rightarrow X_2}$ and the critical value $\Delta_{0.05}$ based on $k=\lfloor n^{0.4}\rfloor$ observations, following the tuning-parameter choice discussed in Section~\ref{subsec:best_k}. Following this initial estimation, we proceed with the tests for causality and confounder specified in Section~\ref{sec:device}. Moreover, we analyze the finite-sample behavior of the CTC estimates. To do so, we fix the value $k$ and vary the sample size through bootstrap resampling from the full dataset. For each sample size, we draw bootstrap samples, repeat the estimation $100$ times, and plot the resulting mean CTC estimates. This provides insight into the stability of the CTC estimates as the analysis moves further into the tails, which is particularly relevant in applications with limited sample sizes. Results are presented for $k\in \{\lfloor n^{0.4}\rfloor,\lfloor n^{0.5}\rfloor \}$, because smaller values of $k$ typically lead to higher estimator variance, whereas larger values may include too many non-extreme observations and therefore fail to capture tail behavior accurately. Additionally, we present density histograms of the Causality-Test bootstrap samples to quantify the uncertainty in the estimates and to show the location of the estimate $\hat{\Delta}_{X_1 \to X_2}$.

As a benchmark for the results we use the LiNGAM algorithm (see Section~\ref{sec:simulation}). The results are then interpreted with respect to the existing literature in the considered application cases.

\subsection{Climate Extremes}
\subsubsection*{Precipitation and Train Delays in Switzerland}

In this study, we analyze the influence of extreme precipitation on train delays in Switzerland. In this context, the direction of potential causality is unambiguous, as it is implausible that train delays affect precipitation. The influence of precipitation on train delays is a matter of public interest and has been previously assessed. \cite{palmqvist2023train} study the cause of train delays in Sweden, based on delay attribution codes provided by the railway service. They find that especially storms have a significant impact. Moreover, \cite{brazil2017weather} demonstrate that rain is the primary factor causing delays in the train system in Dublin. In their studies they employ simple regression models to analyze the effect of rainfall on train delays, primarily focusing on the central part of the distribution rather than on the extremes. Consequently, the hypothesis that extreme precipitation affects train delays in Switzerland appears plausible. 

We focus on the Swiss railway system that is internationally known for its punctuality. Thus investigating whether even extreme weather events can lead to significant delays does not suffer from hidden confounding effects (For other countries this might be an implausible assumption). To further avoid confounding effects from within system priorities, we study the primary rail route in Switzerland, namely the line from Zurich to Bern. The expected and actual times of departure and arrival of trains are retrieved from \textit{opentransportdata.swiss}. We extract the train delays by calculating the difference between the expected and the actual durations. Following a thorough data analysis, we restrict our focus to the summer months (May to August), as this period exhibits the highest levels of precipitation. Additionally, we exclude observations in which intermediate stops between Zurich and Bern are missing, or where the train does not complete the full route, likely due to construction work or similar disruptions. The analysis further reveals that Saturdays, Sundays and public holidays exhibit significantly lower delay times. Therefore, these days are also excluded from the dataset. The final dataset comprises 3,994 observations between May 2021 and July 2024. The precipitation data from the Zurich weather station is provided by \textit{MeteoSchweiz}. For each departure time, we extract the corresponding hour and match it to the precipitation in Zurich. In addition, we also compute the sum of precipitation for the last three and six hours. 

We apply the tail tests described in Section~\ref{sec:device}. The results of the Causality- and Confounder-Tests reported in \autoref{tab:ctc_train} show that, for both precipitation variables, the estimated asymmetry $\hat{\Delta}_{precipitation \to delay}$ exceeds $\Delta_{0.05}$ and that there is no evidence of a heavy-tailed confounder. Since the Causality-Test is significant in both cases, we conclude that there is evidence of a causal tail effect for $precipitation \to delay$ and $precipitation\,3h \to delay$.

In addition, \autoref{tab:tail_train} reports the results of the Tail Index-Test together with the corresponding Hill estimates. Although delay exhibits a heavier tail than precipitation, the difference is not large enough to reject the null hypothesis of equal tail indices.

\begin{table}[ht]
    \small
    \centering
    \begin{tabular}{lcccccc}
              & \multicolumn{4}{c}{\textbf{Causality-Test}}      & \multicolumn{2}{c}{\textbf{Confounder-Test}}\\
        \cmidrule(lr){2-5}  \cmidrule(lr){6-7}
     Cumulated Hours & $\hat{\Gamma}_{prec.\rightarrow delay}$   & $\hat{\Gamma}_{delay \rightarrow prec.}$ & $\hat{\Delta}_{prec.\rightarrow delay}$ & $\Delta_{0.05}$ &$t_{C}$  & Decision \\
         \midrule
      Precipitation 1h&  0.6686 (0.6670)& 0.5003 (0.4618)& 0.1683 (0.2053) & 0.1516 (0.1285) & 0.0013  &$H_0$\\
      Precipitation 3h&  0.7031 (0.7033)& 0.5004 (0.4734)& 0.2026 (0.2298) & 0.1458 (0.1256) &  0.0023  &$H_0$\\
      \bottomrule
    \end{tabular}
    \caption{Estimates for the Causality-Test and the Confounder-Test for train delays and precipitation, using current precipitation and precipitation accumulated over the previous three hours, with $k=\lfloor n^{0.4}\rfloor=27$. In the implementation of the CTC estimator, for ties in ranks we use the convention that from the interval of possible ranks each rank is assigned according to the order in which observations appear in the sample. Theoretically any value within the interval of ranks is possible and asymptotically there is no difference which convention is chosen. For comparison, we provide averages of tie-ranks in parentheses.}
    \label{tab:ctc_train}
\end{table}

\begin{table}[ht]
    \small
    \centering
    \begin{tabular}{lcccc}
        & \multicolumn{4}{c}{\textbf{Tail Index-Test}}  \\
        \cmidrule(lr){2-5}
        Cumulated Hours & $\hat{\gamma}_{precipitation}$ & $\hat{\gamma}_{delay}$ 
        & $t_{I}$  & Decision    \\
    \midrule
    Precipitation 1h & 0.3074 & 0.3812 & 1.3284 & $H_0$ \\
    Precipitation 3h & 0.1822 & 0.3926 & 7.1617 & $H_0$\\
    \bottomrule
    \end{tabular}
    \caption{Tail Index-Test results with corresponding Hill estimates.}
    \label{tab:tail_train}
\end{table}

The relationship between precipitation accumulated over the previous three hours and delay appears to be more significant than the relationship between current precipitation and train delays. In this case, the CTC estimate exceeds $0.7$, which aligns with the results of the simulation study on the finite-sample rate of convergence in Section~\ref{subsec:sim_sample_rate}. Consequently, the subsequent focus is on the relationship between delay and precipitation accumulated over the previous three hours, while the results for the current precipitation are reported in Appendix~\ref{app:add_figures}.

\autoref{fig:train} illustrates the behavior of the CTC estimates as the sample size increases, with fixed values of $k$ set to $\lfloor n^{0.4}\rfloor = 27$ and $\lfloor n^{0.5}\rfloor = 63$. We observe that $\hat{\Gamma}_{delay\rightarrow precipitation\,3h}$ approaches $0.5$ and $\hat{\Delta}_{precipitation\,3h \to delay}$ increases with the sample size. While there is evidence of causality in the tails of the data, this relationship does not appear to hold consistently across the entire distribution. The classical LiNGAM algorithm does not detect a causal link between precipitation and train delays, neither for current precipitation nor for precipitation aggregated over the past three hours. In contrast, pairwise LiNGAM identifies the correct causal direction from precipitation to delay.

\begin{figure}[htbp]
    \centering
    \begin{subfigure}[b]{0.49\textwidth}
        \centering
        \includegraphics[width=\textwidth]{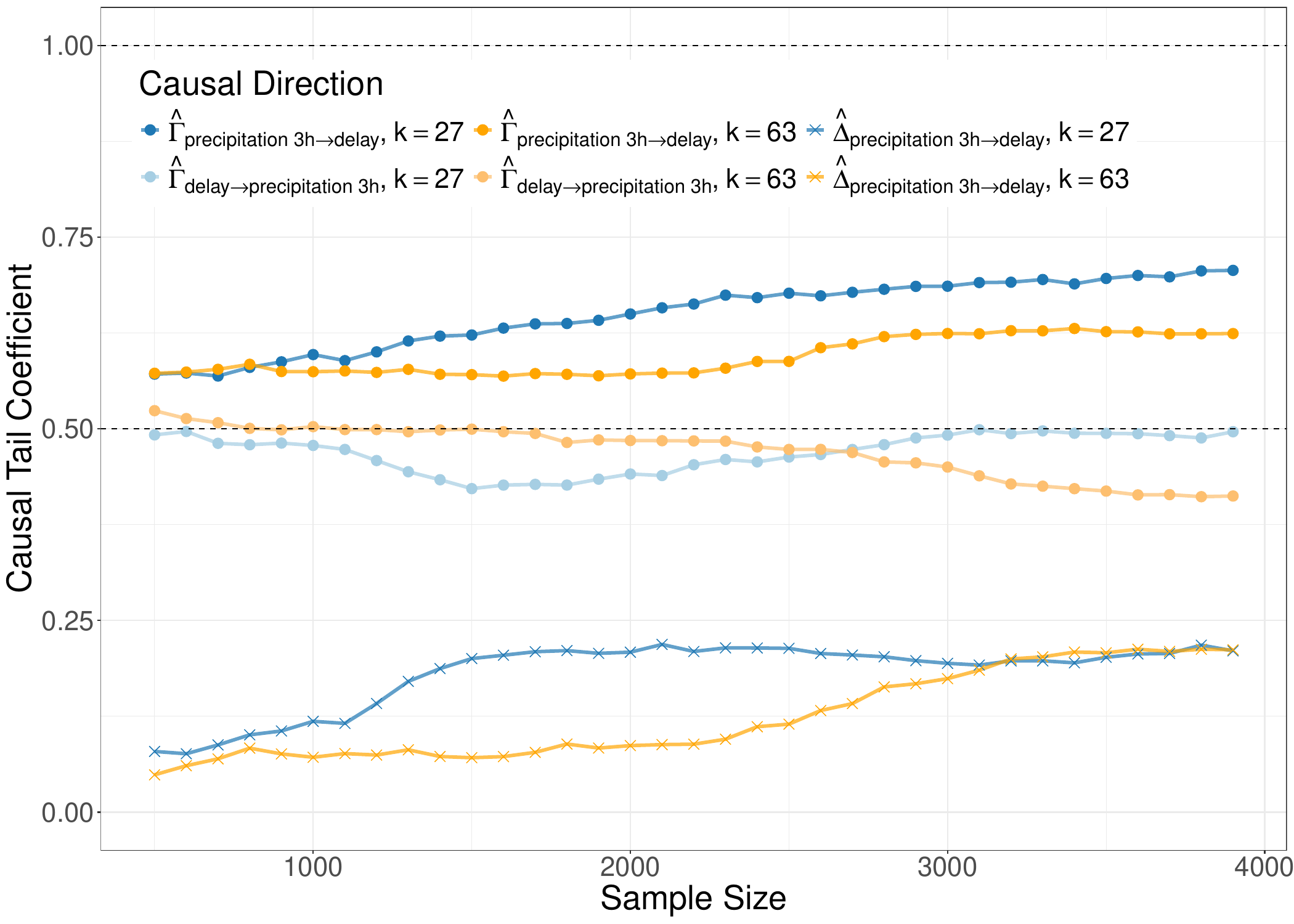}
    \end{subfigure}
    \begin{subfigure}[b]{0.49\textwidth}
        \centering
        \includegraphics[width=\textwidth]{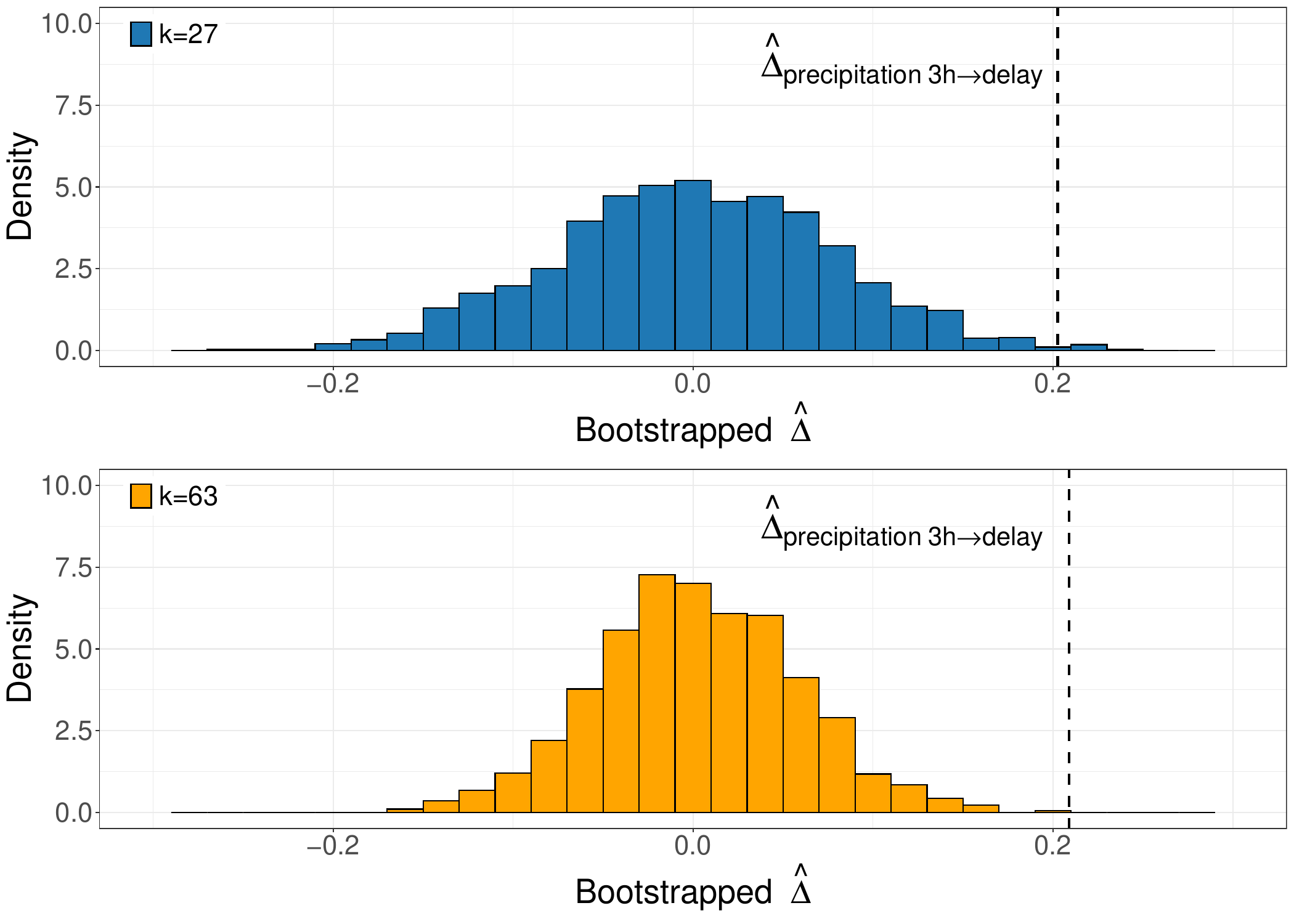}
    \end{subfigure}
    \caption{CTC estimates for the causal relation between precipitation accumulated over the previous three hours and train delays. The left panel shows the behavior of the CTC estimator for $k\in\{27,63\}$ as the sample size varies. The right panel shows the bootstrap distribution of the Causality-Test statistic together with the estimated value of $\Delta_{\mathrm{precipitation\,3h}\to\mathrm{delay}}$.
    }
    \label{fig:train}
\end{figure}

Based on the above findings, we conclude a causal relationship between extreme precipitation and extreme delays. This finding is consistent with the literature, supporting the hypothesis that extreme precipitation leads to significant delays in the train connection from Zurich to Bern. 

\subsubsection*{Precipitation and River Flows}

In the next application, the tail tests are used to analyze a scenario in which the causal direction is known a priori, while the timing is not. In this instance, the influence of daily precipitation (sum in mm) on daily mean river discharge (in $m^3/s$) is examined, as well as an investigation of the delays of flooding impact. The observed river discharges come from \textit{Bayerisches Landesamt für Umwelt, www.lfu.bayern.de}, while the second dataset is derived from \textit{Deutscher Wetterdienst} for the daily precipitation. We consider two of the largest rivers in Germany, both of which have experienced severe flood events causing substantial damage in the past \citep{bloschl2013june, merz2014extreme}. The first is the Danube, observed at the gauging station Passau/Ilzstadt with the corresponding weather station Passau-Oberhaus. The second is the Main, observed at the gauging station in Würzburg, where the associated weather station is also located.

Based on a preliminary data analysis, we identify May to August as the months with the highest precipitation levels, consistent with the first application scenario. For the Danube, the dataset comprises $3{,}072$ observations collected between May 1999 and August 2023. For the Main, $2{,}948$ observations are available for the period from May 2000 to August 2023.

For simplicity and to illustrate the use of the causal tail tests, we assume that a single nearby weather station captures the relevant extreme precipitation signal for river discharge. Previous studies by \citet{pasche2023causal}, \citet{mhalla2020causal}, and \citet{gnecco2021causal} have demonstrated that upstream river basins exert a significant influence on the downstream river discharge. We observe the same effect at a later stage of this analysis. Accordingly, we include the upper Main river catchment as a confounder in the estimation. Moreover, our analysis focuses on the effect of precipitation on the previous day on river discharge on the following day. A more detailed investigation of the time lag between precipitation and river discharge is presented later. Given the proximity of the weather station to the discharge measuring station in both cases, we assume that a 24-hour period is sufficient for precipitation-driven runoff to reach the river.

\begin{table}[ht]
    \small
    \centering
    \begin{tabular}{lcccccc}
          & \multicolumn{4}{c}{\textbf{Causality-Test}} & \multicolumn{2}{c}{\textbf{Confounder-Test}} \\
        \cmidrule(lr){2-5}  \cmidrule(lr){6-7}
     River & $\hat{\Gamma}_{precipitation\rightarrow river}$   & $\hat{\Gamma}_{river \rightarrow precipitation}$ & $\hat{\Delta}_{precipitation\rightarrow river}$ & $\Delta_{0.05}$ &$t_{C}$  & Decision\\
      \midrule
      Danube&  0.7359 (0.7355)& 0.4642 (0.5298)&0.2717 (0.2057) & 0.1971 (0.1585)& -0.1752  & $H_0$\\
      Main&  0.8559 (0.8543)& 0.6700\, (0.6807)& 0.1860 (0.1736) & 0.1373 (0.1272)& 0.8853   & $\boldsymbol{H_1}$\\
      \bottomrule
    \end{tabular}
    \caption{Estimates and test results for the Causality-Test and Confounder-Test for precipitation and river discharge, using $k=\lfloor n^{0.4}\rfloor=24$ observations for both the Danube and the Main. For tie ranks in the CTC estimate the same convention as in Table \ref{tab:ctc_train} applies with average ranks in parentheses.}
    \label{tab:ctcs_river}
\end{table}

\begin{table}[ht]
    \centering
    \small
    \begin{tabular}{lcccc}
        & \multicolumn{4}{c}{\textbf{Tail Index-Test}} 
         \\
        \cmidrule(lr){2-5}
        River & $\hat{\gamma}_{precipitation}$ & $\hat{\gamma}_{river}$ 
        & $t_{I}$  & Decision   \\
        \midrule
        Danube   & 0.0879  & 0.2195  & 316.4025  & $\boldsymbol{H_1}$   \\
        Main     & 0.1038  & 0.3548  & 205.7330  & $\boldsymbol{H_1}$   \\
        \bottomrule
    \end{tabular}
    \caption{Tail Index-Test results with corresponding Hill estimates.}
    \label{tab:tails_riverflow}
\end{table}

\autoref{tab:ctcs_river} shows that, for both rivers, $\hat{\Delta}_{precipitation \to river} > \Delta_{0.05}$, allowing us to reject the null hypothesis of the Causality-Test. However, for the Main, the Confounder-Test is also significant, indicating the possible presence of a heavy-tailed confounding factor. We assess this further using the Tail Index-Test reported in \autoref{tab:tails_riverflow}, which rejects the null hypothesis of equal tail indices, with $\hat{\gamma}_{river} > \hat{\gamma}_{precipitation}$. Since the Hill estimator estimates $1/\alpha$, this suggests that river discharge has a heavier estimated tail than precipitation. We therefore conclude that there is evidence of a causal tail effect from precipitation to the Danube. For the Main, however, potential confounding must be taken into account. To address this, we control for the upstream Main catchment at Schweinfurt Neuer Hafen and apply the adjusted CTC estimator $\hat{\Gamma}_{X_1\to X_2\mid H}$ from \eqref{eq:ctc_confounder}. The resulting test outcomes are reported in \autoref{tab:ctcs_river_conf}.

\begin{table}[ht]
    \small
    \centering
    \begin{tabular}{cccccc}
         \multicolumn{4}{c}{\textbf{Causality-Test with confounder adjustment}} & \multicolumn{2}{c}{\textbf{Confounder-Test}} \\
        \cmidrule(lr){1-4}  \cmidrule(lr){5-6}
    $\hat{\Gamma}_{prec.\rightarrow Main | Main\, up}$   & $\hat{\Gamma}_{Main \rightarrow prec. | Main\, up}$ & $\hat{\Delta}_{prec.\rightarrow Main | Main\, up}$ & $\Delta_{0.05}$ &$t_{C}$  & Decision\\
          \midrule
 0.7774 (0.7773) &0.4388 (0.4338) &0.3387 (0.3435)   &0.1719 (0.1722) &-0.3000 & $H_0$\\
 \bottomrule
    \end{tabular}
    \caption{Estimates and test results for the Causality-Test with confounder adjustment and the Confounder-Test for the Main. CTC estimates are computed using $k=\lfloor n^{0.4}\rfloor=24$ observations. For tie ranks in the CTC estimate the same convention as in Table \ref{tab:ctc_train} applies with average ranks in parentheses.}
    \label{tab:ctcs_river_conf}
\end{table}

Using the adjusted CTC, we no longer reject the null hypothesis in the Confounder-Test, while the Causality-Test remains significant. This leads us to conclude that precipitation has a causal influence on the Main catchment.

These findings are further supported by \autoref{fig:donau} and \autoref{fig:main}, which show a clear upward trend in $\hat{\Delta}_{precipitation \to river}$ and convergence of $\hat{\Gamma}_{river \to precipitation}$ toward $0.5$. For the Main, we use the adjusted CTC version given in \eqref{eq:ctc_confounder}.

\begin{figure}[ht]
    \centering
    \begin{subfigure}[b]{0.49\textwidth}
        \centering        \includegraphics[width=\linewidth]{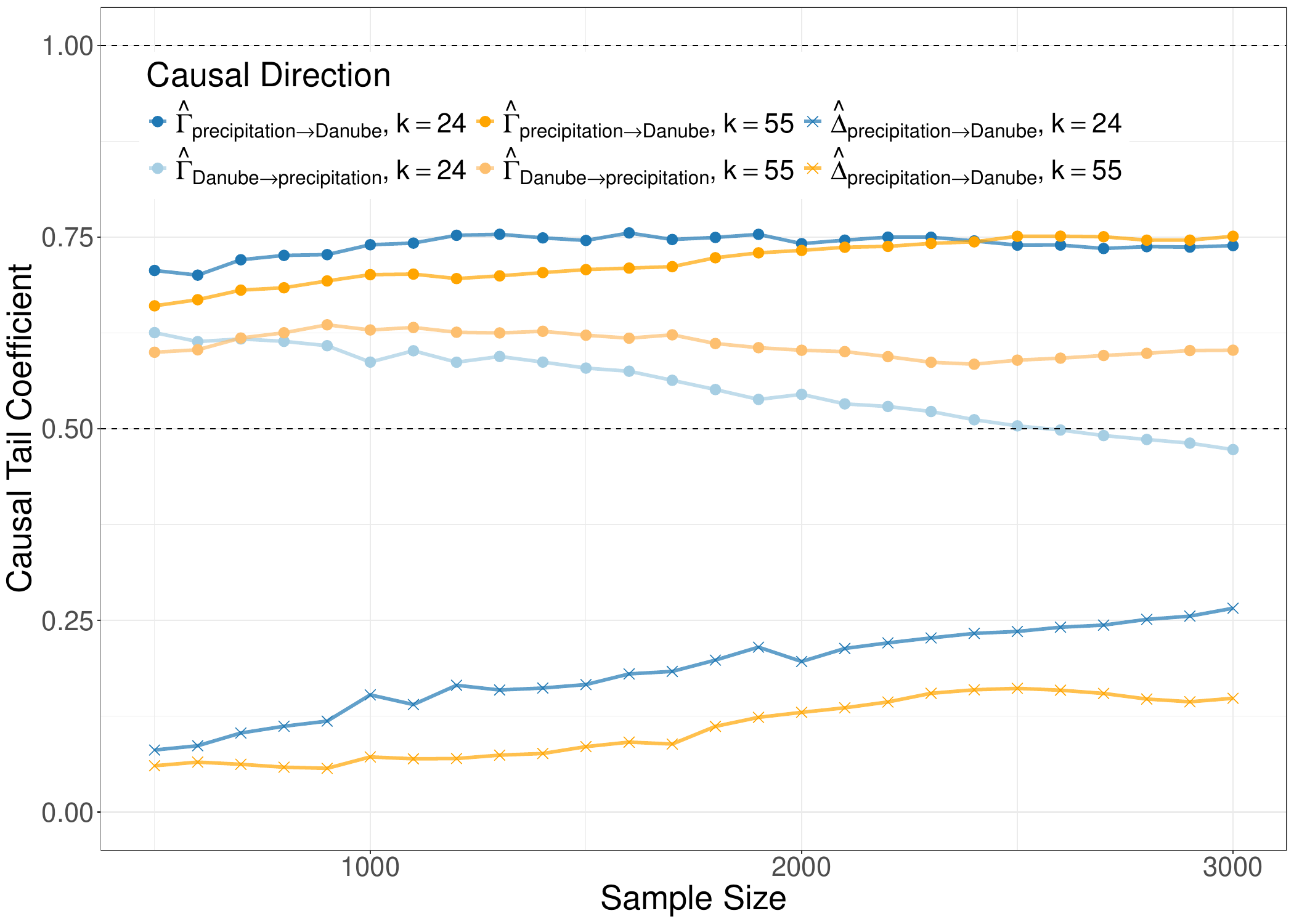} 
    \end{subfigure}
    \begin{subfigure}[b]{0.49\textwidth}
        \centering
        \includegraphics[width=\linewidth]{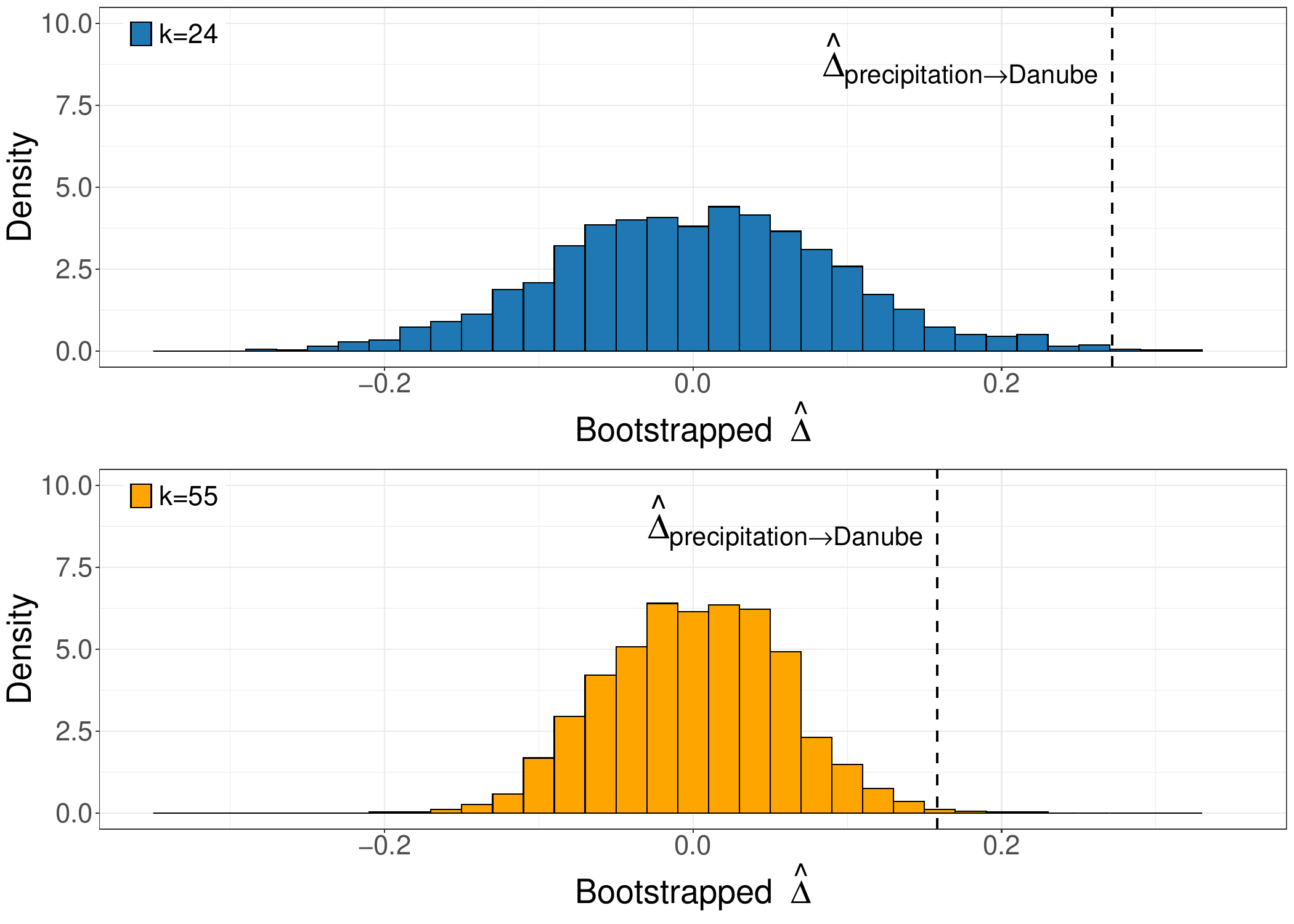}
    \end{subfigure}
    \vskip\baselineskip
    \caption{CTC estimates for the causal relation between precipitation at the Passau weather station and Danube discharge at Passau/Ilzstadt. The left panel shows the behavior of the estimator for $k\in\{24,55\}$ as the sample size varies. The right panel shows the bootstrap distribution of the Causality-Test statistic together with the estimated value of $\Delta_{precipitation\to Danube}$.
    }
    \label{fig:donau}
    \begin{subfigure}[b]{0.49\textwidth}
        \centering
        \includegraphics[width=\linewidth]{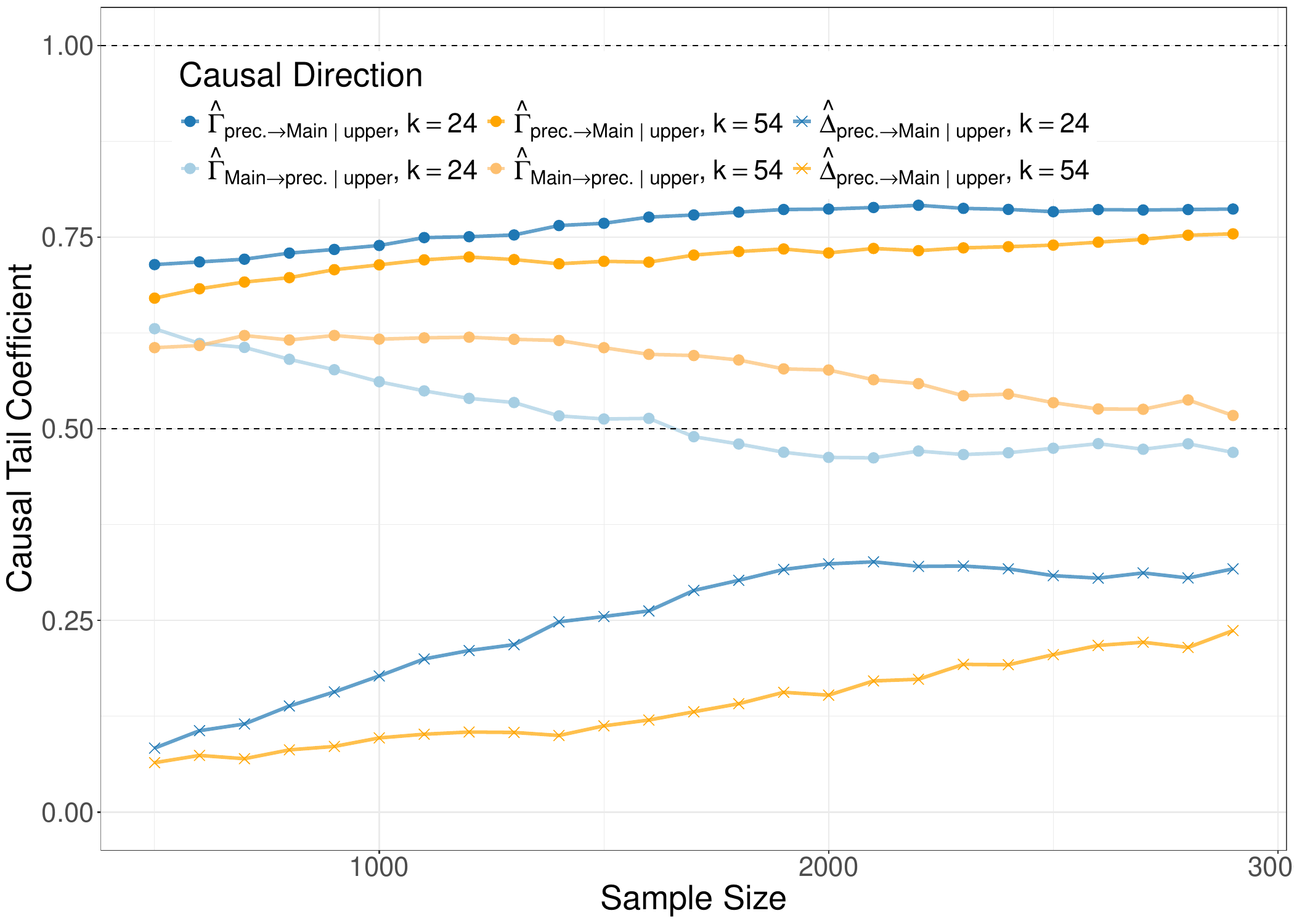} 
    \end{subfigure}
    \begin{subfigure}[b]{0.49\textwidth}
        \centering
      \includegraphics[width=\linewidth]{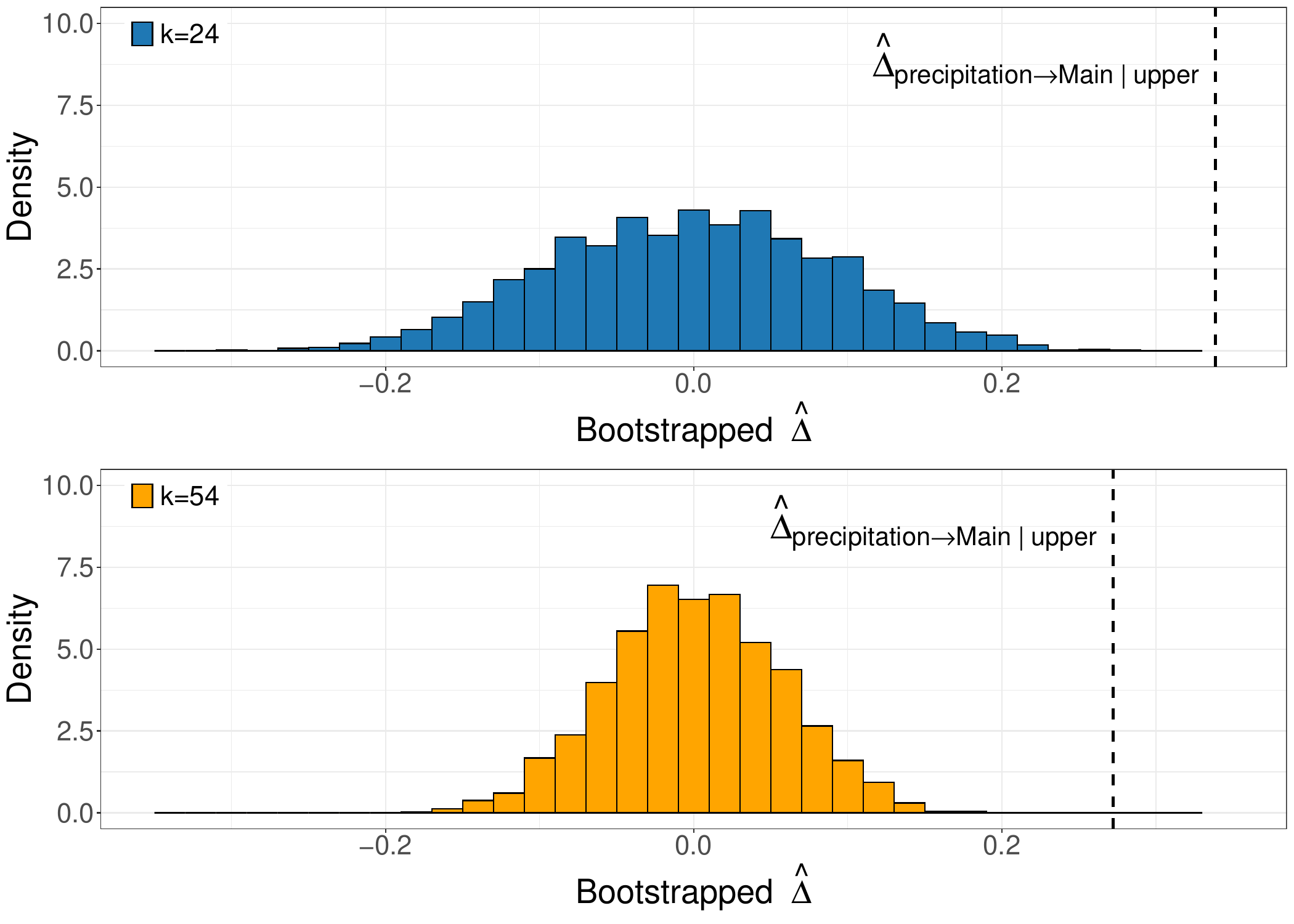} 
    \end{subfigure}
    \vskip\baselineskip
    \caption{Adjusted CTC estimates for the causal relation between precipitation at the Würzburg weather station and Main discharge, controlling for the upstream Main catchment. The left panel shows the behavior of the estimator for $k\in\{24,54\}$ as the sample size varies. The right panel shows the bootstrap distribution of the Causality-Test statistic together with the estimated value of $\Delta_{precipitation \to Main}$.}
    \label{fig:main}
\end{figure}

\begin{table}[ht]
    \small
    \centering
\begin{tabular}{cccccccc}
        & \multicolumn{4}{c}{\textbf{Causality-Test}} 
        & \multicolumn{2}{c}{\textbf{Confounder-Test}} & \\
        \cmidrule(lr){2-5} \cmidrule(lr){6-7}
Time Lag& $\hat{\Gamma}_{prec. \rightarrow river}$   & $\hat{\Gamma}_{river \rightarrow prec.}$ & $\hat{\Delta}_{prec. \rightarrow river}$ & $\Delta_{0.05}$ & $t_{C}$ & Decision & LiNGAM\\
\midrule
    \multicolumn{8}{c}{Danube} \\
\midrule
1& 0.7359 & 0.4642 & \textbf{0.2717} & 0.1971 & -0.1752& $H_0$ &25.5509\\
2& 0.7544 & 0.5763 & 0.1780 & 0.1796 & 0.3740  & $H_0$ &29.0021\\
3&0.7260 & 0.6298 & 0.0962 & 0.1819 & 0.6359 & $H_1$ &25.3989\\
4&0.6866 & 0.6841 & 0.0025 & 0.1732 & 0.9020 & $H_1$ &21.2748\\
5&0.6726 & 0.6531 & 0.0195 & 0.1809 & 0.7499 & $H_1$ &17.5819 \\
6&0.6390 & 0.6351 & 0.0039 & 0.1727 & 0.6617 &$H_1$ &13.9217\\
7&0.6575 & 0.5573 & 0.1002 & 0.1830 & 0.2806  &$H_0$ &12.6842\\
8&0.6484 & 0.5594 & 0.0890 & 0.1781 & 0.2912 &$H_0$ &12.4506\\
\midrule
    \multicolumn{8}{c}{Main with confounder adjustment} \\
\midrule
1 & 0.7774 & 0.4388 & \textbf{0.3387} & 0.1719 & -0.3000& $H_0$& 2.1099\\
2 & 0.7617 & 0.6013 & 0.1604 & 0.2136 &0.4963& $H_0$& 3.0330\\
3 & 0.7567 & 0.5100 & \textbf{0.2467} & 0.1819 &0.0490& $H_0$ &2.9874\\
4 & 0.7260 & 0.7888 & -0.0628 & 0.1805 &1.4148& $H_1$& 2.5189\\
5 & 0.7169 & 0.8474 & -0.1304 & 0.2041  &1.7019& $H_1$& 2.0399\\
6 & 0.6554 & 0.7905 & -0.1351 & 0.1922&1.4232& $H_1$& 1.6722\\
7 & 0.6286 & 0.7977 & -0.1690 & 0.2050 &1.4584& $H_1$& 0.0000\\
8 & 0.6076 & 0.6045 & 0.0031 & 0.2127 &0.5119& $H_1$& 0.0000\\
\bottomrule
    \end{tabular}
    \caption{Test results for the Causality-Test and Confounder-Test, together with estimated structural weights from LiNGAM, for different time lags between precipitation and river discharge. Bold values indicate cases in which $\hat{\Delta}_{prec. \to river}$ exceeds the critical value.}
    \label{tab:riverflows_lags}
\end{table}

After analyzing the causal relationship between precipitation on the previous day and river discharge on the following day, we now examine causal links across various time lags between precipitation and river discharge. We apply the CTC-based testing strategy, using the adjusted version in \eqref{eq:ctc_confounder} for the Main, and compare the results with those obtained from the LiNGAM algorithm, which captures causal structure in the full distribution. The results are given in \autoref{tab:riverflows_lags}, while results of the Tail Index-Test remain unchanged from \autoref{tab:tails_riverflow}. 

As the lag increases, the discrepancy between $\hat{\Gamma}_{precipitation\rightarrow river}$ and $\hat{\Gamma}_{river \rightarrow precipitation}$ diminishes. For the Main, the estimates of $\Delta_{precipitation \to river}$ even become negative at higher lags. While LiNGAM detects causal connections between one and eight days for the Danube and one and six days for the Main, we also observe that its causal effect decreases with increasing lag. We therefore conclude that, although an effect of precipitation on river discharge may still be observed several days later in the full distribution, this persistence is less evident in the extreme tail, where the effect appears to diminish more quickly.

Overall, both the existence and direction of causality are demonstrated in this example using the tail tests. However, the results obtained for lags exceeding one do not closely align with the theoretical findings presented in Section~\ref{sec:theory}. Potential reasons for this discrepancy may include one or more of the following. In contrast to other estimation frameworks, which use a gridded precipitation dataset to predict river discharge (\cite{mcmillan2010impacts}, \cite{muller2003estimation}), our study observes precipitation at a single weather station. Moreover, it is important to note that extreme river discharges do not typically result from a single high precipitation day. Instead, they are frequently the result of a series of days with elevated precipitation (\cite{tuel2021climatology}, \cite{doswell1996flash}), as evidenced in \autoref{tab:riverflows_lags}. This leads to dependent observations, thereby violating the assumption of independence. The different days of precipitation confound each other, where these confounders have the same tail indices as the cause but different causal coefficients (compare with \eqref{eq:ctc_val_conf}). Furthermore, the high uncertainty observed in \autoref{fig:donau} and \autoref{fig:main} for $\hat{\Delta}_{precipitation \to river}$ may be a contributing factor. 

Conversely, the estimates of the two directions for a lag of one day are significantly disparate, thereby underlining the assumption of a causal effect between the extreme precipitation measurements at the selected weather stations and the river discharges nearby. This effect is also observed when the lag exceeds one day, a finding that is corroborated by the results obtained using LiNGAM. This indicates that, in this example, causality is present not only in the tails of the data but throughout the entire dataset. Furthermore, the work of \citet{beven2012rainfall} provides a detailed explanation of how the catchment area influences the relationship between rainfall and river discharge, clarifying the different results observed for the Danube and the Main.

\subsection{Financial Extremes}
\subsubsection*{Financial Stock Markets and Cryptocurrencies}
Subsequent to the initial two applications that demonstrate the capacity of the causal tail tests to recover the causal structure of applications where tails are disparate and confounders are present, an application is now analyzed where the causal direction is not clear from the beginning. Therefore, the open and close prices of the S\&P 500 from \textit{Yahoo Finance}, as well as Bitcoin (BTC) prices from \textit{coinmetrics.io}, are observed. The influence of S\&P 500 on Bitcoin and vice versa has been previously examined in various studies. \cite{wang2020relationship} demonstrate in their work that the influence of S\&P 500 on Bitcoin is stronger than the influence of Bitcoin on S\&P 500, based on a Vector Autoregressive (VAR) model. Moreover, \cite{muglia2019bitcoin} utilize Dynamic Model Averaging to assert that Bitcoin has no predictive power over the S\&P 500 index, while \cite{kjaerland2018analysis} report a positive impact of S\&P 500 on the price of Bitcoin, employing Autoregressive Distributed Lag and Generalized Autoregressive Conditional Heteroscedasticity (GARCH) models. \cite{nguyen2022correlation} uses a VAR-GARCH model and focuses on periods of uncertainty, such as the time of the pandemic, and observed a significant influence of S\&P 500 returns on Bitcoin returns. The study also demonstrates that the stock market and Bitcoin are more correlated during periods of turmoil. In contrast to the extant literature, which has focused on the entire data set, we analyze the tails of the data to extract the causal relationship.

In the following, we apply the causal tail testing strategy to the return data, with the objective of comparing the outcomes with the extant literature. The data is analyzed for the period from 18 July 2010 to 4 December 2024, with Bitcoin prices recorded daily while S\&P 500 stocks are only traded on working days. From the raw data the log returns of S\&P 500 and Bitcoin are extracted for each time step $i$ with 
\begin{equation*}
    return(i)= \log\left(\frac{close\,price (i)}{open\, price(i)}\right),\, i=1,\ldots,n.
\end{equation*}
 To account for the temporal dependence, we fit a time-series model to the data and work with the residuals as proposed in \cite{mcneil2015quantitative}. For the S\&P 500, we use an ARMA-GARCH model, since it captures key stylized features of financial time series and is a standard model for index data. For Bitcoin returns, we use a model from the same ARMA-GARCH family to treat both series consistently within the same modeling framework\footnote{While other models may better capture the dynamics of cryptocurrencies than GARCH-type models (see for example \citet{chu2017garch} and \citet{catania2019forecasting}), we remain within the GARCH family to ensure methodological consistency across the analyses.}
 To analyze both variables jointly, we remove Bitcoin residuals corresponding to weekends and public holidays on which the S\&P 500 is not traded. The final dataset comprises $3{,}620$ observations.

 Since both S\&P 500 and Bitcoin returns may be influenced by general market volatility and other factors such as broader market movements, we consider these factors as observed confounders to work with the adjusted CTC from \eqref{eq:ctc_confounder}. Specifically, we consider as the confounding vector $H$ the residuals of the CBOE Volatility Index (VIX) from \textit{Yahoo Finance} and the residuals of the MSCI Europe Index (MSCI Europe) from \cite{msci2024}. The residuals are computed using the same procedure outlined earlier. Since the VIX represents the future stock market volatility over the next 30 days \citep{whaley2009understanding}, we lag the data by one trading day, to exclude any predictive power it might have to confound the relationship between S\&P 500 and Bitcoin. 

 We apply the tail tests to the residuals obtained from the fitted models. For comparison, we also apply the tests to the original data. However, this yields no substantial differences in the results.

\autoref{tab:ctc_finance} presents both the left tail and the right tail results of the Causality- and Confounder-Test. As also shown in \autoref{tab:tails_finance}, the left tail of Bitcoin is heavier than its right tail and generally the left tail is also rather the focus of financial risk assessment. According to Assumption \ref{assumption1}, our theoretical setup therefore focuses on the left tail and applies directly in this setting. For completeness, however, we conduct the analysis for both tails to investigate whether the inferred causal patterns differ between positive and negative extremes. 

For the right tail, the results in \autoref{tab:ctc_finance} indicate a significant causal tail effect from the S\&P 500 to Bitcoin for extreme positive returns (for more details also see Appendix~\ref{app:add_figures}.). For the left tail, however, the Confounder-Test rejects the null hypothesis, while the Causality-Test does not reveal a significant asymmetry, so the causal direction cannot be empirically identified from the plain Causality-Test. The Tail Index-Test reported in \autoref{tab:tails_finance} shows that Bitcoin has a heavier left tail than the S\&P 500, which is consistent with the fact that Bitcoin is generally more volatile and exhibits extreme returns more frequently than the S\&P 500. Taken together with the Confounder-Test results, this suggests the presence of a heavy-tailed confounder in the left tail.

The subsequent analysis with observed proxies of confounders focuses on the relationship between extreme negative returns of the S\&P 500 and Bitcoin. 
\begin{table}[ht]
    \small
    \centering
    \begin{tabular}{lcccccc}
      & \multicolumn{4}{c}{\textbf{Causality-Test}}   & \multicolumn{2}{c}{\textbf{Confounder-Test}}\\
        \cmidrule(lr){2-5} \cmidrule(lr){6-7}
Tail & $\hat{\Gamma}_{S\&P\,500\rightarrow BTC}$   & $\hat{\Gamma}_{BTC \rightarrow S\&P\,500}$ & $\hat{\Delta}_{S\&P\,500\rightarrow BTC}$ & $\Delta_{0.05}$&$t_{C}$  & Decision \\
       \midrule
Right & 0.7689 &  0.5088 & 0.2601& 0.1550& 0.0449 & $H_0$ \\
Left& 0.6898 & 0.6036 &  0.0862 & 0.1856&  0.5283& $H_1$ \\
  \bottomrule
    \end{tabular}
    \caption{Estimates and test results for the Causality-Test and Confounder-Test for S\&P 500 and Bitcoin returns, using $k=\lfloor n^{0.4}\rfloor=26$ observations.}
    \label{tab:ctc_finance}
\end{table}
\begin{table}[ht]
    \small
    \centering
    \begin{tabular}{lcccccc}
      & \multicolumn{4}{c}{\textbf{Tail Index-Test}} 
       \\
        \cmidrule(lr){2-5} 
        Tail & $\hat{\gamma}_{S\&P\, 500}$ & $\hat{\gamma}_{BTC}$ 
        & $t_{I}$  & Decision    \\
        \midrule
    Right &   0.0937   &     0.1957       &     29.9054 & $H_0$     \\
   Left   &    0.0804   &    0.2301       &   90.8498 &  $H_1$     \\
   \bottomrule
     \end{tabular}
    \caption{Test results for the Tail Index-Test with corresponding Hill estimates.}
    \label{tab:tails_finance}
\end{table}
In particular, we consider the MSCI Europe and the VIX as potential confounders for the left tail and report the corresponding results in \autoref{tab:ctc_finance_conf}. For the MSCI Europe, $\hat{\Delta}_{S\&P\,500 \to BTC \mid H}$ exceeds $\Delta_{0.05}$, while the Confounder-Test no longer rejects the null hypothesis, indicating that MSCI Europe is a suitable choice for the confounding variable. Additional insights into the finite-sample behavior of the adjusted CTC estimates with MSCI Europe as the confounder are provided in \autoref{fig:fiance_CTCs_left_tail}. The figure shows that $\hat{\Delta}_{S\&P\,500 \to BTC \mid MSCI\,EU}$ increases with sample size, further supporting the existence of a causal relationship from extreme negative S\&P 500 returns to Bitcoin returns after adjustment for confounding. 
\begin{table}[ht]
    \small
    \centering
    \begin{tabular}{lcccccc}
         & \multicolumn{4}{c}{\textbf{Causality-Test with confounder adjustment}} 
        & \multicolumn{2}{c}{\textbf{Confounder-Test}} \\
        \cmidrule(lr){2-5} \cmidrule(lr){6-7}
    Confounder $H$&$\hat{\Gamma}_{S\&P\,500\rightarrow BTC|H}$   & $\hat{\Gamma}_{BTC \rightarrow S\&P\,500|H}$ & $\hat{\Delta}_{S\&P\,500\rightarrow BTC|H}$ & $\Delta_{0.05}$ &$t_{C}$  & Decision\\
    \midrule
    MSCI EU & 0.7349 & 0.5224 & 0.2125 &0.1678 & 0.1142 & $H_0$\\
    VIX & 0.7048&0.5500 &0.1549 & 0.1834 & 0.2548 & $H_0$\\
    \bottomrule
    \end{tabular}
    \caption{Test results for the Causality-Test with confounder adjustment and the corresponding Confounder-Test for the left tail of S\&P 500 and Bitcoin returns.}
    \label{tab:ctc_finance_conf}
\end{table}
In this setting, the classical LiNGAM algorithm does not detect any causal relationship. However, the pairwise LiNGAM approach suggests that, if a causal link is present, it runs from the S\&P 500 to Bitcoin, which is consistent with our findings.

Overall, the results of our testing strategy determine a causal relationship between extreme negative returns of the S\&P 500 and negative Bitcoin returns once confounding effects are taken into account. This suggests that adverse shocks in the S\&P 500, which may reflect broader economic downturns or market-wide stress, spill over into the cryptocurrency market. This interpretation is consistent with the existing literature. As shown in Appendix~\ref{app:add_figures}, analogous results are obtained for extreme positive returns, although in that case confounding effects appear to be less pronounced.
\begin{figure}[ht]
    \centering
    \begin{subfigure}[b]{0.49\textwidth}
        \centering
        \includegraphics[width=\textwidth]{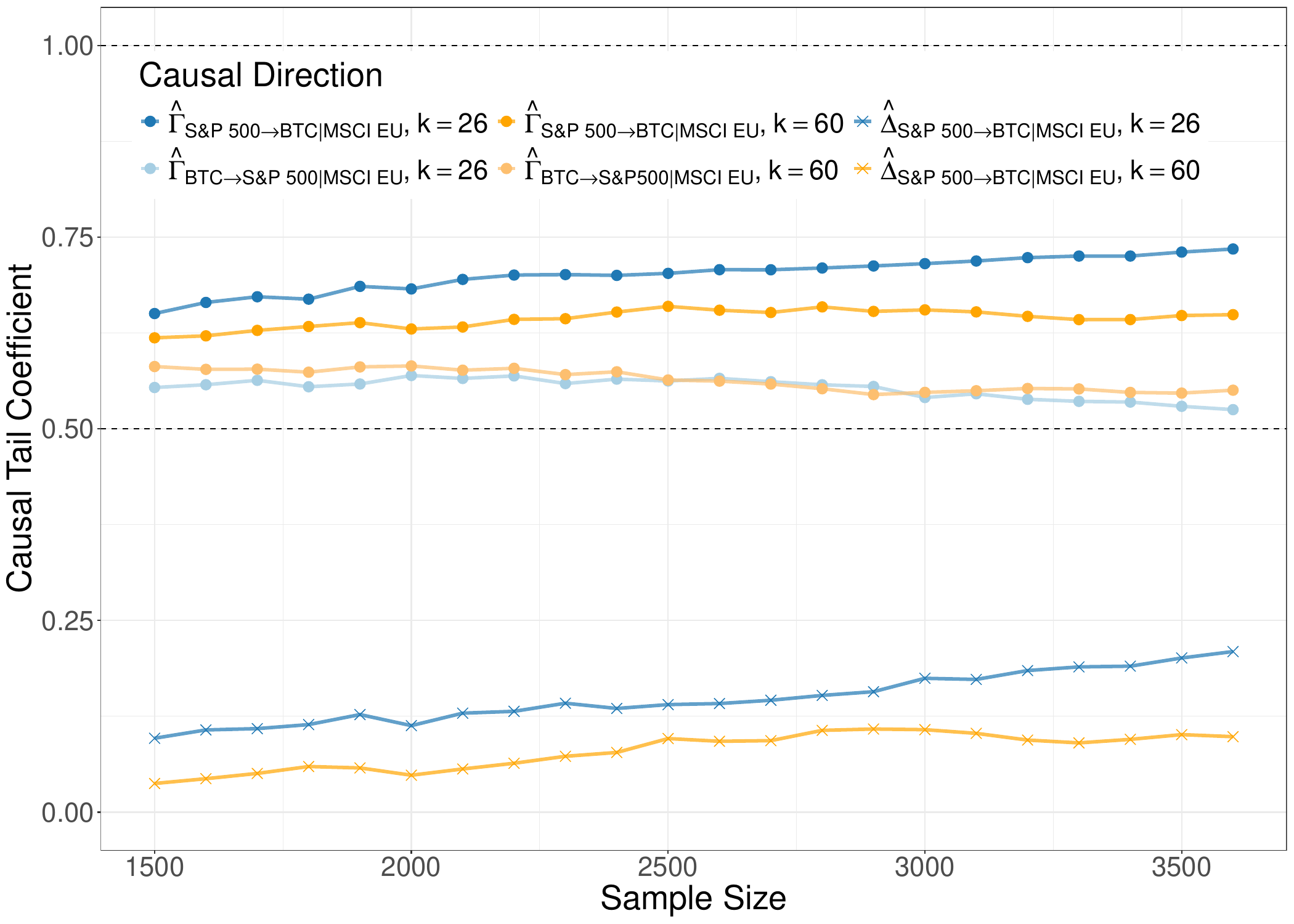}
    \end{subfigure}
    \begin{subfigure}[b]{0.49\textwidth}
        \centering
        \includegraphics[width=\textwidth]{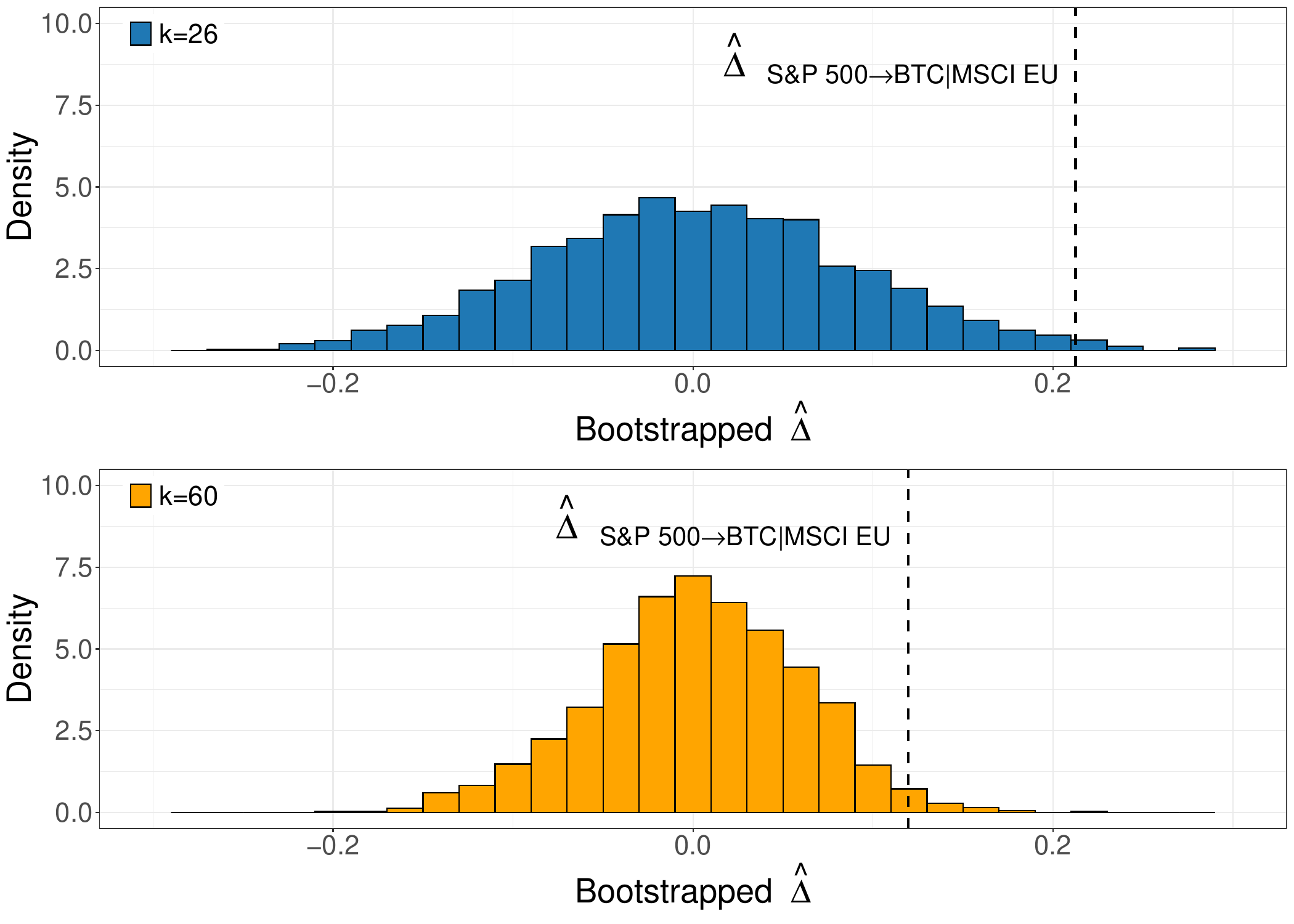}
    \end{subfigure}
    \vskip\baselineskip
    \caption{Adjusted CTC estimates for the left-tail causal relation between S\&P 500 and Bitcoin returns, using MSCI Europe as the confounder. The left panel shows the behavior of the estimator for $k\in\{26,60\}$ as the sample size varies. The right panel shows the bootstrap distribution of the Causality-Test statistic together with the estimated value of $\Delta_{S\&P\,500  \to BTC|MSCI\,EU}$.}
    \label{fig:fiance_CTCs_left_tail}
\end{figure}

\section{Conclusions}
\label{sec:conclusion}

In this paper, we provide identification conditions for causality in extremes under general conditions. In fact, heterogeneity in the tails helps to disentangle different causal tail scenarios. We then provide tools in the form of estimators and tests and derive their asymptotic properties that can be used to empirically detect causal relationships. The obtained results hold even in the presence of light-tailed confounders and can be adapted to situations where a proxy for a heavy-tailed confounder exists. The comprehensive simulation study provides extensive finite-sample evidence for the derived asymptotic results. We determine the causal relationship in three different applications and demonstrate the use of the tail tests in situations where tail indices of relevant variables generally differ. The application of the causal tail tests across three distinct empirical settings, ranging from climate-related variables to financial returns, demonstrates their usefulness for identifying causal relationships in the tails of distributions. In particular, the tests remain informative even in challenging situations in which the response variable has a heavier tail than the explanatory variable or where confounding effects are present. Moreover, the proposed approach uncovers tail-specific causal relationships in settings where classical methods such as LiNGAM do not provide conclusive results. A central advantage of the framework is the Confounder-Test, which allows us to detect the presence of heavy-tailed confounding and, when necessary, to apply the confounder-adjusted version of the CTC. This makes it possible to recover meaningful causal relationships even in situations where the existence of confounding is not evident a priori. The validity of the approach is supported by applications in settings with a known causal direction, while its broader potential is illustrated by an additional application in which the direction is less clear and confounding effects are substantial.

Beyond identifying causal links, in the applications the proposed testing strategy also provides insight into temporal dynamics. In the river flow application, for example, it captures the decline in causal influence as the time lag between precipitation and discharge increases. This highlights that the method can be used not only to infer causal direction but also to study how causal effects vary across lags. At the same time, the analysis points to important limitations. The causal tail tests may be less reliable in small samples or in situations with strongly asymmetric confounding effects. Our results also show that graphical diagnostics can be highly informative alongside the formal test statistics. In particular, plots of the finite-sample behavior of the CTC estimates provide additional insight into convergence patterns and help to interpret the inferred causal structure more clearly.

For the sake of simplicity, we restrict attention to Linear Structural Causal Models. However, the validity of the assumption of an underlying LSCM in our chosen applications might only serve as an approximation. Nevertheless, even when the relationship between the relevant variables is nonlinear in the full distribution, it may still be approximately linear in the tails. Future work could aim to explore the theoretical properties of the CTC under nonlinear structural models, thereby broadening its applicability to more complex causal scenarios.

In this paper, we have abstracted from the time-dependence of involved variables following \citet{drees2008some} that suggests that direct nonparametric methods developed for the classical i.i.d. setting can also be useful for the tail analysis of serially dependent data. However, a next logical step would be to explicitly account for temporal dependencies and apply the testing theory to the adjusted version of CTC for time-series data from the work of \citet{bodik2024causality} on data with different tails. We leave this for future work.

While our tail tests are effective in uncovering causal directions in several empirically relevant settings, they do not cover all possible scenarios. In particular, the current framework does not fully account for asymmetric confounding effects. We attempt to detect and address such effects through the Confounder-Test and the Tail Index-Test. In light of the observed performance of the CTC under heterogeneous tails, it would be valuable to develop a test that jointly exploits the CTC estimates, their asymmetry, and the estimated tail indices.

Finally, we note that, while recent research has made progress in disentangling underlying causal structures in extreme settings, there remains a significant gap in methods for quantifying the magnitude of causal effects. Developing reliable approaches to estimate effect sizes in the tails is an important direction for future research, particularly for applications in risk management, climate science, and financial stability.

\section*{Acknowledgments}
We thank Sebastian Lerch, Uwe Ehret, Nicola Gnecco and Sam Allen, and participants of the Advances in Risk Modeling Workshop, the Workshop on Dependence Modelling, the 13th Annual Conference of the IAAE, 7th Joint Statistical Meeting of the DAG Stat and Essex University for valuable comments. Moreover, we are grateful to Jonas Bhend for supporting us with the data from MeteoSchweiz.

\section*{Funding}
Melanie Schienle gratefully acknowledges funding by the Klaus Tschira Foundation and the DFG Research Unit 5583.

\section*{Declaration of generative AI and AI-assisted technologies in the manuscript preparation process}
During the preparation of this work, the authors used ChatGPT for language editing, proofreading, improving clarity and readability, refining the presentation of mathematical arguments, and improving the presentation and design of figures and plots. The authors reviewed and edited the output as needed and take full responsibility for the content of the published article.
\clearpage

\bibliographystyle{ref.bst}
\bibliography{ref.bib}

\clearpage
\appendix
\section{Proofs}\label{app:proof}
\renewcommand{\theequation}{A.\arabic{equation}} \setcounter{equation}{0}

First, we establish the following technical Lemmas.
\begin{lemma}\label{lem:twotails} Under Assumption \ref{assumption1} it holds that
\begin{equation}
\label{eq:sum_equi}
\p(S_p>x)\sim \p(\epsilon_1>x)+\p(\epsilon_2>x)+\ldots+\p(\epsilon_p>x).
\end{equation}
\end{lemma}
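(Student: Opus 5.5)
We prove the statement by induction on $p$, reducing everything to the two-summand case. The base case $p=2$ is the classical tail-additivity result for independent variables whose tails are dominated by a common reference tail. Set $\bar F_{\epsilon_r}$ as in Assumption~\ref{assumption1}. Then
\[
\p(\epsilon_i>x)\sim c_i^+\bar F_{\epsilon_r}(x), \qquad \p(\epsilon_i\le -x)\sim c_i^-\bar F_{\epsilon_r}(x).
\]
In the language of \citet{resnick2007heavy} (or the Davis--Resnick result on sums of tail-balanced variables), each $\epsilon_i$ has a tail that is either equivalent to $c_i^+\bar F_{\epsilon_r}$ with $c_i^+>0$, or negligible ($c_i^+=0$), which happens when $\alpha_i>\alpha_r$. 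The claim for $p=2$ is therefore
\[
\p(\epsilon_1+\epsilon_2>x)/\bar F_{\epsilon_r}(x)\to c_1^++c_2^+ .
\]
Since every $\p(\epsilon_i>x)$ is asymptotically $c_i^+\bar F_{\epsilon_r}(x)$, this is exactly the assertion. The statement must be read in ratio form relative to $\bar F_{\epsilon_r}$, so that terms with $c_i^+=0$ are negligible and cause no division issues.

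For the lower bound we fix $\delta\in(0,1)$. By independence,
\[
\p(\epsilon_1+\epsilon_2>x)\ \ge\ \p(\epsilon_1>(1+\delta)x,\ \epsilon_2>-\delta x)+\p(\epsilon_2>(1+\delta)x,\ \epsilon_1>-\delta x)-\p(\epsilon_1>x(1+\delta),\,\epsilon_2>x(1+\delta)).
\]
The factors $\p(\epsilon_j>-\delta x)\to1$. The joint term is $o(\bar F_{\epsilon_r})$ by Assumption~\ref{assumption1}(ii), or directly by independence, since it is a product of two terms each $O(\bar F_{\epsilon_r})$. Regular variation of $\bar F_{\epsilon_r}$ with index $\alpha_r$ gives $\bar F_{\epsilon_r}((1+\delta)x)/\bar F_{\epsilon_r}(x)\to(1+\delta)^{-\alpha_r}$. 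Letting $\delta\downarrow0$ yields $\liminf\ge c_1^++c_2^+$.

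For the upper bound we write
\[
\{\epsilon_1+\epsilon_2>x\}\subseteq\{\epsilon_1>(1-\delta)x\}\cup\{\epsilon_2>(1-\delta)x\}\cup\{\epsilon_1>\delta x,\ \epsilon_2>\delta x\}.
\]
The last event has probability $\p(\epsilon_1>\delta x)\p(\epsilon_2>\delta x)=O(\bar F_{\epsilon_r}(\delta x)^2)=o(\bar F_{\epsilon_r}(x))$, again using regular variation of the reference tail. This gives $\limsup\le (1-\delta)^{-\alpha_r}(c_1^++c_2^+)$, and letting $\delta\downarrow 0$ matches the lower bound.

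For the induction step we set $S_{p}=S_{p-1}+\epsilon_p$. By the induction hypothesis, $\p(S_{p-1}>x)\sim(\sum_{i<p}c_i^+)\bar F_{\epsilon_r}(x)$. The left tail of $S_{p-1}$ is likewise $O(\bar F_{\epsilon_r})$, by the same argument applied to $-\epsilon_i$ using the constants $c_i^-$. Since $S_{p-1}$ is independent of $\epsilon_p$, the two-variable argument applies verbatim with $S_{p-1}$ in place of $\epsilon_1$ and yields the constant $\sum_{i\le p}c_i^+$, which is \eqref{eq:sum_equi}.

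The main obstacle is the degenerate situation in which a summand has a lighter tail, so that $c_i^+=0$, while the lighter term still contributes in finite ranges. We also need to ensure that $\bar F_{\epsilon_r}$ is genuinely regularly varying, so that the $\delta$-shifts cost only a factor $(1\pm\delta)^{\mp\alpha_r}$. Regular variation handles this, since $\epsilon_r\in RV(\alpha_r)$. We should also check that the reference is dominant, so that no $c_i^+$ is infinite; this is implicit in Assumption~\ref{assumption1}(i). If all $c_i^+=0$ the statement is trivial in the ratio sense, and we will note this as a remark.
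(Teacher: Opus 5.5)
Your proof is correct, and it takes a different route from the paper, which gives no argument of its own. The paper simply cites Feller, Embrechts et al.\ (Lemma~1.3.1) and Jessen--Mikosch. The classical statements cited there (e.g.\ Lemma~1.3.1 of Embrechts et al.\ and Feller's lemma) are formulated for summands with a common tail index, so the extension to heterogeneous indices and $p$ real-valued innovations is left implicit.

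You carry out that extension directly:
\begin{itemize}
\item Normalizing every tail by the dominant reference tail $\bar F_{\epsilon_r}$ treats lighter-tailed summands ($c_i^+=0$) and comparable ones ($c_i^+>0$) in one stroke.
\item The $\delta$-sandwich needs no sign restriction, because the lower bound only uses $\p(\epsilon_j>-\delta x)\to1$.
\item The ratio formulation keeps the induction clean even when a partial sum $S_{p-1}$ does not contain $\epsilon_r$ and has a negligible right tail.
\end{itemize}
Your argument also shows that only independence, finiteness of the right-tail limits $c_i^+$, and regular variation of $\bar F_{\epsilon_r}$ are used. Parts (ii) and (iii) and the left-tail constants $c_i^-$ of Assumption~\ref{assumption1} play no role. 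In particular, your claim that the left tail of $S_{p-1}$ is $O(\bar F_{\epsilon_r})$ is true but superfluous, since $\p(S_{p-1}>-\delta x)\to1$ holds for any real random variable.

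The citation route buys brevity. Yours buys a self-contained proof under exactly the stated hypotheses. It is the qualitative analogue of the comparison between $\{\epsilon_1+\epsilon_2>u\}$ and $\{\epsilon_1>u\}\cup\{\epsilon_2>u\}$ that the paper carries out quantitatively, under the von Mises condition, in Lemma~\ref{lem:tail_diff}.

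One imprecision to fix: passing from $\p(S_p>x)/\bar F_{\epsilon_r}(x)\to\sum_i c_i^+$ to the asymptotic equivalence in the lemma requires $\sum_i c_i^+>0$. The all-zero case is not ``trivial in the ratio sense'': there the ratio statement would not yield $\sim$. It simply cannot occur for the full sum, because $\epsilon_r$ is one of the summands and $c_r^+=1$. Say this explicitly. The zero-limit case arises only for intermediate partial sums in the induction, where the ratio form is all you need.
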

\begin{proof} \cite{jessen2006regularly}, \cite{embrechts1997modelling} , Lemma 1.3.1 and \cite{feller1971introduction} p. 278.
\end{proof}

\begin{lemma}\label{lem:tail_diff}
Under Assumptions~\ref{assumption1} and~\ref{ass2}, suppose that $\alpha_1>\alpha_2$. Then, as $u\to\infty$ it holds that
\begin{equation}
\label{eq:conv_rate_rho}
    \p(\epsilon_1+\epsilon_2>u)  = \p(\epsilon_1>u)+\p(\epsilon_2>u)+ \bar F_{\epsilon_2}(u)O(u^{-\rho+\delta}),
\end{equation}
with $\rho:=\min\{1,\alpha_1-\alpha_2\}$ and for every $0<\delta<\rho$.
\end{lemma}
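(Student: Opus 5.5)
We aim to bound the error $\p(\epsilon_1+\epsilon_2>u)-\p(\epsilon_1>u)-\p(\epsilon_2>u)$ relative to $\bar F_{\epsilon_2}(u)$, where $\epsilon_2$ has the heavier tail ($\alpha_2<\alpha_1$).

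\textbf{Splitting of the event.} Fix a small $\eta\in(0,1)$ and split according to which summand is large. We use the decomposition
\begin{align*}
\p(\epsilon_1+\epsilon_2>u)
&= \p(\epsilon_1+\epsilon_2>u,\ |\epsilon_1|\le u^{\theta})
+ \p(\epsilon_1+\epsilon_2>u,\ |\epsilon_1|> u^{\theta},\ \epsilon_2\le u/2)\\
&\quad+ \p(\epsilon_1+\epsilon_2>u,\ |\epsilon_1|>u^{\theta},\ \epsilon_2>u/2),
\end{align*}
with a cutoff exponent $\theta\in(0,1)$ to be tuned. This is the standard decomposition behind Lemma~\ref{lem:twotails}, made quantitative.

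\textbf{First term: the dominant piece.} On $\{|\epsilon_1|\le u^\theta\}$ the term equals $\e[\bar F_{\epsilon_2}(u-\epsilon_1);|\epsilon_1|\le u^\theta]$. Assumption~\ref{ass2} (von Mises) gives $f_{\epsilon_2}(x)\sim \alpha_2\bar F_{\epsilon_2}(x)/x$. A first-order Taylor expansion then yields
\[
\bar F_{\epsilon_2}(u-y)=\bar F_{\epsilon_2}(u)\bigl(1+\alpha_2 y/u+o(|y|/u)\bigr)
\]
uniformly in $|y|\le u^\theta$. Hence the first term is $\bar F_{\epsilon_2}(u)\bigl(1+O(\e[|\epsilon_1|\wedge u^\theta]/u)\bigr)$, up to the missing mass $\bar F_{\epsilon_2}(u)\p(|\epsilon_1|>u^\theta)$.
\begin{itemize}
\item If $\alpha_1>1$, then $\e|\epsilon_1|<\infty$ and the error is $O(u^{-1})$.
\item If $\alpha_1\le 1$, Karamata's theorem gives $\e[|\epsilon_1|\wedge u^\theta]=O(u^{\theta(1-\alpha_1)+\delta})$.
\item The mass $\p(|\epsilon_1|>u^\theta)=O(u^{-\theta\alpha_1+\delta})$ by Potter bounds.
\end{itemize}

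\textbf{Second term: $\epsilon_1$ is large.} Here $\epsilon_1>u/2$ is forced. The term equals $\p(\epsilon_1>u)$ plus corrections from $\epsilon_2$ shifting the threshold. The same Taylor/von Mises argument, now applied to $\bar F_{\epsilon_1}$, bounds these corrections by $\bar F_{\epsilon_1}(u)\,O(u^{-\rho'})$, which is negligible against $\bar F_{\epsilon_2}(u)u^{-\rho+\delta}$. The bulk of this term is in any case $\bar F_{\epsilon_1}(u)=\bar F_{\epsilon_2}(u)\,O(u^{-(\alpha_1-\alpha_2)+\delta})$. This is exactly why $\alpha_1-\alpha_2$ enters $\rho$: absorbing $\p(\epsilon_1>u)$ into the error is harmless. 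We nevertheless keep it explicitly, as the statement does.

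\textbf{Third term: both large.} By independence and Potter bounds it is at most
\[
\p(|\epsilon_1|>u^\theta)\,\p(\epsilon_2>u/2)=\bar F_{\epsilon_2}(u)\,O(u^{-\theta\alpha_1+\delta}).
\]

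\textbf{Choosing $\theta$ (main obstacle).} We need $\theta$ close enough to $1$ that $\theta\alpha_1\ge\rho$, while keeping the Taylor error $O(u^{\theta-1})$ or $O(u^{-1})$ within $u^{-\rho+\delta}$. When $\alpha_1>1$, the Taylor error is $O(u^{-1})$ regardless of $\theta$, so $\theta$ near $1$ gives the $u^{-1}$ part of $\rho=\min\{1,\alpha_1-\alpha_2\}$. When $\alpha_1\le1$, the rate $\rho=\alpha_1-\alpha_2<\alpha_1$ leaves room for a choice $\theta\in(\rho/\alpha_1,1)$, and the arbitrarily small $\delta$ absorbs the slowly varying factors.

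The delicate point is making the von Mises expansion uniform over $|y|\le u^\theta$ with an explicit $O$ rather than $o$. I would first try the mean value theorem together with the uniform convergence theorem for the regularly varying density $f_{\epsilon_2}$, which gives $f_{\epsilon_2}(u-y)\le C\bar F_{\epsilon_2}(u)/u$ on this range. This yields the remainder $\bar F_{\epsilon_2}(u)\,O(|y|/u)$ without needing a second-order condition. Collecting the three terms gives \eqref{eq:conv_rate_rho}.
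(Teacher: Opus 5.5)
Your argument is correct. Its analytic core coincides with the paper's: the von Mises condition is converted into the local bound $|\bar F_{\epsilon_2}(u-y)-\bar F_{\epsilon_2}(u)|\le C|y|\bar F_{\epsilon_2}(u)/u$. This is the paper's \eqref{eq:vonMises_result}, obtained exactly by your mean-value/uniform-convergence argument. The truncated first moment of $\epsilon_1$ is then handled by Karamata/Potter, and $\bar F_{\epsilon_1}(u)/\bar F_{\epsilon_2}(u)$ by Potter.

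The difference is bookkeeping. The paper compares $\{\epsilon_1+\epsilon_2>u\}$ with $B_u=\{\epsilon_1>u\}\cup\{\epsilon_2>u\}$ and bounds the two set differences separately: positive $\epsilon_1$ pushing $\epsilon_2$ over $u$, and negative $\epsilon_1$ pulling it back below. It then uses $\p(B_u)=\bar F_{\epsilon_1}+\bar F_{\epsilon_2}-\bar F_{\epsilon_1}\bar F_{\epsilon_2}$. You integrate $\bar F_{\epsilon_2}(u-\epsilon_1)$ directly, which treats both signs of $\epsilon_1$ in one step and is arguably cleaner.

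Your ``main obstacle'' is self-imposed, though: the polynomial cutoff $u^{\theta}$ is unnecessary. The Lipschitz bound holds uniformly for $|y|\le u/2$, so you can cut at $u/2$ instead, as the paper does:
\begin{itemize}
\item the first-term error is $u^{-1}\e[|\epsilon_1|\mathbf{1}\{|\epsilon_1|\le u/2\}]=O(u^{-\min\{1,\alpha_1\}+\delta})$;
\item the leftover mass is $\p(|\epsilon_1|>u/2)=O(u^{-\alpha_1+\delta})$.
\end{itemize}
Both are $O(u^{-\rho+\delta})$ because $\rho\le\min\{1,\alpha_1\}$. Likewise, your second term needs no correction analysis, since it is at most $\p(\epsilon_1>u/2)=\bar F_{\epsilon_2}(u)\,O(u^{-(\alpha_1-\alpha_2)+\delta})$.

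One small slip: in your $\theta$-paragraph, the Taylor error for $\alpha_1\le 1$ is not $O(u^{\theta-1})$. With that crude bound, the requirements $\theta\le 1-\rho$ and $\theta\alpha_1\ge\rho$ can be incompatible; for example, $\alpha_1=0.9$, $\alpha_2=0.1$ gives $\rho=0.8$. What actually works is the Karamata bound from your own bullet. It yields the decay exponent $1-\theta(1-\alpha_1)\ge\alpha_1>\rho$ for every $\theta<1$, so only the constraint $\theta>\rho/\alpha_1$ remains.
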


\begin{proof}

Define $A_u:=\{\epsilon_1+\epsilon_2>u\}$, $B_u:=\{\epsilon_1>u\}\cup\{\epsilon_2>u\}$ and let $\epsilon_1^+:=\max(0,\epsilon_1)$, $\epsilon_1^-:=\max(0,-\epsilon_1)$.
Then
\[
    |\p(A_u)-\p(B_u)| \leq \p(A_u\setminus B_u) + \p(B_u\setminus A_u).
\]
The von Mises condition in Assumption \ref{ass2} implies that there exist
constants $C>0$ and $u_0>0$ such that, for all $u\geq u_0$,
\begin{equation}
\label{eq:vonMises_result}
    \left|\frac{\bar F_{\epsilon_i}(u-h)}{\bar F_{\epsilon_i}(u)} -1 \right|
    \leq C\frac{|h|}{u},
    \qquad|h|\leq u/2,\quad i\in\{1,2\}.
\end{equation}
We first consider $A_u\setminus B_u$. On this event, $\epsilon_1\leq u$, $\epsilon_2\leq u$ and $\epsilon_1+\epsilon_2>u$. In particular, $\epsilon_1>0$. Splitting according to whether
$\epsilon_1\leq u/2$ gives
\begin{align}
\p(A_u\setminus B_u)
&\leq
\e\left(
    \left\{
        \bar F_{\epsilon_2}(u-\epsilon_1)-\bar F_{\epsilon_2}(u)
    \right\}
    \textbf{1}\{0<\epsilon_1\leq u/2\}
\right)
+
\p(\epsilon_1>u/2).
\label{eq:AminusB}
\end{align}
With \eqref{eq:vonMises_result} it follows
\[
    0\leq
    \frac{\bar F_{\epsilon_2}(u-y)-\bar F_{\epsilon_2}(u)}
         {\bar F_{\epsilon_2}(u)}
    \leq
    C\frac{y}{u},
    \qquad 0\leq y\leq u/2.
\]
Hence
\[
    \frac{\p(A_u\setminus B_u)}
         {\bar F_{\epsilon_2}(u)}
    \leq
    \frac{C}{u}
    \e\left(\epsilon_1^+
        \textbf{1}\{\epsilon_1^+\leq u/2\}
    \right)
    +
    \frac{\p(\epsilon_1>u/2)}
         {\bar F_{\epsilon_2}(u)}.
\]

We now consider the different cases for $\alpha_1$ separately.\\

1. $\alpha_1>1$: Since the right tail of $\epsilon_1$ is regularly varying with index $\alpha_1>1$, we have $\e(\epsilon_1^+)<\infty$. Hence
\[
    \frac1u
    \e\left(\epsilon_1^+\textbf{1}\{\epsilon_1^+\le u/2\}\right)
    = O(u^{-1}).
\]

2. $0<\alpha_1<1$: Since $\epsilon_1^+\ge0$
\[
    \e\left(\epsilon_1^+\textbf{1}\{\epsilon_1^+\le t\}\right)
    \le \int_0^t \p(\epsilon_1>y)\,dy.
\]
By Karamata's theorem (\cite{resnick2007heavy}, Theorem 2.1) we obtain
\[
   \e\left(\epsilon_1^+\textbf{1}\{\epsilon_1^+\le u/2\} \right)
    = O\left(u^{1-\alpha_1}\ell_1(u)\right),
\]
and by Potter’s bounds (\cite{resnick2007heavy},  Proposition 2.6) for every $\delta>0$
\[
    \frac1u\e\left(
        \epsilon_1^+\textbf{1}\{\epsilon_1^+\le u/2\}\right)
    =O(u^{-\alpha_1+\delta}).
\]
3. $\alpha_1=1$: Potter’s bounds imply, for every $\delta>0$,
\[
\e\left(\epsilon_1^+\textbf{1}\{\epsilon_1^+\le u/2\}\right) \leq C_0+ C\int_{y_0}^{u/2}y^{-1+\delta}dy=O(u^\delta)\]
and therefore
\[
    \frac1u
    \e\left(
        \epsilon_1^+\textbf{1}\{\epsilon_1^+ \le u/2\}
    \right)
    =
    O(u^{-1+\delta}).
\]

Overall we obtain for every $\alpha_1>0$, $\delta>0$,
\[
    \frac{1}{u}
    \e\left(\epsilon_1^+
        \textbf{1}\{\epsilon_1^+\leq u/2\}
    \right)
    =
    O\left(
        u^{-\min\{1,\alpha_1\}+\delta}
    \right).
\]
Moreover,
\begin{align*}
    \frac{\p(\epsilon_1>u/2)}
         {\bar F_{\epsilon_2}(u)}=
    \frac{\ell_1(u/2)(u/2)^{-\alpha_1}}
         {\ell_2(u)u^{-\alpha_2}}=2^{\alpha_1}
    u^{-(\alpha_1-\alpha_2)}
    \frac{\ell_1(u/2)}
         {\ell_2(u)}=
    O\left(
        u^{-(\alpha_1-\alpha_2)+\delta}
    \right),
\end{align*}
where Potter’s bounds and slow variation have been used. Therefore,
since
\[
    \rho=\min\{1,\alpha_1-\alpha_2\}
    \leq \min\{1,\alpha_1\},
\]
we obtain
\begin{equation}
\label{eq:AminusB_rate}
    \p(A_u\setminus B_u)
    =
    \bar F_{\epsilon_2}(u)O(u^{-\rho+\delta}).
\end{equation}

We next consider $B_u\setminus A_u$. We have
\[
    B_u\setminus A_u
    \subseteq
    \{\epsilon_2>u,\,
      \epsilon_1+\epsilon_2\leq u\}
    \cup
    \{\epsilon_1>u,\,
      \epsilon_1+\epsilon_2\leq u\}.
\]
The second event is trivially bounded by $\p(\epsilon_1>u)=\bar F_{\epsilon_1}(u)$. Furthermore, by Potter’s bounds,
\begin{equation}
\label{eq:F1F2ratio}
    \frac{\bar F_{\epsilon_1}(u)}{\bar F_{\epsilon_2}(u)}
    =
    \frac{\ell_1(u)}{\ell_2(u)}
    u^{-(\alpha_1-\alpha_2)}
    =
    O\left(
        u^{-(\alpha_1-\alpha_2)+\delta}
    \right)
    =
    O(u^{-\rho+\delta}).
\end{equation}

For the first event,
\[
    \p(\epsilon_2>u,\, \epsilon_1+\epsilon_2\leq u)
    \leq
    \e(\bar F_{\epsilon_2}(u)-\bar F_{\epsilon_2}(u+\epsilon_1^-)).
\]
Splitting according to $\epsilon_1^-\leq u/2$ and using \eqref{eq:vonMises_result} gives
\[
\begin{aligned}
&
\frac{
    \p(\epsilon_2>u,\, \epsilon_1+\epsilon_2\leq u)
}{\bar F_{\epsilon_2}(u)}\leq  \frac{C}{u}
    \e( \epsilon_1^-\textbf{1}\{\epsilon_1^-\leq u/2\} ) +  \p(\epsilon_1^->u/2).
\end{aligned}
\]
By Assumption \ref{assumption1}, the left tail of $\epsilon_1$ satisfies
\[
    \p(\epsilon_1^->y) = \p(\epsilon_1<-y) =    O\!\left(\p(\epsilon_1>y)\right)= O(y^{-\alpha_1}\ell_1(y)).
\]
Hence, analogously to the argument above, we obtain
\[
    \frac{1}{u} \e(\epsilon_1^-\textbf{1}\{\epsilon_1^-\leq u/2\}) = O\left(u^{-\min\{1,\alpha_1\}+\delta} \right) \quad 
    \textrm{ and }    \quad \p(\epsilon_1^->u/2)  =  O(u^{-\alpha_1+\delta}).
\]
Consequently,
\[
    \p(\epsilon_2>u,\epsilon_1+\epsilon_2\leq u)  = \bar F_{\epsilon_2}(u) O(u^{-\rho+\delta}).
\]
Together with \eqref{eq:F1F2ratio}, this gives
\begin{equation}
\label{eq:BminusA_rate}
    \p(B_u\setminus A_u)
    =
    \bar F_{\epsilon_2}(u)O(u^{-\rho+\delta}).
\end{equation}

Combining \eqref{eq:AminusB_rate} and
\eqref{eq:BminusA_rate}, we obtain
\[
    |\p(A_u)-\p(B_u)|
    =
    \bar F_{\epsilon_2}(u)O(u^{-\rho+\delta}).
\]

Finally, by independence,
\begin{align*}
\p(B_u)
&=
\p(\epsilon_1>u)
+
\p(\epsilon_2>u)
-
\p(\epsilon_1>u)\p(\epsilon_2>u)
\\
&=\bar F_{\epsilon_1}(u)+\bar F_{\epsilon_2}(u)-\bar F_{\epsilon_1}(u)\bar F_{\epsilon_2}(u).
\end{align*}
Since $\bar F_{\epsilon_1}(u) = O(u^{-\alpha_1+\delta})$, and $\rho\leq\alpha_1$, we have
\[
    \bar F_{\epsilon_1}(u)\bar F_{\epsilon_2}(u) = \bar F_{\epsilon_2}(u)O(u^{-\rho+\delta}).
\]
Therefore,
\begin{equation*}
   \p(\epsilon_1+\epsilon_2>u) = \bar F_{\epsilon_1}(u)+\bar F_{\epsilon_2}(u)
    + \bar F_{\epsilon_2}(u)O(u^{-\rho+\delta}),
    \qquad
    \rho:=\min\{1,\alpha_1-\alpha_2\}.
\end{equation*}
    
\end{proof}

\begin{lemma}\label{lem:independece}

Let $\epsilon_1, \epsilon_2$ be independent regularly varying random variables such that $\epsilon_2$ is heavier-tailed than $\epsilon_1$.

Then, for every fixed $x\in\mathbb R$ with $\p(\epsilon_1\leq x)>0$,
\begin{equation}
\label{eq:conditional_rate_u}
    \p(\epsilon_1\leq x \mid \epsilon_1+\epsilon_2>u ) -  \p(\epsilon_1\leq x)  = O\!\left(u^{-\rho+\delta}\right)
\end{equation}
for every $0<\delta<\rho$, where $\rho:=\min\left\{1,\,\alpha_1-\alpha_2\right\}.$

Consequently, if $u_n$ satisfies $\p(\epsilon_2>u_n)\sim\frac{k}{n}$ and $k=n^\nu$, then
\begin{equation}
\label{eq:conditional_rate_n}
    \p(
        \epsilon_1\leq x \mid \epsilon_1+\epsilon_2>u_n) - \p(\epsilon_1\leq x) =O\!\left(n^{-(1-\nu)\rho/\alpha_2+\delta}\right)
\end{equation}
for every $\delta>0$.
In particular,
\[
    \sqrt{k}\,
    \left|\p(\epsilon_1\leq x  \mid \epsilon_1+\epsilon_2>u_n)-\p(\epsilon_1\leq x) \right| \to 0
\]
whenever
\begin{equation}
\label{eq:nu_general_condition2}
        \nu
        <
        \frac{2\rho}
             {\alpha_2+2\rho},
        \qquad
        \rho
        =
        \min\{1,\alpha_1-\alpha_2\}.
\end{equation}
\end{lemma}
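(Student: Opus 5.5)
The plan is to write the conditional probability as a ratio and compare numerator and denominator separately, using Lemma~\ref{lem:tail_diff} for the denominator. By independence,
\[
\p(\epsilon_1\le x,\ \epsilon_1+\epsilon_2>u)=\e\bigl(\bar F_{\epsilon_2}(u-\epsilon_1)\mathbf 1\{\epsilon_1\le x\}\bigr).
\]
We split the expectation according to whether $|\epsilon_1|\le u/2$ or not.

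\textbf{Numerator.} On $\{|\epsilon_1|\le u/2,\ \epsilon_1\le x\}$ we apply the von Mises bound \eqref{eq:vonMises_result}, which gives
\[
\bar F_{\epsilon_2}(u-\epsilon_1)=\bar F_{\epsilon_2}(u)\bigl(1+O(|\epsilon_1|/u)\bigr).
\]
Hence this part equals $\bar F_{\epsilon_2}(u)\bigl[\p(\epsilon_1\le x)+O(u^{-1}\e(|\epsilon_1|\mathbf 1\{|\epsilon_1|\le u/2\}))+O(\p(|\epsilon_1|>u/2))\bigr]$. The truncated first moment is handled exactly as in the proof of Lemma~\ref{lem:tail_diff}. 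The cases $\alpha_1>1$, $\alpha_1=1$ and $\alpha_1<1$ are treated via Karamata and Potter, and Assumption~\ref{assumption1} controls the left tail. This gives $O(u^{-\min\{1,\alpha_1\}+\delta})$. The part $\{\epsilon_1>u/2\}$ is bounded by $\p(\epsilon_1>u/2)=\bar F_{\epsilon_2}(u)\,O(u^{-(\alpha_1-\alpha_2)+\delta})$ by Potter's bounds. The part $\{\epsilon_1<-u/2\}$ is bounded by $\bar F_{\epsilon_2}(u)\p(\epsilon_1<-u/2)$, which is even smaller. Together,
\[
\p(\epsilon_1\le x,\ \epsilon_1+\epsilon_2>u)=\bar F_{\epsilon_2}(u)\bigl[\p(\epsilon_1\le x)+O(u^{-\rho+\delta})\bigr],
\]
since $\rho\le\min\{1,\alpha_1\}$.

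\textbf{Denominator and ratio.} Lemma~\ref{lem:tail_diff} together with \eqref{eq:F1F2ratio} gives
\[
\p(\epsilon_1+\epsilon_2>u)=\bar F_{\epsilon_2}(u)\bigl(1+O(u^{-\rho+\delta})\bigr).
\]
Dividing the two expansions and using $1/(1+a)=1-a+O(a^2)$ yields \eqref{eq:conditional_rate_u}, because $\p(\epsilon_1\le x)\le 1$ is bounded. The positivity condition on $\p(\epsilon_1\le x)$ is not really needed for this bound.

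\textbf{Translating to $n$.} Since $\bar F_{\epsilon_2}(u_n)\sim k/n=n^{\nu-1}$ and $\bar F_{\epsilon_2}\in RV(-\alpha_2)$, the inverse is regularly varying. Hence $u_n=n^{(1-\nu)/\alpha_2}\tilde\ell(n)$ with $\tilde\ell$ slowly varying, and by Potter's bounds $u_n^{-\rho+\delta}=O(n^{-(1-\nu)\rho/\alpha_2+\delta'})$ for arbitrary small $\delta'$. This gives \eqref{eq:conditional_rate_n}. Multiplying by $\sqrt k=n^{\nu/2}$, the product tends to zero iff $\nu/2<(1-\nu)\rho/\alpha_2$ (with $\delta$ chosen small enough). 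This rearranges to $\nu<2\rho/(\alpha_2+2\rho)$, i.e.\ condition \eqref{eq:nu_general_condition2}.

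\textbf{Main obstacle.} The delicate step is the uniform control in the numerator split. The von Mises bound \eqref{eq:vonMises_result} is only valid for $|h|\le u/2$. The truncated moment of $|\epsilon_1|$ must therefore be bounded for both tails and all ranges of $\alpha_1$, with the slowly varying factors absorbed via Potter's bounds into the arbitrary $\delta$. This mirrors the corresponding step in Lemma~\ref{lem:tail_diff}, which I would reuse almost verbatim.
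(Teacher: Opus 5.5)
Your proposal is correct and follows essentially the same route as the paper. Both arguments compare the numerator with $\bar F_{\epsilon_2}(u)\p(\epsilon_1\le x)$ via the von Mises increment bound and a Karamata/Potter bound on the truncated first moment of $\epsilon_1^-$, control the denominator with Lemma~\ref{lem:tail_diff} and \eqref{eq:F1F2ratio}, and pass to $n$ via regular variation of $u_n$ and Potter's bounds. The only cosmetic difference is in the numerator: the paper splits the discrepancy into the set differences (I) and (II) and bounds (I) directly by $\bar F_{\epsilon_2}(u-x)-\bar F_{\epsilon_2}(u)$, since $\epsilon_1\le x$, whereas you expand $\e(\bar F_{\epsilon_2}(u-\epsilon_1)\mathbf 1\{\epsilon_1\le x\})$ with a two-sided truncation; your $\{\epsilon_1>u/2\}$ piece is in fact empty once $u>2x$.
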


\begin{proof}
Again we use that the von Mises condition from Assumption~\ref{ass2} implies \eqref{eq:vonMises_result}. We first notice that
    \begin{align}
    \label{eq:A3_rate_split}
&\left|    \p(\epsilon_1\leq x,\epsilon_1+\epsilon_2>u) -\p(\epsilon_1\leq x,\epsilon_2>u) \right| \notag\\
\leq  &\underbrace{\p(   \epsilon_1\leq x,\, \epsilon_2\leq u,\,\epsilon_1+\epsilon_2>u )}_{(I)}
        + \underbrace{\p( \epsilon_1\leq x,\,\epsilon_2>u,\, \epsilon_1+\epsilon_2\leq u}_{(II)})
\end{align}
We first consider $(I)$. If $x\leq0$, then $(I)=0$. If $x>0$, the event in $(I)$ implies $u-x<\epsilon_2\leq u.$
Hence
\[
    (I)\leq
    \bar F_{\epsilon_2}(u-x)-\bar F_{\epsilon_2}(u).
\]
By \eqref{eq:vonMises_result} and since $x$ is fixed,
\[
    \frac{(I)}{\bar F_{\epsilon_2}(u)}=O\left(\frac{x}{u}\right)
    =
    O(u^{-1}).
\]
Thus,
\begin{equation*}
    (I)
    =
    O\left(
        u^{-1}\bar F_{\epsilon_2}(u)
    \right).
\end{equation*}
We next consider $(II)$. Let $\epsilon_1^-:=\max(0,-\epsilon_1)$. On the event
\[
    \{\epsilon_1\le x,\,
      \epsilon_2>u,\,
      \epsilon_1+\epsilon_2\le u\},
\]
we necessarily have $\epsilon_1<0$, so that $u<\epsilon_2\le u+\epsilon_1^-$. Using independence of $\epsilon_1$ and $\epsilon_2$, it follows that
\[
    (II)
    \le
    \mathbb{E}\left(\bar F_{\epsilon_2}(u)-\bar F_{\epsilon_2}(u+\epsilon_1^-) \right).
\]
Splitting according to whether $\epsilon_1^-\le u/2$ gives
\begin{align*}
\frac{(II)}{\bar F_{\epsilon_2}(u)}
&\le
 \mathbb{E}\left(\frac{\bar F_{\epsilon_2}(u)-\bar F_{\epsilon_2}(u+\epsilon_1^-)}{\bar F_{\epsilon_2}(u)}\textbf{1}\{\epsilon_1^-\le u/2\}\right)
+ \mathbb{E}\left(\frac{\bar F_{\epsilon_2}(u)-\bar F_{\epsilon_2}(u+\epsilon_1^-)}{\bar F_{\epsilon_2}(u)}\textbf{1}\{\epsilon_1^->u/2\}\right)\\
&\leq
 \mathbb{E}\left(\frac{\bar F_{\epsilon_2}(u)-\bar F_{\epsilon_2}(u+\epsilon_1^-)}{\bar F_{\epsilon_2}(u)}\textbf{1}\{\epsilon_1^-\le u/2\}\right)
+ \p(\epsilon_1^->u/2).
\end{align*}
With \eqref{eq:vonMises_result} it follows 
\[
    0\le
    \frac{\bar F_{\epsilon_2}(u)-\bar F_{\epsilon_2}(u+y)}
         {\bar F_{\epsilon_2}(u)}
    \le
    C\frac{y}{u}.
\]
Therefore,
\begin{equation*}
    \frac{(II)}{\bar F_{\epsilon_2}(u)}  \le \frac{C}{u}\e \left(\epsilon_1^-\textbf{1}\{\epsilon_1^-\le u/2\}\right) +\p(\epsilon_1^->u/2).
\end{equation*}
By Assumption \ref{assumption1} and Potter’s bounds (\cite{resnick2007heavy}, Proposition 2.6), for every $\delta>0$, $\ell_1(u)=O(u^\delta)$ as $u \to \infty$, and therefore the left tail of $\epsilon_1$ satisfies
\begin{equation}\label{eq:II_1a}
       \p\left(\epsilon_1^->u/2\right) =  O\left((u/2)^{-\alpha_1}\ell_1(u/2) \right)=O(u^{-\alpha_1}u^\delta), \textrm{ as } u\to \infty. 
\end{equation}

Analogously to the three cases considered in Lemma~\ref{lem:tail_diff} we obtain for all $\alpha_1>0$

\begin{equation*}
    \frac1u\e\left(\epsilon_1^-\textbf{1}\{\epsilon_1^-\le u/2\}\right) = O\left(u^{-\min\{1,\alpha_1\}+\delta}\right)
\end{equation*}
for every $\delta>0$. Together with \eqref{eq:II_1a}
\[ 
\frac{(II)}{\bar{F}_{\epsilon_2}(u)}=O(u^{-\alpha_1}u^\delta)+ O\left(u^{-\min\{1,\alpha_1\}+\delta}\right)=O\left(u^{-\min\{1,\alpha_1\}+\delta}\right).
\]

This yields,
\begin{equation*}
    (I)+(II)= \bar{F}_{\epsilon_2}(u)O\left(u^{-\min\{1,\alpha_1\}+\delta}\right).
\end{equation*}
Therefore it follows from \eqref{eq:A3_rate_split}
\begin{align*}
       &\p(\epsilon_1\leq x,\epsilon_1+\epsilon_2>u) =\p(\epsilon_1\leq x,\epsilon_2>u) +O\left(\bar{F}_{\epsilon_2}(u) u^{-\min\{1,\alpha_1\}+\delta}\right)\\
    =
    &\bar F_{\epsilon_2}(u) \left( \p(\epsilon_1\leq x)+O\left(u^{-\min\{1,\alpha_1\}+\delta}\right) \right)=\bar F_{\epsilon_2}(u) \left( \p(\epsilon_1\leq x)+O\left(u^{-\rho+\delta}\right) \right)
\end{align*}
with $\rho:=\min(1,\alpha_1-\alpha_2)$, and since $\rho=\min\{1,\alpha_1-\alpha_2\}\leq \min\{1,\alpha_1\}$.
With Lemma~\ref{lem:tail_diff} it follows for $\delta>0$
\begin{equation*}
\p(\epsilon_1+\epsilon_2>u) = \p(\epsilon_1>u)+\p(\epsilon_2>u)+\bar F_{\epsilon_2}(u)O(u^{-\rho+\delta}).
\end{equation*}
Since
\[
\frac{\bar F_{\epsilon_1}(u)}{\bar F_{\epsilon_2}(u)}=O\left(u^{-(\alpha_1-\alpha_2)+\delta}\right)=O\left(u^{-\rho+\delta}\right)
\]
we get
\[
\p(\epsilon_1+\epsilon_2>u) = \bar F_{\epsilon_2}(u)\left(1+O(u^{-\rho+\delta})\right).
\]
Therefore, for $0<\delta<\rho$,
\begin{align*}
\p(\epsilon_1\leq x\mid\epsilon_1+\epsilon_2>u) 
=\frac{\bar F_{\epsilon_2}(u)( \p(\epsilon_1\leq x)+O(u^{-\rho+\delta}))}{\bar F_{\epsilon_2}(u)(1+O(u^{-\rho+\delta}))}
= \frac{\p(\epsilon_1\leq x)+O(u^{-\rho+\delta})}{1+O(u^{-\rho+\delta})}
=\p(\epsilon_1\leq x)+O(u^{-\rho+\delta}).
\end{align*}

For the second part of the proof we set $k=n^\nu$ and 
\[
    u_n = \left(\frac{n}{k}\right)^{1/\alpha_2} L\left(\frac{n}{k}\right) =  n^{(1-\nu)/\alpha_2}L(n^{1-\nu})
\]
where $L$ is a slowly varying function such that 
\[
    \bar F_{\epsilon_2}(u_n)\sim \frac{k}{n}.
\]
With the preceding result,
\[
\p(\epsilon_1\leq x\mid\epsilon_1+\epsilon_2>u_n)-\p(\epsilon_1\leq x)
= O(u_n^{-\rho+\eta})
\]
for every sufficiently small $\eta>0$ and,
\[
u_n^{-\rho+\eta} =n^{-(1-\nu)(\rho-\eta)/\alpha_2}
L(n^{1-\nu})^{-\rho+\eta}.
\]
and with Potter’s bounds we obtain,
\[
\p(\epsilon_1\leq x\mid\epsilon_1+\epsilon_2>u_n)-\p(\epsilon_1\leq x)
= O\left(n^{-(1-\nu)\rho/\alpha_2+\delta}\right).
\]

Therefore, since $\sqrt{k}=n^{\nu/2}$,
\[
\sqrt{k}\,
\left|\p(\epsilon_1\leq x\mid\epsilon_1+\epsilon_2>u_n)-\p(\epsilon_1\leq x)\right|
=O\left(n^{\nu/2-(1-\nu)\rho/\alpha_2+\delta}\right).
\]
Thus the latter converges to zero whenever
\[
    \frac{\nu}{2}
    <
    \frac{(1-\nu)\rho}{\alpha_2},
\]
or equivalently,
\[
    \nu
    <
    \frac{2\rho}{\alpha_2+2\rho},\quad
    \textrm{ with }
    \rho=\min\{1,\alpha_1-\alpha_2\}.
\]

\end{proof}

\begin{corollary}\label{cor:bounded_conditional_rate}
Let the conditions of Lemma~\ref{lem:tail_diff} hold with $\alpha_1>\alpha_2$, and define $\rho:=\min\{1,\alpha_1-\alpha_2\}$. Let $Z$ be a random variable such that $(Z,\epsilon_1)$ is independent
of $\epsilon_2$. Then, for every bounded measurable function $g=g(Z,\epsilon_1)$ and every $0<\delta<\rho$,
\begin{equation}
\label{eq:bounded_conditional_rate}
    \e\left( g(Z,\epsilon_1)
        \mid \epsilon_1+\epsilon_2>u
    \right)  =   \e(g(Z,\epsilon_1))
    + O(u^{-\rho+\delta}),
    \qquad u\to\infty.
\end{equation}

Moreover, if $u_n$ satisfies
\[
    \p(\epsilon_2>u_n)\sim\frac{k}{n}, \qquad k=n^\nu,
\]
then, for every $\delta>0$,
\begin{equation*}
    \e\left(g(Z,\epsilon_1)\mid \epsilon_1+\epsilon_2>u_n\right)-\e(g(Z,\epsilon_1))
    = O\left(n^{-(1-\nu)\rho/\alpha_2+\delta}\right).
\end{equation*}
\end{corollary}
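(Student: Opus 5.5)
The plan is to repeat the proof of Lemma~\ref{lem:independece}, with the indicator $\mathbf{1}\{\epsilon_1\le x\}$ replaced by a general bounded function $g(Z,\epsilon_1)$. Set $M:=\sup|g|<\infty$ and $W:=(Z,\epsilon_1)$, which is independent of $\epsilon_2$. The first step is to compare the numerator $\e\bigl(g(W)\mathbf{1}\{\epsilon_1+\epsilon_2>u\}\bigr)$ with $\e\bigl(g(W)\mathbf{1}\{\epsilon_2>u\}\bigr)$. By independence the latter equals $\e(g(W))\,\bar F_{\epsilon_2}(u)$ exactly. The difference is bounded by
\[
M\bigl[\p(\epsilon_2\le u,\,\epsilon_1+\epsilon_2>u)+\p(\epsilon_2>u,\,\epsilon_1+\epsilon_2\le u)\bigr].
\]
Since $g$ depends on $Z$, we cannot restrict to $\epsilon_1\le x$ as in term $(I)$ of Lemma~\ref{lem:independece}. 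We therefore bound both probabilities without that restriction.

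The second probability is handled exactly as term $(II)$ of Lemma~\ref{lem:independece}: split on $\epsilon_1^-\le u/2$ and use \eqref{eq:vonMises_result} together with Potter's bounds. This gives $\bar F_{\epsilon_2}(u)\,O(u^{-\min\{1,\alpha_1\}+\delta})$.

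For the first probability we follow the treatment of $A_u\setminus B_u$ in Lemma~\ref{lem:tail_diff}. On this event $\epsilon_1>0$. If $\epsilon_1\le u/2$, the contribution is $\e\bigl(\{\bar F_{\epsilon_2}(u-\epsilon_1)-\bar F_{\epsilon_2}(u)\}\mathbf{1}\{0<\epsilon_1\le u/2\}\bigr)\le \bar F_{\epsilon_2}(u)\,\frac{C}{u}\,\e\bigl(\epsilon_1^+\mathbf{1}\{\epsilon_1^+\le u/2\}\bigr)$. The three cases $\alpha_1>1$, $\alpha_1=1$, $\alpha_1<1$ (Karamata's theorem and Potter's bounds) make this $\bar F_{\epsilon_2}(u)\,O(u^{-\min\{1,\alpha_1\}+\delta})$. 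The remaining part is at most $\p(\epsilon_1>u/2)=\bar F_{\epsilon_2}(u)\,O(u^{-(\alpha_1-\alpha_2)+\delta})$. Since $\rho\le\min\{1,\alpha_1\}$ and $\rho\le\alpha_1-\alpha_2$, the numerator becomes $\bar F_{\epsilon_2}(u)\bigl(\e(g(W))+O(u^{-\rho+\delta})\bigr)$.

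For the denominator, the last display in the proof of Lemma~\ref{lem:independece} gives $\p(\epsilon_1+\epsilon_2>u)=\bar F_{\epsilon_2}(u)\bigl(1+O(u^{-\rho+\delta})\bigr)$; it rests on Lemma~\ref{lem:tail_diff} and \eqref{eq:F1F2ratio}. Dividing numerator by denominator, and using that $\e(g(W))$ is bounded, yields \eqref{eq:bounded_conditional_rate}.

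For the second claim, substitute $u_n=n^{(1-\nu)/\alpha_2}L(n^{1-\nu})$ with $L$ slowly varying and $\bar F_{\epsilon_2}(u_n)\sim k/n$. As in Lemma~\ref{lem:independece}, Potter's bounds absorb $L$ into an arbitrarily small power of $n$, which gives $O(n^{-(1-\nu)\rho/\alpha_2+\delta})$.

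The main obstacle is the first probability, because losing the restriction $\epsilon_1\le x$ means the simple bound of order $u^{-1}$ from term $(I)$ no longer applies. The point to check carefully is that the truncated-moment bound $\frac{1}{u}\e\bigl(\epsilon_1^+\mathbf{1}\{\epsilon_1^+\le u/2\}\bigr)$ still yields the rate $u^{-\rho+\delta}$ in every $\alpha_1$ regime. All bounds are uniform in $g$ up to the factor $M$, so nothing beyond boundedness of $g$ and independence of $W$ from $\epsilon_2$ is needed.
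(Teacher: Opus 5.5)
Your proposal is correct and follows the paper's proof essentially step for step. You compare $\e\bigl(g(Z,\epsilon_1)\mathbf{1}\{\epsilon_1+\epsilon_2>u\}\bigr)$ with $\e\bigl(g(Z,\epsilon_1)\mathbf{1}\{\epsilon_2>u\}\bigr)=\e(g(Z,\epsilon_1))\bar F_{\epsilon_2}(u)$ through the symmetric difference, bound that difference by $\bar F_{\epsilon_2}(u)O(u^{-\rho+\delta})$ using the arguments of Lemma~\ref{lem:tail_diff}, divide by $\p(\epsilon_1+\epsilon_2>u)=\bar F_{\epsilon_2}(u)(1+O(u^{-\rho+\delta}))$, and substitute $u_n$ using Potter's bounds. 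Your explicit split of $\{\epsilon_2\le u,\ \epsilon_1+\epsilon_2>u\}$ on $\epsilon_1\le u/2$ also absorbs the piece with $\epsilon_1>u$, which is not covered by $A_u\setminus B_u$ in Lemma~\ref{lem:tail_diff}; this is more careful than the paper's ``similar to'' reference.
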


\begin{proof}
Let $A_u:=\{\epsilon_1+\epsilon_2>u\}$ and $B_u:=\{\epsilon_2>u\}$. Then similar to \eqref{eq:AminusB_rate} and \eqref{eq:BminusA_rate} from the proof of Lemma~\ref{lem:tail_diff} we get,
\[
    \p(A_u\setminus B_u) + \p(B_u\setminus A_u) = \bar F_{\epsilon_2}(u) O(u^{-\rho+\delta}).
\]
Hence, since $g$ is bounded,
\[
\begin{aligned}
\left|
\e(g(Z,\epsilon_1)\textbf{1}\{A_u\})
-
\e(g(Z,\epsilon_1)\textbf{1}\{B_u\})
\right| \leq
\|g\|_\infty \left( \p(A_u\setminus B_u) + \p(B_u\setminus A_u)\right)
=
\bar F_{\epsilon_2}(u)
O(u^{-\rho+\delta}).
\end{aligned}
\]
By independence of $(Z,\epsilon_1)$ and $\epsilon_2$,
\[
    \e(g(Z,\epsilon_1)\textbf{1}\{B_u\}) = \e(g(Z,\epsilon_1))\bar F_{\epsilon_2}(u).
\]
Furthermore, Lemma~\ref{lem:tail_diff} gives
\[
    \p(A_u) =\bar F_{\epsilon_2}(u)(1+O(u^{-\rho+\delta})).
\]
Therefore,
\[
\begin{aligned}
\e(g(Z,\epsilon_1)\mid A_u)
=
\frac{
    \bar F_{\epsilon_2}(u)
    \left(
        \e(g(Z,\epsilon_1))
        +O(u^{-\rho+\delta})
    \right)
}{
    \bar F_{\epsilon_2}(u)
   (1+O(u^{-\rho+\delta}))
} =
\e(g(Z,\epsilon_1))
+
O(u^{-\rho+\delta}),
\end{aligned}
\]
which proves \eqref{eq:bounded_conditional_rate}.

Finally, regular variation of $\bar F_{\epsilon_2}$ implies
\[
    u_n  = n^{(1-\nu)/\alpha_2}L(n)
\]
for some slowly varying $L$. Applying Potter’s bounds yields
\[
    O(u_n^{-\rho+\delta})
    =
    O\!\left(
        n^{-(1-\nu)\rho/\alpha_2+\delta}
    \right).
\]
\end{proof}

\subsection{Proof of Proposition \ref{proposition}}
\begin{proof}
For $\alpha_1=\alpha_2$, \cite{gnecco2021causal} show that the CTC can be rearranged as 
    \begin{equation}
    \label{eq:ctc_rearranged}
        \Gamma_{X_i\rightarrow X_j}=\frac{1}{2}+\frac{1}{2}\underset{x\rightarrow \infty}{\lim}\frac{\sum_{h\in A_{ij}}\beta_{h\rightarrow i}^{\alpha_h} \p(\epsilon_h>x)}{\sum_{h\in An(i,G)}\beta_{h\rightarrow i}^{\alpha_h} \p(\epsilon_h>x)}.
    \end{equation}
where  $A_{ij}$ denotes the common ancestors of $X_i$ and $X_j$. Case 3 follows directly from \eqref{eq:ctc_rearranged} and the results in \cite{gnecco2021causal}. 

Using Lemma~\ref{lem:twotails} and the relation \eqref{eq:sum_equi}, all steps in the derivation of \eqref{eq:ctc_rearranged} also go through in the unequal-tail case. Thus we get in the unequal tail case where no common confounder is present: 
\begin{equation}
    \begin{split}
      \Gamma_{X_2\rightarrow X_1}&=\frac{1}{2}+\frac{1}{2}\underset{x\rightarrow \infty}{\lim}\frac{\beta_{1\rightarrow 2}^{\alpha_1} \p(\epsilon_1>x)}{\beta_{1\rightarrow 2}^{\alpha_1} \p(\epsilon_1>x)+\beta_{2\rightarrow 2}^{\alpha_2} \p(\epsilon_2>x)}\\
      &\sim \frac{1}{2}+\frac{1}{2}\underset{x\rightarrow \infty}{\lim}\frac{\beta_{1\rightarrow 2}^{\alpha_1} \ell_1(x)x^{-\alpha_1}}{\beta_{1\rightarrow 2}^{\alpha_1} \ell_1(x)x^{-\alpha_1}+\beta_{2\rightarrow 2}^{\alpha_2} \ell_2(x)x^{-\alpha_2}},
    \end{split} \label{eq:different}
\end{equation}
where $\ell_1,\ell_2$ are slowly varying functions and where we use the property of regularly varying variables  \eqref{eq:reg_var}. Interpreting \eqref{eq:ctc_rearranged} for different tail indices $\alpha_1$ and $\alpha_2$ of $\epsilon_1$ and $\epsilon_2$ yields the results since as $x$ approaches infinity, either $x^{-\alpha_1}$ or $x^{-\alpha_2}$ dominate the result.

Hence for independent $X_1$ and $X_2$, both $\Gamma_{X_1\rightarrow X_2}$ and $\Gamma_{X_2\rightarrow X_1}$ equal 0.5, since the second term in \eqref{eq:different} is zero. In the case of $X_1\rightarrow X_2$, the value of $\Gamma_{X_1\rightarrow X_2}$ is 1, irrespective of the values of $\alpha_1$ and $\alpha_2$ since $An(1,G)=A_{12}$.

\end{proof}

\subsection{Proof of Proposition \ref{proposition_confounder} }
\begin{proof}
For $\alpha_H>\max(\alpha_1,\alpha_2)$ the CTC results of Proposition \ref{proposition}  continue to hold due to $\epsilon_1$ and $\epsilon_2$ dominating equation \eqref{eq:ctc_rearranged}, so that all terms with $\epsilon_H$ vanish in the limit.

The CTC results for the heavy-tailed confounder follow from the following consideration. For no connection between $X_1, X_2$, we get:
\begin{equation}\label{eq:ctc_val_conf_ind}
    \Gamma_{X_2\rightarrow X_1}=\frac{1}{2}+\frac{1}{2}\underset{x\rightarrow \infty}{\lim}\frac{\beta_{H\rightarrow 2}^{\alpha_H} \p(\epsilon_H>x)}{\beta_{2\rightarrow 2}^{\alpha_2} \p(\epsilon_2>x)+\beta_{H\rightarrow 2}^{\alpha_H} \p(\epsilon_H>x)}
\end{equation}
and for a causal connection $X_1\to X_2$:
\begin{equation}\label{eq:ctc_val_conf}
    \Gamma_{X_2\rightarrow X_1}=\frac{1}{2}+\frac{1}{2}\underset{x\rightarrow \infty}{\lim}\frac{\beta_{1\rightarrow 2}^{\alpha_1} \p(\epsilon_1>x)+\beta_{H\rightarrow 2}^{\alpha_H} \p(\epsilon_H>x)}{\beta_{1\rightarrow 2}^{\alpha_1} \p(\epsilon_1>x)+\beta_{2\rightarrow 2}^{\alpha_2} \p(\epsilon_2>x)+\beta_{H\rightarrow 2}^{\alpha_H} \p(\epsilon_H>x)}
\end{equation}
which in both cases equals 1 for $\alpha_H<\min(\alpha_1,\alpha_2)$. Because $\Gamma_{X_1\rightarrow X_2}=1$ in this case as well, the unconditional CTC cannot identify the causal direction. 

For the results on $\Gamma_{X_1\rightarrow X_2|H}$ we note that it is constructed in such a way that it removes the effect of the confounder.

It remains to consider the intermediate case $\min(\alpha_1,\alpha_2)\leq \alpha_H \leq \max(\alpha_1,\alpha_2)$. For no directed causal connection between
$X_1$ and $X_2$ we obtain, analogously to \eqref{eq:ctc_val_conf_ind}, and $j\neq i$ 
\[
    \Gamma_{X_i\to X_j}
    =
    \frac{1}{2}
    +
    \frac{1}{2}
    \lim_{x\to\infty}
    \frac{
        \beta_{H\to i}^{\alpha_H}
        \mathbb{P}(\epsilon_H>x)
    }{
        \beta_{i\to i}^{\alpha_i}\mathbb{P}(\epsilon_i>x)
        +
        \beta_{H\to i}^{\alpha_H}
        \mathbb{P}(\epsilon_H>x)
    }.
\]
If $\alpha_H<\alpha_i$, then $\epsilon_H$ is heavier-tailed than
$\epsilon_i$ and therefore $\Gamma_{X_i\to X_j}=1$.
If $\alpha_H>\alpha_i$, then $\epsilon_i$ is heavier-tailed than
$\epsilon_H$ and hence $\Gamma_{X_i\to X_j}=0.5$. Finally, if $\alpha_H=\alpha_i$, then
\[
    \Gamma_{X_i\to X_j}
    =
    \frac{1}{2}
    +
    \frac{1}{2}
    \frac{
        \beta_{H\to i}^{\alpha_H}
    }{
         \beta_{i\to i}^{\alpha_i}+\beta_{H\to i}^{\alpha_H}
    }
    \in
    \left(\frac{1}{2},1\right).
\]

For a causal connection $X_1\to X_2$, \eqref{eq:ctc_val_conf} gives us for $\alpha_2<\min(\alpha_1,\alpha_H)$, $\Gamma_{X_2\to X_1}=0.5$. If instead $\min(\alpha_1,\alpha_H)<\alpha_2$, then at least one of $\epsilon_1$ and $\epsilon_H$ has a strictly heavier tail than $\epsilon_2$. The same dominant term then appears
in both the numerator and denominator, and therefore $\Gamma_{X_2\to X_1}=1$. For $\alpha_2=\alpha_H\leq \alpha_1$ it follows $\Gamma_{X_2\to X_1}\in \left(0.5,1\right)$. This leads us to $c_{H1},c_{H2}, c_H'\in [0.5,1]$.

\end{proof}

\subsection{Proof of Theorem \ref{theorem:independence}}
\begin{proof}
Since $\beta_{j\to i}>0$ are fixed, scaling an innovation by
$\beta_{j\to i}$ does not change its tail index and only affects
multiplicative constants. Hence, for notational simplicity, we set
all causal coefficients equal to one.

We first consider the case of a lighter-tailed confounder and set $X_1=\epsilon_H+\epsilon_1,\, X_2=\epsilon_H+\epsilon_2$ with $\alpha_H>\max(\alpha_1,\alpha_2)$. All steps for $X_1$ apply analogously to $X_2$ and are therefore omitted. Define $\rho_2:=\min\{1,\alpha_H-\alpha_2\}$.

Further, for $t>0$, let
\[
    \mu(t):=\e[F_1(X_1)\mid X_2>t],
    \qquad
    \sigma^2(t):=\mathbb V(F_1(X_1)\mid X_2>t).
\]

By Lemma~\ref{lem:tail_diff}, Lemma~\ref{lem:independece}, and
Corollary~\ref{cor:bounded_conditional_rate}, for every bounded
measurable $g$ and sufficiently small $\delta>0$,
\[
    \e[g(X_1)\mid X_2>u]-\e[g(X_1)]  =O(u^{-\rho_2+\delta}),  \qquad u\to\infty.
\]

Let $u_n:=F_2^{\leftarrow}(1-k/n)$ be the $1-k/n$-quantile of $X_2$, so that $\p(X_2>u_n)=k/n$. Since $\alpha_H>\alpha_2$,
\[
    \p(X_2>x)\sim\p(\epsilon_2>x),
\]
and hence $\p(\epsilon_2>u_n)\sim k/n$. For $k=n^\nu$, regular variation therefore yields
\[
    \e[g(X_1)\mid X_2>u_n]-\e[g(X_1)]  = O\left(n^{-(1-\nu)\rho_2/\alpha_2+\delta}\right).
\]
Applying the preceding bound first with $g(x)=F_1(x)$ and then with $g(x)=F_1(x)^2$, and using
$F_1(X_1)\sim \mathrm{Unif}(0,1)$, gives $\mu(u_n)  = \frac12+ O\left(n^{-(1-\nu)\rho_2/\alpha_2+\delta}\right)$ and $\sigma^2(u_n)\to\frac1{12}$.

Now let $T_n:=X_{(n-k),2}$. Since $\bar F_2\in RV(-\alpha_2)$ and $k\to\infty$, $k/n\to0$, gives
\[
    \frac{T_n}{u_n}\xrightarrow{P}1.
\]
Consequently, the preceding threshold bounds also give
\[
    \mu(T_n)-\frac12 =   O_P\left( n^{-(1-\nu)\rho_2/\alpha_2+\delta}\right), \qquad \sigma^2(T_n)\to\frac1{12}.
\]
Since $k=n^\nu$ and
\[
    \nu<\frac{2\rho_2}{\alpha_2+2\rho_2},
\]
$\delta>0$ can be chosen sufficiently small such that
\[
    \sqrt{k}\left|\mu(T_n)-\frac12\right|=o_P(1).
\]

Define the oracle estimator
\[
    \tilde\Gamma_{X_2\to X_1}
    :=
    \frac1k\sum_{i=1}^n
    F_1(X_{i,1})
    \mathbf 1\{X_{i,2}>T_n\}.
\]
Let $J_n:=\{i:X_{i,2}>T_n\}$. By continuity of $F_2$, $|J_n|=k$ almost surely. Write $J_n=\{j_{n,1},\ldots,j_{n,k}\}$ and define $U_{n,r}:=F_1(X_{j_{n,r},1}),\, r=1,\ldots,k$. Then
\[
    \tilde\Gamma_{X_2\to X_1}
    =
    \frac1k\sum_{r=1}^k U_{n,r}.
\]

Let $\mathcal A_n:=\sigma(T_n,J_n)$. Conditional on $\mathcal A_n$, the selected observations are i.i.d. with conditional
mean $\mu(T_n)$ and variance $\sigma^2(T_n)$.

With $s_n^2:=k\sigma^2(T_n)$, the $\mathcal{A}_n$-Lindeberg condition is, for every $\varepsilon>0$,
\[
    \frac1{s_n^2}
    \sum_{r=1}^k
    \e\left[
        (U_{n,r}-\mu(T_n))^2
        \mathbf 1\{
            |U_{n,r}-\mu(T_n)|>\varepsilon s_n
        \}
        \,\middle|\,\mathcal A_n
    \right]
    \xrightarrow{P}0,
\]
fulfilled, since $0\le U_{n,r}\le1$ and $\sigma^2(T_n)\xrightarrow{P}1/12$ and therefore $s_n=\sqrt{k}\sigma(T_n)\xrightarrow{P}\infty$.

Hence, by the Conditional Central Limit Theorem \cite[Theorem~1 and Corollary~3]{bulinski2017conditional},
\[
    \frac{
        \sqrt{k}
        \bigl(
            \tilde\Gamma_{X_2\to X_1}-\mu(T_n)
        \bigr)
    }{\sigma(T_n)}
    \xrightarrow{d}\mathcal{N}(0,1).
\]

Using $\sqrt{k}\bigl(\mu(T_n)-1/2\bigr)=o_P(1)$ and $\sigma^2(T_n)\xrightarrow{P}\frac1{12}$ Slutsky's theorem yields
\[
    \sqrt{k}\left(
    \tilde\Gamma_{X_2\to X_1}-\frac12
    \right)
    \xrightarrow{d}
    \mathcal{N}\left(0,\frac1{12}\right).
\]
It remains to replace $F_1$ by $\hat F_1$. Since exactly $k$ observations enter the estimator,
\[
\sqrt{k}
\left|
    \hat\Gamma_{X_2\to X_1}
    -
    \widetilde\Gamma_{X_2\to X_1}
\right|
\leq
\sqrt{k}\,
\|\hat F_1-F_1\|_\infty.
\]
By the Dvoretzky-Kiefer-Wolfowitz \citep{massart1990tight} inequality we get 
\[
\p\left(\sqrt{n}\underset{x}{\sup}|\hat{F}_1(x)-F_1(x)|>\lambda\right)\leq 2\exp(-2\lambda^2),
\]
which yields with $\lambda=\sqrt{0.5\log(2/\varepsilon)}$
\[
\p\left(\underset{x}{\sup}|\hat{F}_1(x)-F_1(x)|>\frac{\lambda}{\sqrt{n}}\right)\leq \varepsilon
\]
for every $\varepsilon>0$ and therefore
\[
\sqrt{k}\,\|\hat F_1-F_1\|_\infty =O_{\p}\left(\sqrt{\frac{k}{n}}\right)=o_{\p}(1).
\]
Hence, by Slutsky's theorem,

\[
\sqrt{k}\left(\hat{\Gamma}_{X_2\to X_1}-\frac12\right)\xrightarrow{d}\mathcal N\left(0,\frac1{12}\right).
\]

Exchanging the roles of $X_1$ and $X_2$ gives the corresponding result for $\hat\Gamma_{X_1\to X_2}$ with $\rho_1:=\min\{1,\alpha_H-\alpha_1\}$. Thus both results hold provided
\[
    \nu<
    \min_{j\in\{1,2\}}
    \frac{2\rho_j}{\alpha_j+2\rho_j}.
\]

If no confounder is present, $X_1$ and $X_2$ are independent under $H_0$. Hence the conditional mean and variance  equal $1/2$ and $1/12$ exactly. The same argument therefore applies for $k/n\to 0$ as $k,n\to \infty$.
\end{proof}

\subsection{Proof of Lemma \ref{lemma:empcopula}}

\begin{proof}
We prove the two statements separately.

\medskip
\noindent
(1) 
Since $U_2$ takes values in $[0,1]$, we may use the identity
\[
\e(U_2\mid U_1>1-t)
=
\int_0^1 \p(U_2>s\mid U_1>1-t)\,ds.
\]
By the definition of conditional probability,
\[
\p(U_2>s\mid U_1>1-t)
=
\frac{\p(U_2>s,\;U_1>1-t)}{\p(U_1>1-t)}.
\]
Because $U_1\sim\mathrm{Unif}(0,1)$, we have
\[
\p(U_1>1-t)=t.
\]
Moreover, by definition of the survival copula,
\[
\p(U_2>s,\;U_1>1-t)
=
\p(U_1>1-t,\;U_2>1-(1-s))
=
\bar C(t,1-s).
\]
Hence
\[
\e(U_2\mid U_1>1-t)
=
\frac{1}{t}\int_0^1 \bar C(t,1-s)\,ds.
\]
By substitution with $r=1-s$, this becomes
\[
\e(U_2\mid U_1>1-t)
=
\frac{1}{t}\int_0^1 \bar C(t,r)\,dr.
\]

Taking the limit as $t\downarrow 0$ proves the first claim.

\medskip
\noindent
(2)
Let
\[
R_{m,1}:=\sum_{\ell=1}^n \mathbf{1}\{X_{\ell,1}\le X_{m,1}\},
\qquad
R_{m,2}:=\sum_{\ell=1}^n \mathbf{1}\{X_{\ell,2}\le X_{m,2}\}
\]
denote the ranks of $X_{m,1}$ and $X_{m,2}$, respectively. Since the margins are
continuous, there are no ties almost surely, and therefore
\[
\hat F_{2}(X_{m,2})=\frac{R_{m,2}}{n},
\qquad
\mathbf{1}\{X_{m,1}>X_{(n-k),1}\}
=
\mathbf{1}\{R_{m,1}>n-k\}.
\]
Thus,
\[
\hat{\Gamma}_{X_1 \to X_2}
=
\frac{1}{kn}\sum_{m=1}^n R_{m,2}\mathbf{1}\{R_{m,1}>n-k\}.
\]
Using the identity
\[
R_{m,2}=\sum_{r=1}^n \mathbf{1}\{R_{m,2}\ge r\},
\]
we obtain
\[
\hat{\Gamma}_{X_1 \to X_2}
=
\frac{1}{kn}\sum_{m=1}^n\sum_{r=1}^n
\mathbf{1}\{R_{m,2}\ge r,\;R_{m,1}>n-k\}.
\]
Interchanging the order of summation yields
\[
\hat{\Gamma}_{X_1 \to X_2}
=
\frac{1}{k}\sum_{r=1}^n
\frac{1}{n}\sum_{m=1}^n
\mathbf{1}\{R_{m,2}\ge r,\;R_{m,1}>n-k\}.
\]

On the other hand, the empirical survival copula can be written as
\[
\hat{\bar C}_n(u_1,u_2)
=
\frac{1}{n}\sum_{m=1}^n
\mathbf{1}\{R_{m,1}>n(1-u_1),\;R_{m,2}>n(1-u_2)\},
\qquad u\in[0,1]^2.
\]
Now set $s=n-r+1$. Then
\[
R_{m,2}\ge r
\quad\Longleftrightarrow\quad
R_{m,2}>n-s,
\]
and hence
\[
\mathbf{1}\{R_{m,2}\ge r,\;R_{m,1}>n-k\}
=
\mathbf{1}\{R_{m,2}>n-s,\;R_{m,1}>n-k\}.
\]
Therefore,
\[
\frac{1}{n}\sum_{m=1}^n
\mathbf{1}\{R_{m,2}\ge r,\;R_{m,1}>n-k\}
=
\hat{\bar C}_n\left(\frac{k}{n},\frac{s}{n}\right).
\]
Since $r=1,\dots,n$ corresponds exactly to $s=1,\dots,n$, we conclude that
\[
\hat{\Gamma}_{X_1 \to X_2}
=
\frac{1}{k}\sum_{s=1}^n
\hat{\bar C}_n\left(\frac{k}{n},\frac{s}{n}\right).
\]
This proves the second claim.
\end{proof}

\subsection{Proof of Theorem \ref{theorem:causal}}

\begin{proof}
By Lemma~\ref{lemma:empcopula},
\[
\Gamma_{X_1\to X_2}(t)
=
\e\left(F_2(X_2)\mid F_1(X_1)>1-t\right)
=
\frac{1}{t}\int_0^1 \bar C(t,s)\,ds,
\]
and
\[
\hat{\Gamma}_{X_1\to X_2}
=
\frac{1}{k}\sum_{s=1}^n
\hat{\bar C}_n\left(\frac{k}{n},\frac{s}{n}\right),
\]
where $\hat{\bar C}_n$ denotes the empirical survival copula.

We decompose
\[
\sqrt{n}\Bigl(
\hat{\Gamma}_{X_1\to X_2}
-
\Gamma_{X_1\to X_2}(t)
\Bigr)
=
A_n+B_n+D_n,
\]
where
\[
A_n
:=
\sqrt{n}\left[
\frac{1}{t}\cdot\frac{1}{n}\sum_{s=1}^n
\hat{\bar C}_n\left(t,\frac{s}{n}\right)
-
\frac{1}{t}\cdot\frac{1}{n}\sum_{s=1}^n
\bar C\left(t,\frac{s}{n}\right)
\right],
\]
\[
B_n
:=
\sqrt{n}\left[
\frac{1}{t}\cdot\frac{1}{n}\sum_{s=1}^n
\bar C\left(t,\frac{s}{n}\right)
-
\frac{1}{t}\int_0^1 \bar C(t,s)\,ds
\right],
\]
and
\[
D_n
:=
\sqrt{n}\left[
\hat{\Gamma}_{X_1\to X_2}
-
\frac{1}{t}\cdot\frac{1}{n}\sum_{s=1}^n
\hat{\bar C}_n\left(t,\frac{s}{n}\right)
\right].
\]

We first consider $A_n$. Define the empirical survival copula process
\[
\mathbb G_n(u,v):= \sqrt{n}\bigl(\hat{\bar C}_n(u,v)-\bar C(u,v)\bigr), \qquad (u,v)\in[0,1]^2.
\]
Then
\[
A_n
=
\frac{1}{t}\cdot\frac{1}{n}\sum_{s=1}^n
\mathbb G_n\left(t,\frac{s}{n}\right).
\]

By weak convergence results for the empirical copula process, see
\citet{segers2012asymptotics}, the process $\mathbb G_n$ converges weakly in
$\ell^\infty([0,1]^2)$ to a centered Gaussian process $\mathbb G$ with almost
surely continuous sample paths. 
Define
\[
T_n(f):=\frac{1}{n}\sum_{s=1}^n f\!\left(\frac{s}{n}\right),
\qquad T(f):=\int_0^1 f(s)\,ds.
\]
Since $T_n(f)\to T(f)$ for every continuous $f$ by the convergence of Riemann
sums, the extended continuous mapping theorem \citep{van1996weak} yields
\[
A_n \xrightarrow{d} \frac{1}{t}\int_0^1 \mathbb G(t,s)\,ds,
\]
where the limit is centered Gaussian.

Since the survival copula $ \bar{C}(t,\cdot)$ is Lipschitz continuous with Lipschitz constant at most
$1$ \citep{nelsen2006introduction}, we obtain
\[
\begin{aligned}
\left|
\frac{1}{n}\sum_{s=1}^n \bar C\left(t,\frac{s}{n}\right)
-
\int_0^1 \bar C(t,u)\,du
\right|
&\le
\sum_{s=1}^n
\int_{(s-1)/n}^{s/n}
\left|
\bar C\left(t,\frac{s}{n}\right)-\bar C(t,u)
\right|\,du \\
&\le
\sum_{s=1}^n
\int_{(s-1)/n}^{s/n}
\left|
\frac{s}{n}-u
\right|\,du \\
&\le
\sum_{s=1}^n \int_{(s-1)/n}^{s/n} \frac{1}{n}\,du
=
\frac{1}{n}.
\end{aligned}
\]
This implies that the corresponding discretization error is of order
$O(n^{-1})$ and hence
\[
B_n
=
\sqrt{n}\,
O\!\left(\frac{1}{n}\right)
=
O\!\left(\frac{1}{\sqrt{n}}\right)
\to 0.
\]

It remains to control $D_n$. We can split $D_n$ as
\begin{align*}
D_n
&=
\sqrt{n}\left[\frac{1}{k}\sum_{s=1}^n \widehat{\bar C}_n \left( \frac{k}{n}, \frac{s}{n}\right)
    - \frac{1}{t}\frac{1}{n}\sum_{s=1}^n\widehat{\bar C}_n\left(t,\frac{s}{n}\right)\right]\\
&=
\sqrt{n}\left( \frac{1}{k}-\frac{1}{tn}\right)\sum_{s=1}^n\widehat{\bar C}_n\left(\frac{k}{n}, \frac{s}{n}\right)
+
\frac{\sqrt{n}}{tn}
\sum_{s=1}^n
\left[ \widehat{\bar C}_n\left(\frac{k}{n},\frac{s}{n} \right)
    -\widehat{\bar C}_n \left(t, \frac{s}{n}\right)\right]\\
&=:D_{n,1}+D_{n,2}.
\end{align*}

We show that both terms converge to zero. Since $k=\lfloor tn \rfloor$, write $\delta_n:=tn-k\in[0,1)$.
Then
\[
    \left| \frac{1}{k}-\frac{1}{tn} \right|
    = \frac{tn-k}{ktn}
    =\frac{\delta_n}{ktn} \leq \frac{1}{ktn}.
\]
Since $k/n\to t>0$, it follows that
\[
    \left| \frac{1}{k}-\frac{1}{tn} \right|
    =O(k^{-2})=O(n^{-2}).
\]
Moreover, $0\leq\widehat{\bar C}_n(u,v)\leq1$, and therefore
\begin{align*}
    |D_{n,1}|
    &\leq
    \sqrt{n}\,
    \left|
        \frac{1}{k}-\frac{1}{tn}
    \right|
    \sum_{s=1}^n
    \widehat{\bar C}_n
    \left(
        \frac{k}{n},
        \frac{s}{n}
    \right)
\\
    &\leq
    \sqrt{n}\,
    n\,
    O(n^{-2})
    =
    O(n^{-1/2})
    \to 0.
\end{align*}

For $D_{n,2}$, note that $k=\lfloor tn\rfloor$ implies
\[
    \left|t-\frac{k}{n}\right|<\frac{1}{n}.
\]
Since the margins are continuous, the corresponding empirical
thresholds differ by less than one rank. Hence, uniformly in
$v\in[0,1]$,
\[
    \left|
        \widehat{\bar C}_n\left(\frac{k}{n},v\right)
        -
        \widehat{\bar C}_n(t,v)
    \right|
    \leq \frac{1}{n}.
\]
Therefore,
\begin{align*}
    |D_{n,2}|
    &\leq
    \frac{\sqrt{n}}{tn}
    \sum_{s=1}^n
    \left|
        \widehat{\bar C}_n
        \left(\frac{k}{n},\frac{s}{n}\right)
        -
        \widehat{\bar C}_n
        \left(t,\frac{s}{n}\right)
    \right| \\
    &\leq
    \frac{\sqrt{n}}{tn}
    \sum_{s=1}^n \frac{1}{n}
    =
    \frac{1}{t\sqrt{n}}
    \to 0.
\end{align*}

Thus $D_n=D_{n,1}+D_{n,2}\to 0$.


Hence Slutsky's theorem implies
\[
\sqrt{n}\Bigl(
\hat{\Gamma}_{X_1\to X_2}
-
\Gamma_{X_1\to X_2}(t)
\Bigr)
\xrightarrow{d}
\mathcal N(0,\sigma_{\Gamma_1}^2)
\]
for some $\sigma_{\Gamma_1}^2\ge 0$.

The corresponding result for $\hat{\Gamma}_{X_2\to X_1}$ follows by the
same argument, exchanging the roles of $X_1$ and $X_2$. For the difference $
\hat{\Delta}_{X_1\to X_2}$ note that, by the representations above,
\[
\sqrt{n}\bigl(\hat{\Gamma}_{X_1\to X_2}-\Gamma_{X_1\to X_2}(t)\bigr)
=
\frac{1}{t}\int_0^1 \mathbb G(t,s)\,ds + o_p(1),
\]
\[
\sqrt{n}\bigl(\hat{\Gamma}_{X_2\to X_1}-\Gamma_{X_2\to X_1}(t)\bigr)
=
\frac{1}{t}\int_0^1 \mathbb G(s,t)\,ds + o_p(1),
\]
where $\mathbb G$ is again a centered Gaussian process. Hence
\[
\sqrt{n}\bigl(
\hat{\Delta}_{X_1\to X_2}-\Delta_{X_1\to X_2}(t)
\bigr)
=
\frac{1}{t}\int_0^1\{\mathbb G(t,s)-\mathbb G(s,t)\}\,ds + o_p(1)
\xrightarrow{d}
\mathcal N(0,\sigma_\Delta^2),
\]
for some $\sigma_\Delta^2\ge 0$.
\end{proof}

\subsection{Proof of Theorem \ref{theorem}}
\begin{proof}
Analogously to the proof of Theorem~\ref{theorem:independence}, we omit the structural coefficients. Thus, without loss of generality, we may write  $X_1=\epsilon_1,\, X_2=X_1+\epsilon_2$.
By Lemma~\ref{lem:independece},
\[
    \p(X_1\le x\mid X_2>u)  \sim  \p(X_1\le x),  \qquad u\to\infty.
\]

The remainder of the proof follows analogously to the proof of Theorem~\ref{theorem:independence}. In particular, the same rate arguments with $\rho=\min\{1,\alpha_1-\alpha_2\}$ yield the asymptotic normality of the oracle estimator, and the replacement of $F_1$ by $\hat F_1$ is again $o_{\p}(k^{-1/2})$ by the Dvoretzky-Kiefer-Wolfowitz inequality. Hence
\[
    \sqrt{k}\left(
        \hat\Gamma_{X_2\to X_1}-\frac12
    \right)
    \xrightarrow{d}    \mathcal{N}\left(0,\frac1{12}\right).
\]
\end{proof}

\section{Additional Causal Configurations}\label{app:examples}
\renewcommand{\thefigure}{B.\arabic{figure}} \setcounter{figure}{0}
\begin{figure}[H]
\centering

\begin{minipage}[t]{0.49\textwidth}
\centering
\caption*{\textbf{(G)} Common confounder $H$}
\begin{tikzpicture}[>=stealth, scale=0.75, transform shape]
    \node[draw, circle] (h) {$H$};
    \node[draw, circle, below left=1cm and 1.2cm of h] (x1) {$X_1$};
    \node[draw, circle, below right=1cm and 1.2cm of h] (x2) {$X_2$};

    \draw[->] (h) -- node[midway, above left] {$1$} (x1);
    \draw[->] (h) -- node[midway, above right] {$1$} (x2);
\end{tikzpicture}
\end{minipage}
\hfill
\begin{minipage}[t]{0.49\textwidth}
\centering
\caption*{\textbf{(H)} Causal connection and confounder $H$}
\begin{tikzpicture}[>=stealth, scale=0.75, transform shape]
    \node[draw, circle] (h) {$H$};
    \node[draw, circle, below left=1cm and 1.2cm of h] (x1) {$X_1$};
    \node[draw, circle, below right=1cm and 1.2cm of h] (x2) {$X_2$};

    \draw[->] (h) -- node[midway, above left] {$1$} (x1);
    \draw[->] (h) -- node[midway, above right] {$1$} (x2);
    \draw[->] (x1) -- node[midway, above] {$1$} (x2);
\end{tikzpicture}
\end{minipage}

\vspace{0.2em}

\begin{minipage}[t]{0.49\textwidth}
\centering
\renewcommand{\arraystretch}{1.2}
\begin{tabular}{c|cc}
Index Case & $\Gamma_{X_1\to X_2}$ & $\Gamma_{X_2\to X_1}$ \\
\hline
$\alpha_1 > \alpha_2$ & 0.75&  0.5 \\
$\alpha_1 = \alpha_2$ & 0.75&  0.75 \\
$\alpha_1 < \alpha_2$ & 0.75 & 1
\end{tabular}
\end{minipage}
\hfill
\begin{minipage}[t]{0.49\textwidth}
\centering
\renewcommand{\arraystretch}{1.2}
\begin{tabular}{c|cc}
Index Case & $\Gamma_{X_1\to X_2}$ & $\Gamma_{X_2\to X_1}$ \\
\hline
$\alpha_1 > \alpha_2$ & 1 & 0.5 \\
$\alpha_1 = \alpha_2$ & 1&  $\boldsymbol{\frac{19}{20}}$\\
$\alpha_1 < \alpha_2$ & 1 & 1
\end{tabular}
\end{minipage}

\vspace{0.3cm}
\begin{minipage}[t]{0.49\textwidth}
\centering
\caption*{\textbf{(J)} Common confounder $H$}
\begin{tikzpicture}[>=stealth, scale=0.75, transform shape]
    \node[draw, circle] (h) {$H$};
    \node[draw, circle, below left=1cm and 1.2cm of h] (x1) {$X_1$};
    \node[draw, circle, below right=1cm and 1.2cm of h] (x2) {$X_2$};

    \draw[->] (h) -- node[midway, above left] {$1$} (x1);
    \draw[->] (h) -- node[midway, above right] {$1$} (x2);
\end{tikzpicture}
\end{minipage}
\hfill
\begin{minipage}[t]{0.49\textwidth}
\centering
\caption*{\textbf{(I)} Causal connection and confounder $H$}
\begin{tikzpicture}[>=stealth, scale=0.75, transform shape]
    \node[draw, circle] (h) {$H$};
    \node[draw, circle, below left=1cm and 1.2cm of h] (x1) {$X_1$};
    \node[draw, circle, below right=1cm and 1.2cm of h] (x2) {$X_2$};

    \draw[->] (h) -- node[midway, above left] {$1$} (x1);
    \draw[->] (h) -- node[midway, above right] {$1$} (x2);
    \draw[->] (x1) -- node[midway, above] {$1$} (x2);
\end{tikzpicture}
\end{minipage}

\vspace{0.2em}

\begin{minipage}[t]{0.49\textwidth}
\centering
\renewcommand{\arraystretch}{1.2}
\begin{tabular}{c|cc}
Index Case & $\Gamma_{X_1\to X_2}$ & $\Gamma_{X_2\to X_1}$ \\
\hline
$\alpha_1 > \alpha_2$ & 1 & 0.75 \\
$\alpha_1 = \alpha_2$ & 0.75&  0.75 \\
$\alpha_1 < \alpha_2$ & 0.5 & 0.75
\end{tabular}
\end{minipage}
\hfill
\begin{minipage}[t]{0.49\textwidth}
\centering
\renewcommand{\arraystretch}{1.2}
\begin{tabular}{c|cc}
Index Case & $\Gamma_{X_1\to X_2}$ & $\Gamma_{X_2\to X_1}$ \\
\hline
$\alpha_1 > \alpha_2$ & 1 & $\frac{17}{18}$ \\
$\alpha_1 = \alpha_2$ & 1&  $\boldsymbol{\frac{19}{20}}$\\
$\alpha_1 < \alpha_2$ & 1 & 1
\end{tabular}
\end{minipage}

\caption{Additional causal configurations between $X_1$ and $X_2$ with a confounder $H$. (G) and (H) correspond to the case $\alpha_H=\alpha_1$, whereas panels (I) and (J) correspond to $\alpha_H=\alpha_2$. For illustration all causal weights are set to 1 and $\alpha_H=3$.}

\label{fig:cases_2}
\end{figure}

\section{Test for Different Tail Indices}\label{app:tail_test}
\renewcommand{\thetable}{C.\arabic{table}}
\setcounter{table}{0}
\renewcommand{\theequation}{C.\arabic{equation}} \setcounter{equation}{0}

Let $X$ and $Y$ be two random variables with corresponding tail indices $\alpha_1$ and $\alpha_2$. To assess whether the tail behavior of the two variables differs, we test the null hypothesis
\begin{equation*}
    H_0: \alpha_1=\alpha_2 \qquad H_1:\alpha_1\neq \alpha_2,
\end{equation*}
using the test statistic proposed by \cite{hoga2018detecting}. Since the Hill estimator, which is central in this procedure, does not deliver unbiased results in finite samples if the location parameter does not vanish, we proceed as in \cite{hoga2018detecting} and \cite{drees2003extreme} by shifting the data with constants $u_W$, $W\in\{X,Y\}$ such that the transformed observations approximately follow a Pareto distribution, chosen as zero, the 95\% quantile or the maximal value. This transformation is validated using a Pareto quantile plot.

\begin{algorithm}
\caption{Test for different tail indices}
For observations $X_1,\ldots,X_n$ and $Y_1,\ldots,Y_n$ of Variables $X,Y$ where an underlying GPD \eqref{eq:GPD} in the tails is assumed. Let $X_{\lfloor nt \rfloor-\lfloor kt \rfloor:\lfloor nt \rfloor}$ denote the $(\lfloor kt \rfloor+1)$-largest value of $X_1,\ldots,X_{\lfloor nt \rfloor}$. Define 
\[
    \hat{\gamma}^{(k)}_{H,X}(t) = \frac{1}{\lfloor kt \rfloor} \sum_{i=1}^{\lfloor kt \rfloor} \log\left( \frac{X_{\lfloor nt \rfloor - i : \lfloor nt \rfloor}}{X_{\lfloor nt \rfloor - \lfloor kt \rfloor : \lfloor nt \rfloor}} \right)
\]
with $t\in (0,1]$.

\begin{enumerate}
    \item \textbf{Shift}: Set shift $u_X$ and $u_Y$, so that $X+u_X$ and $Y+u_Y$ are approximately Pareto distributed
    \item \textbf{Set}: 
    \[
k^*_W = \arg\min_{k = k_{\min}, \dots, k_{\max}} \left[
\sup_{j=1, \dots, k_{\max}} \left| W_{n-j:n} - \left(k/j\right)^{\hat{\gamma}_{H,W}} \, W_{n-k:n}\right| \right],
\] 
with $k_{min}=\max(\lfloor 0.05 n\rfloor,50)$ and $k_{max}=\lfloor n^{0.8} \rfloor$, $W\in\{X,Y\}$ and set $k^*=\min(k^*_X, k^*_Y)$.
    \item \textbf{Test}:
    \begin{enumerate}
        \item Calculate Hill estimator for $X$ and $Y$ with $\hat{\gamma}^{(k^*)}_{H,X}(1)$ and $\hat{\gamma}^{(k^*)}_{H,Y}(1)$
        \item Calculate test statistic
        \[
        T_{I} = \frac{ \left[ \hat{\gamma}_{H,X}(1) - \hat{\gamma}_{H,Y}(1) \right]^2}
        {\displaystyle\int_{t_0}^{1} t^2 \left\{ \left[ \hat{\gamma}^{(k^*)}_{H,X}(t) - \hat{\gamma}^{(k^*)}_{H,Y}(t) \right] - \left[ \hat{\gamma}^{(k^*)}_{H,X}(1) - \hat{\gamma}^{(k^*)}_{H,Y}(1) \right] \right\}^2 \, dt}
        \]
\item If $T_{I}>55.44$ reject $H_0$ at the $5\%-$level
    \end{enumerate}
\end{enumerate}
\end{algorithm}

The value 55.44 is set according to \cite{hoga2018detecting}, who gives the values in \autoref{tab:teststatistic_quantiles} for different $\alpha-$levels and $t_0=0.2$.
\begin{table}[ht]
    \centering
    \begin{tabular}{llllll}
    \toprule
       $\alpha$-level  & $0.5\%$ & $1\%$ & $2.5\%$ & $5\%$ & $10\%$\\
        Quantile & $166.1$ & $126.6$ & $82.26$ & $55.44$ & $34.01$\\
        \bottomrule
    \end{tabular}
    \caption{Quantiles of test statistic $T_{I}$ for different $\alpha-$levels under $H_0$.}
    \label{tab:teststatistic_quantiles}
\end{table}

\newpage
We also consider a test for equality of tail indices based on iid observations and the asymptotic normality of the Hill estimator. Under suitable regularity conditions (see \cite{de1998asymptotic}),
\begin{equation*}
    \sqrt{k}(\hat{\gamma}-\gamma)\overset{D}{\rightarrow}\mathcal{N}(0,\gamma^2).
\end{equation*}
This yields the test statistic
\begin{equation}\label{eq:hill_test_stat}
    T_{\gamma}=\sqrt{k}\frac{\hat{\gamma}_1-\hat{\gamma}_2}{\sqrt{\hat{\gamma}_1^2+\hat{\gamma}_2^2}}.
\end{equation}
Under $H_0:\gamma_1=\gamma_2$ we have $T_\gamma\to N(0,1)$ and reject $H_0$ if $|T_{\gamma}|>z_{1-\alpha/2}$, where $z_{1-\alpha/2}$ denotes the $1-\alpha/2$-quantile of a standard normal distribution.

\autoref{tab:alternative_index_test} reports the outcomes of the simulation study from Section~\ref{subsec:tail_index_sim} but instead of using the test proposed by \cite{hoga2018detecting}, it is based on the test statistic $T_\gamma$.

\section{Additional Simulations and Figures}
\label{app:add_figures}
\renewcommand{\thefigure}{D.\arabic{figure}} \setcounter{figure}{0}
\renewcommand{\thetable}{D.\arabic{table}}
\setcounter{table}{0}

\subsection{Additional Simulation Studies}
\autoref{fig:add_simu_conv_pareto} shows the results of the convergence study from Section
\ref{sec:simulation} for Pareto distributed variables instead of
Student's $t$. \autoref{fig:add_simu_conv_mod10} presents the corresponding
results for Student's $t$ distributed variables under model
\eqref{eq:model2}. While the Pareto distribution yields a faster convergence
of the CTC estimator to the true value, the results for
model \eqref{eq:model2} do not differ noticeably from those for model
\eqref{eq:model1}.

\autoref{fig:add_simu_conv_t_with_conf_mod9} shows the change in behavior of the CTC if a heavy-tailed confounder (here $\alpha_H=2$) is present based on model \eqref{eq:model1} with $\beta_{H1}=\beta_{H2}=0.5$.

\autoref{tab:alternative_index_test} reports the results for the Tail Index-Test when
using the alternative test statistic from \eqref{eq:hill_test_stat}. We
observe results similar to \autoref{tab:index_test_hoga}, that the test has higher power when the tail indices differ more strongly.

In \autoref{fig:simulation_causality_test_mod9} and \autoref{fig:simulation_lingam_mod9}, we visualize the convergence behavior of the Causality-Test and LiNGAM under model~\eqref{eq:model1} with $\beta_{H1}=\beta_{H2}=0$. The results show that, when the causal effect is not confined to the tails of the distribution, LiNGAM performs better and achieves more accurate classification
across sample sizes than the Causality-Test, which relies only on extreme observations.

\label{app:add_simulation}
\begin{figure}[H]
    \centering
    \includegraphics[width=\textwidth]{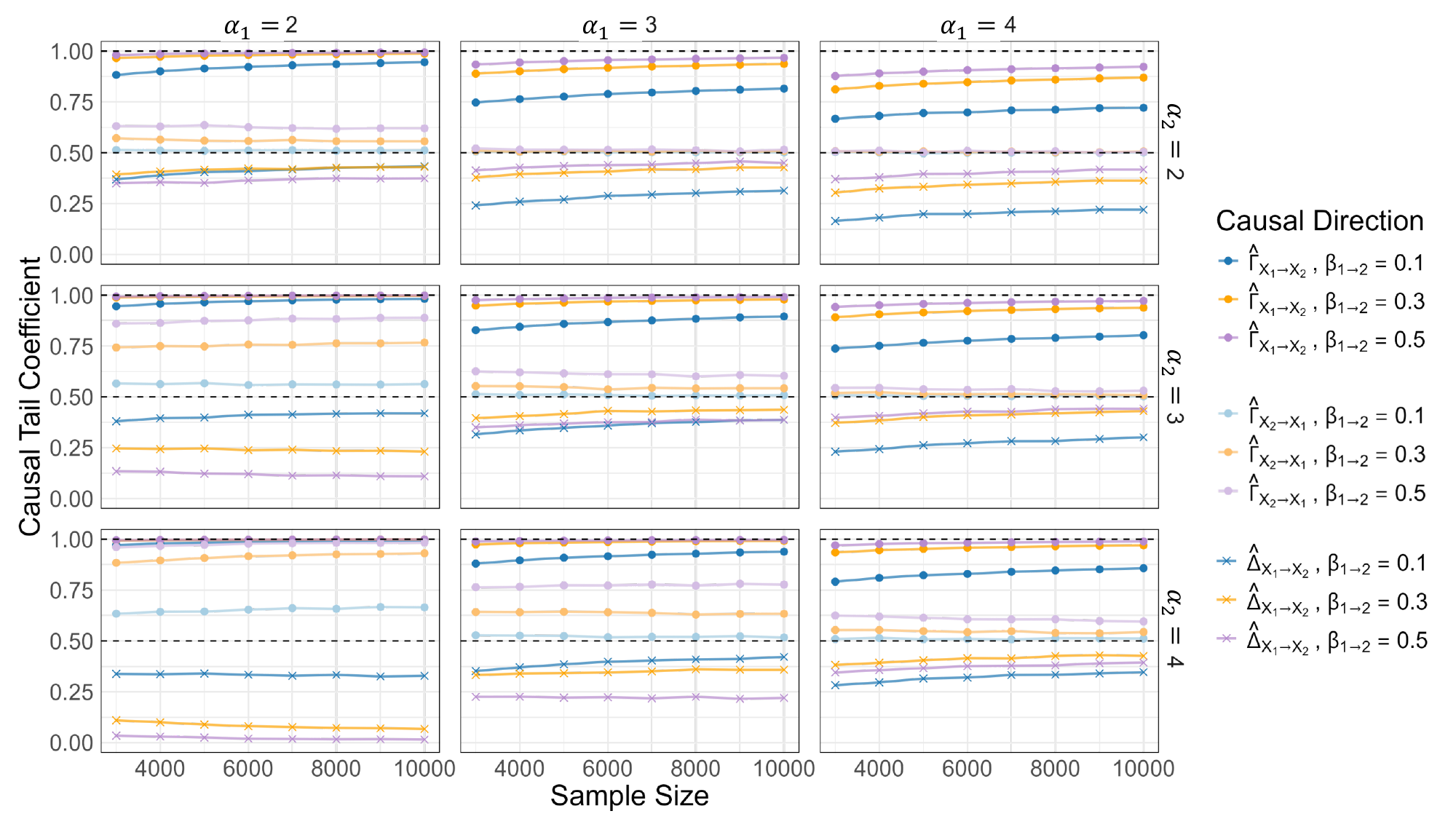}
    \caption{Estimations of the CTC under model \eqref{eq:model1} for different sample sizes, values of $\beta_{1 2}$ and tail indices $\alpha_1$ and $\alpha_2$, where $\epsilon_1$ and $\epsilon_2$ follow a Pareto distribution with tail indices $\alpha_1$ and $\alpha_2$.}
    \label{fig:add_simu_conv_pareto}
\end{figure}

\begin{figure}[H]
    \centering
    \includegraphics[width=\textwidth]{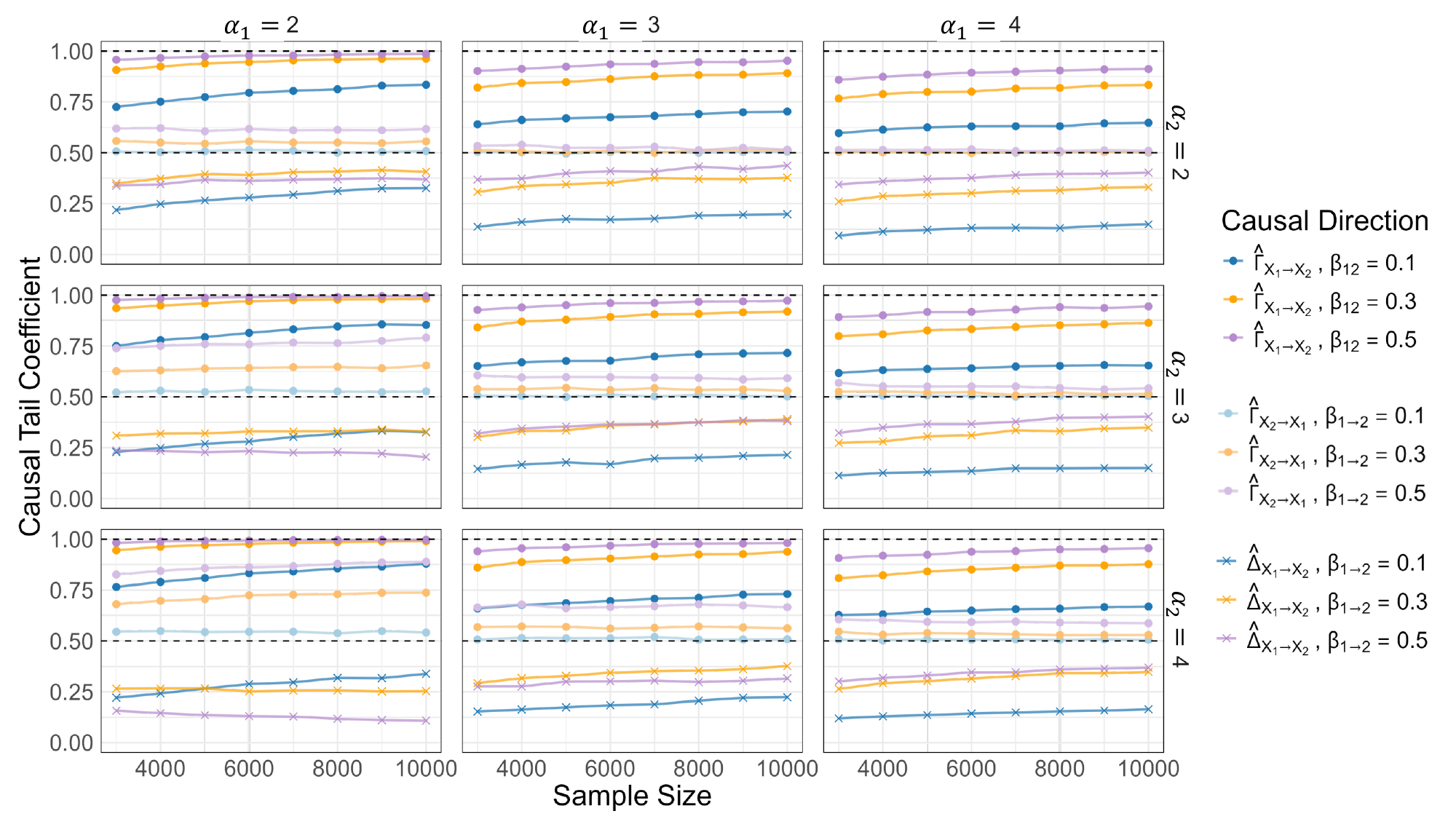}
    \caption{Estimations of the CTC under model \eqref{eq:model2} for different sample sizes, values of $\beta_{1 2}$ and tail indices $\alpha_1$ and $\alpha_2$, where $\epsilon_1$ and $\epsilon_2$ follow a Student's $t$ distribution with $\alpha_1$ and $\alpha_2$ degrees of freedom.}
    \label{fig:add_simu_conv_mod10}
\end{figure}

\begin{figure}[ht]
    \centering
    \includegraphics[width=\textwidth]{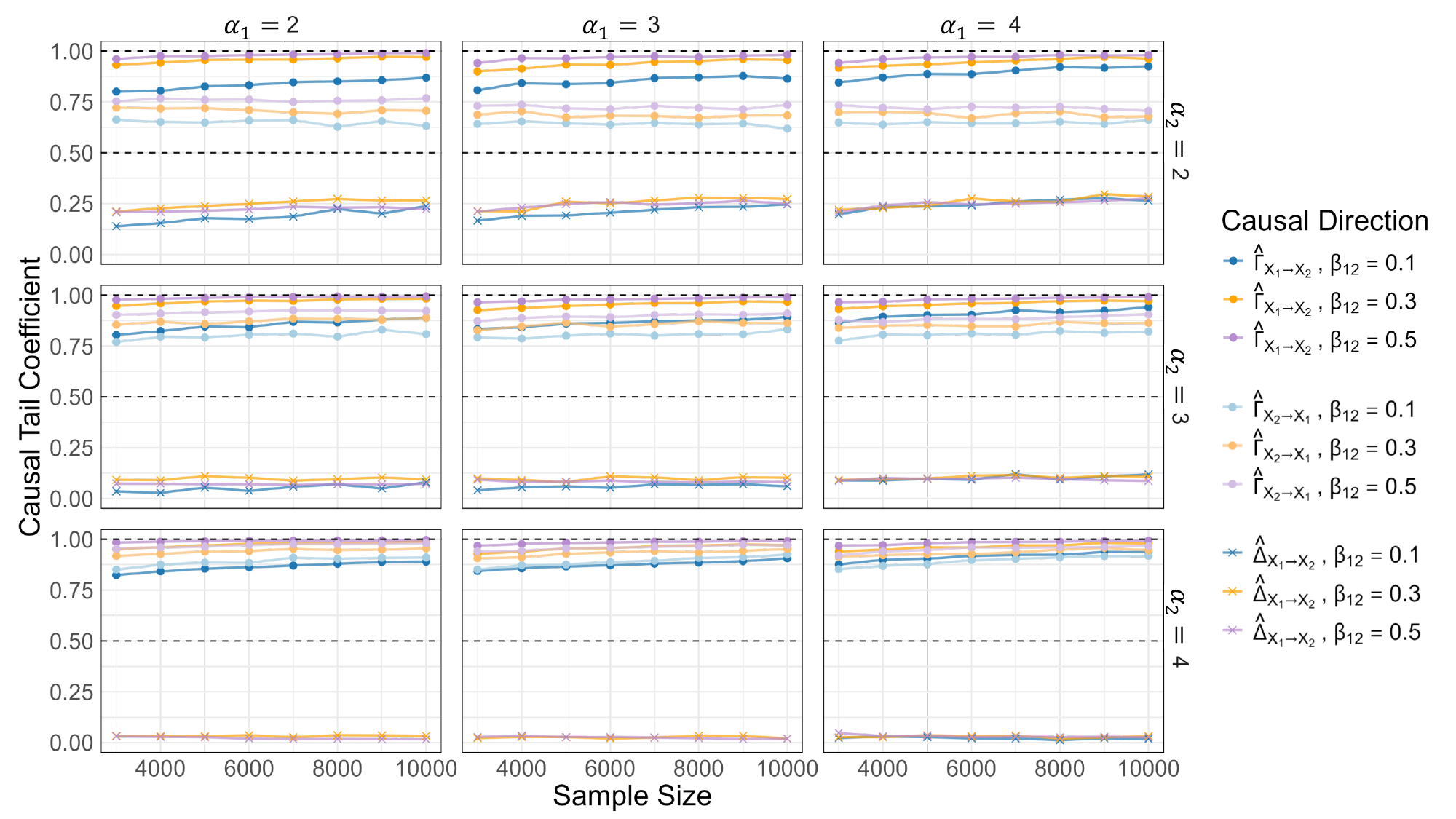}
    \caption{Estimations of the CTC in presence of a heavy-tailed confounder under model \eqref{eq:model1} for different sample sizes, values of $\beta_{1 2}$ and tail indices $\alpha_1$ and $\alpha_2$, where $\epsilon_1$ and $\epsilon_2$ follow a Student's $t$ distribution with $\alpha_1$ and $\alpha_2$ degrees of freedom. }
\label{fig:add_simu_conv_t_with_conf_mod9}
\end{figure}

\begin{figure}[H]
    \centering
    \includegraphics[width=\textwidth]{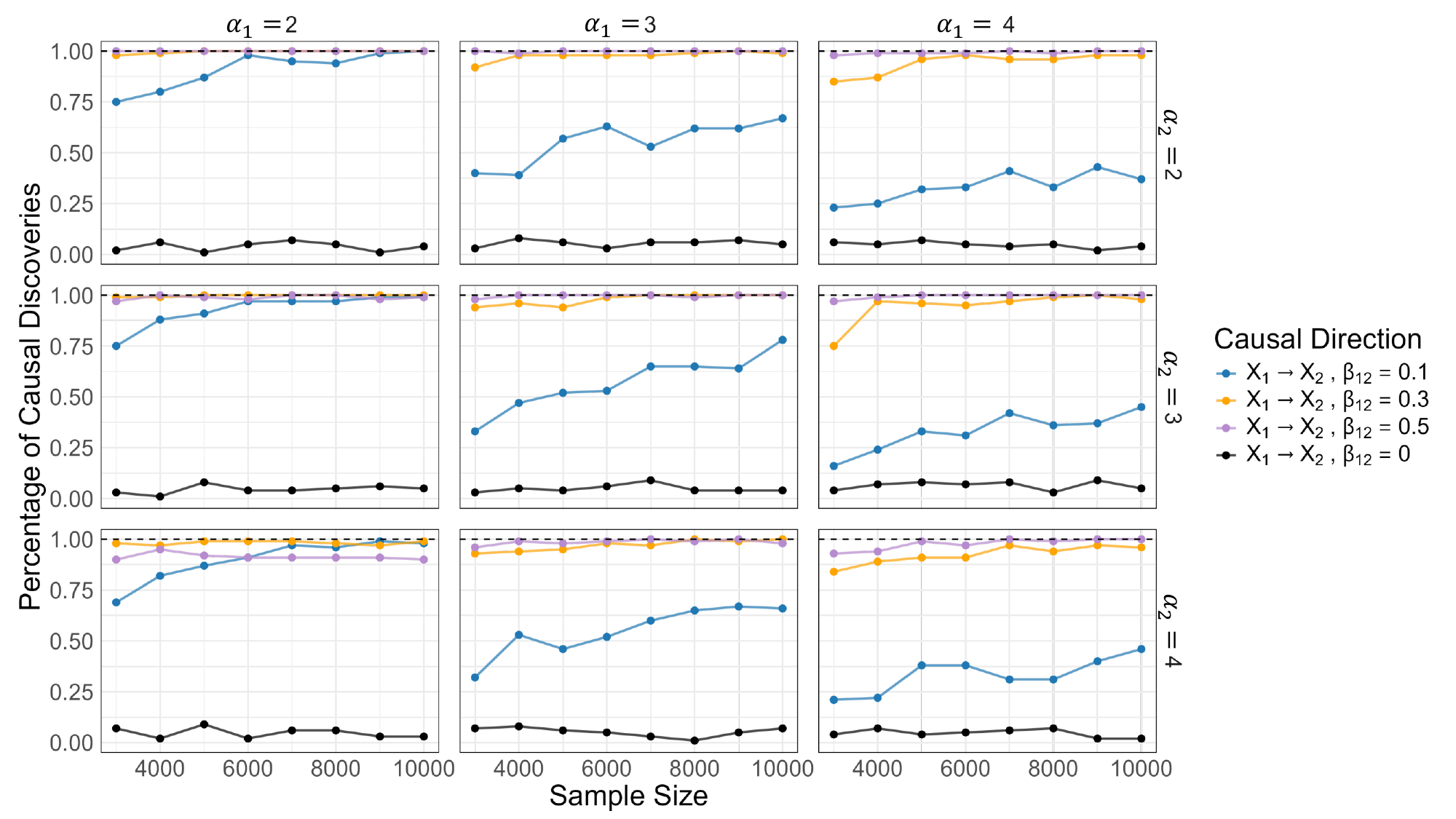}
    \caption{Percentage of causal discoveries for the Causality-Test under model \eqref{eq:model1}, based on the rejection rule $|\Delta_{X_1\to X_2}|> \Delta_{0.05}$, for different sample sizes, causal coefficients $\beta_{12}$, and tail indices $\alpha_1$ and $\alpha_2$, where $\epsilon_1$ and $\epsilon_2$ follow Student’s $t$ distributions with $\alpha_1$ and $\alpha_2$ degrees of freedom.}
    \label{fig:simulation_causality_test_mod9}
\end{figure}

\begin{figure}[H]
    \centering
    \includegraphics[width=\textwidth]{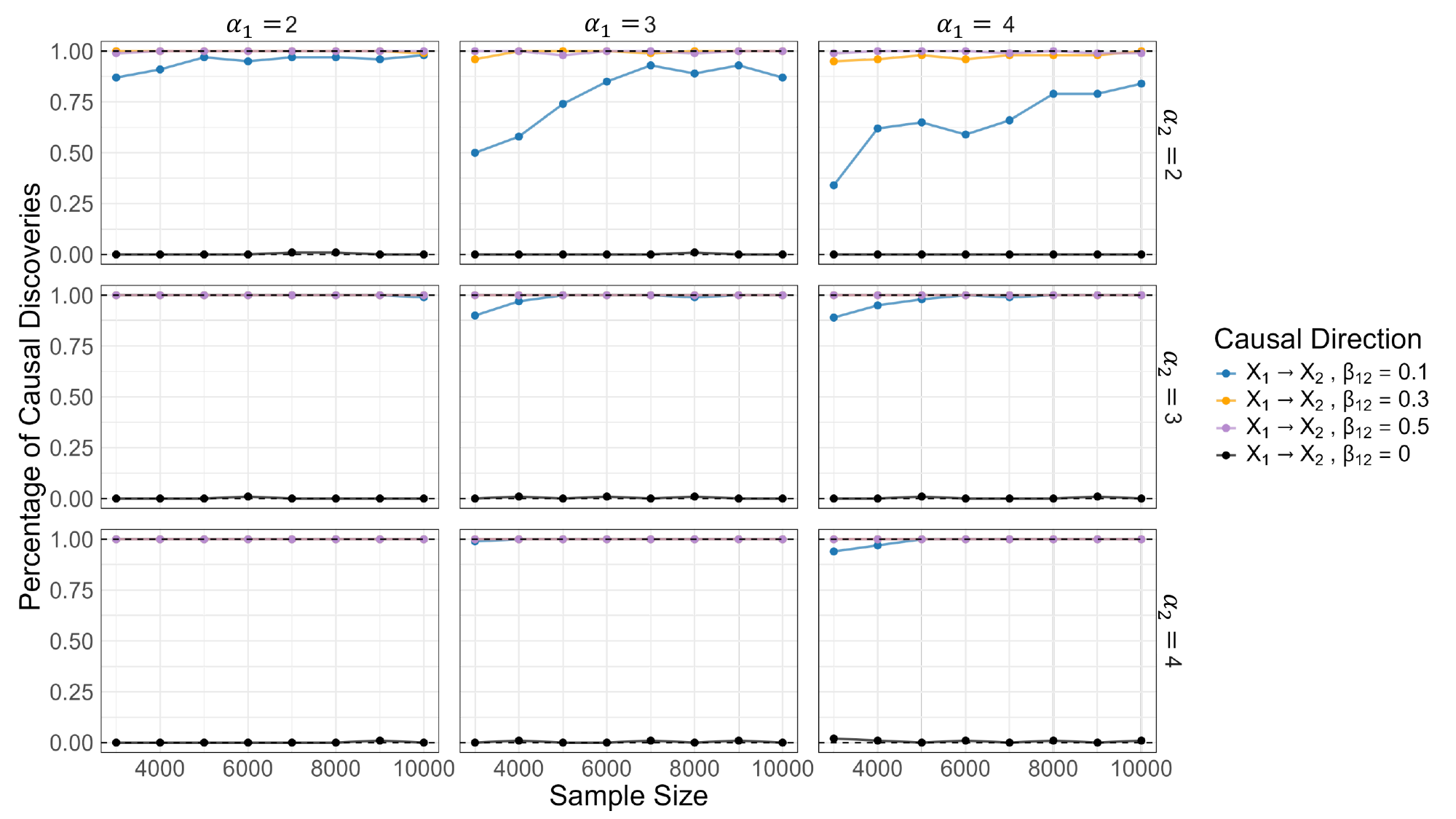}
    \caption{ Percentage of causal discoveries for LiNGAM under model \eqref{eq:model1} for different sample sizes, causal coefficients $\beta_{12}$, and tail indices $\alpha_1$ and $\alpha_2$, where $\epsilon_1$ and $\epsilon_2$  follow Student’s $t$ distributions with $\alpha_1$ and $\alpha_2$ degrees of freedom.}
    
    \label{fig:simulation_lingam_mod9}
\end{figure}

\begin{table}[ht]
    \centering
\begin{tabular}{cc|cc|cc|cc}
\toprule
\multicolumn{2}{c}{Tail indices} &
\multicolumn{2}{c}{no conf.} &
\multicolumn{2}{c}{light-tailed conf.} &
\multicolumn{2}{c}{heavy-tailed conf.} \\
$\alpha_1$ & $\alpha_2$ &
(A) & (B) & (C) & (D) & (E) & (F) \\
\midrule
\cellcolor{gray!20}2 & \cellcolor{gray!20}2 &  3.80 &  8.50 & 18.00 & 19.00 &   --   &   --   \\
2 & 3 & 99.60 & 99.70 & 96.90 & 98.50 &   --   &   --   \\
2 & 4 &100.00 &100.00 &100.00 &100.00 &   --   &   --   \\
\midrule
3 & 2 & 99.30 & 85.40 & 96.70 & 75.20 &   --   &   --   \\
\cellcolor{gray!20}3 & \cellcolor{gray!20}3 &  4.00 & 16.90 & 16.60 & 32.60 &  9.60 &  5.30 \\
3 & 4 & 70.30 & 91.00 & 62.30 & 85.10 & 34.80 & 22.50 \\
\midrule
4 & 2 &100.00 & 99.80 &100.00 & 98.90 &   --   &   --   \\
4 & 3 & 70.20 & 27.80 & 62.20 & 19.70 & 35.20 & 24.10 \\
\cellcolor{gray!20}4 & \cellcolor{gray!20}4 &  3.00 & 20.30 & 14.50 & 27.10 &  2.30 &  6.70 \\
\bottomrule
\end{tabular}
\caption{Empirical rejection rates (in \%) of the Tail Index-Test based on \eqref{eq:hill_test_stat} at the $5\%$ level for different tail index combinations ($\alpha_1,\alpha_2$) and configurations (A)-(F), where gray cells mark the equal index cases ($H_0$ true).}
\label{tab:alternative_index_test}
\end{table}

\clearpage
\subsection{Precipitation and Train Delays in Switzerland}

\begin{figure}[H]
    \centering
    \begin{subfigure}[b]{0.48\textwidth}
        \centering
        \includegraphics[width=\textwidth]{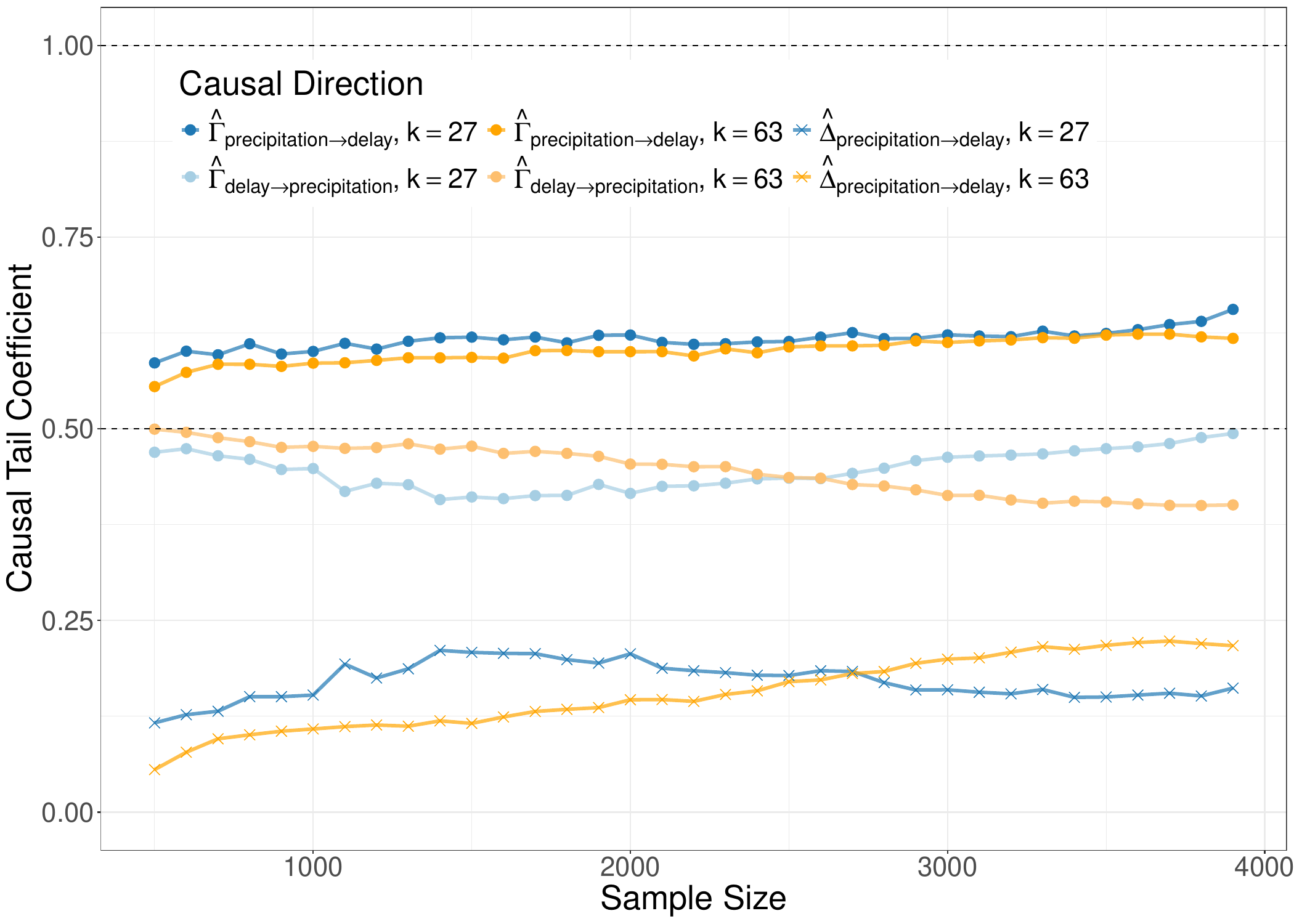}
    \end{subfigure}
    \begin{subfigure}[b]{0.48\textwidth}
        \centering
        \includegraphics[width=\textwidth]{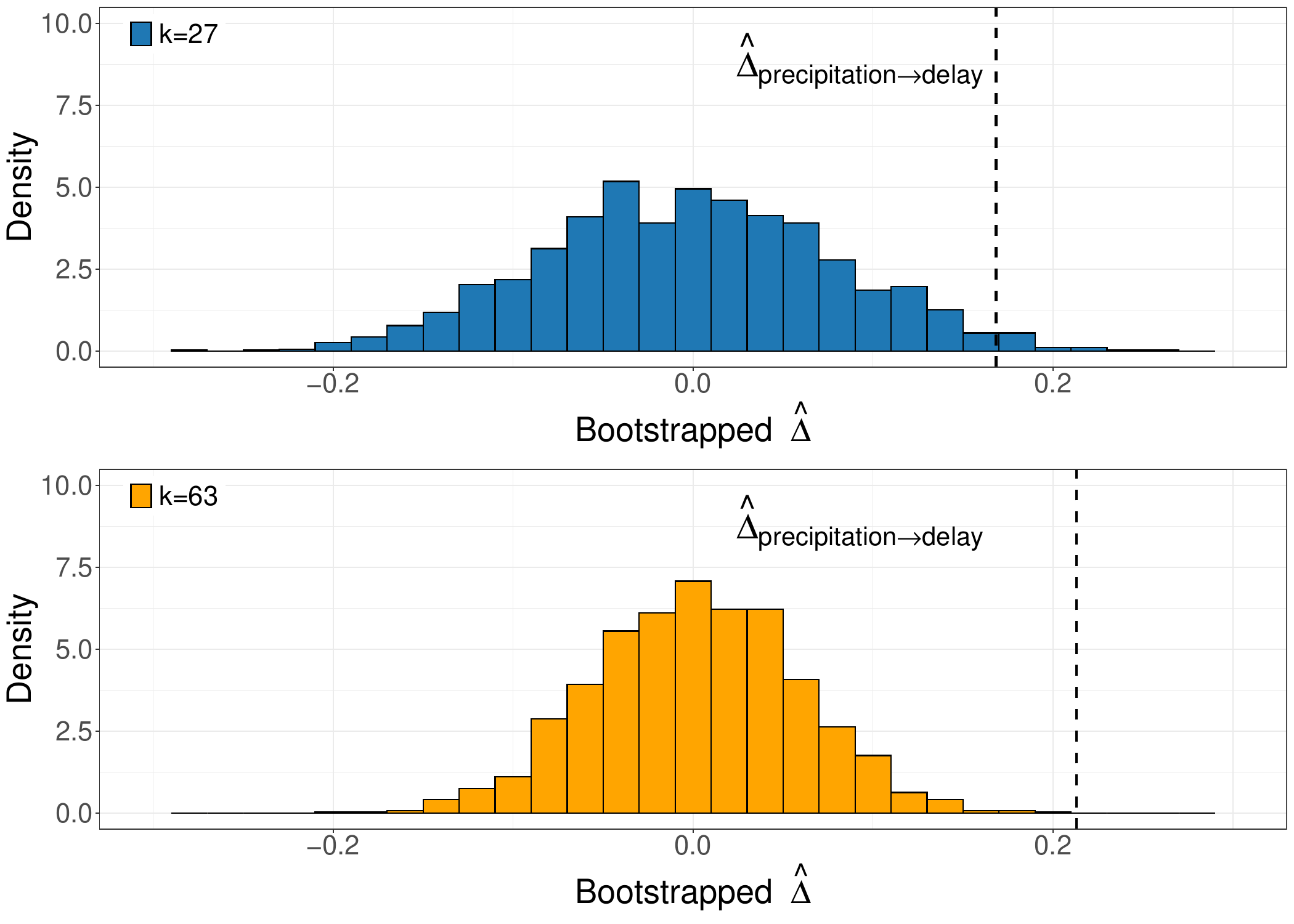}
     \end{subfigure}
    \caption{Estimates of the CTC for causality between current precipitation and train delays. The left panel shows
the convergence of the CTC estimator for $k\in \{27, 63\}$ and the right panel shows the bootstrap distributions.}
    \label{fig:add_delays}
\end{figure}

\subsection{Financial Stock Markets and Cryptocurrencies}
\begin{figure}[H]
    \centering
    \begin{subfigure}[b]{0.49\textwidth}
        \centering
        \includegraphics[width=\textwidth]{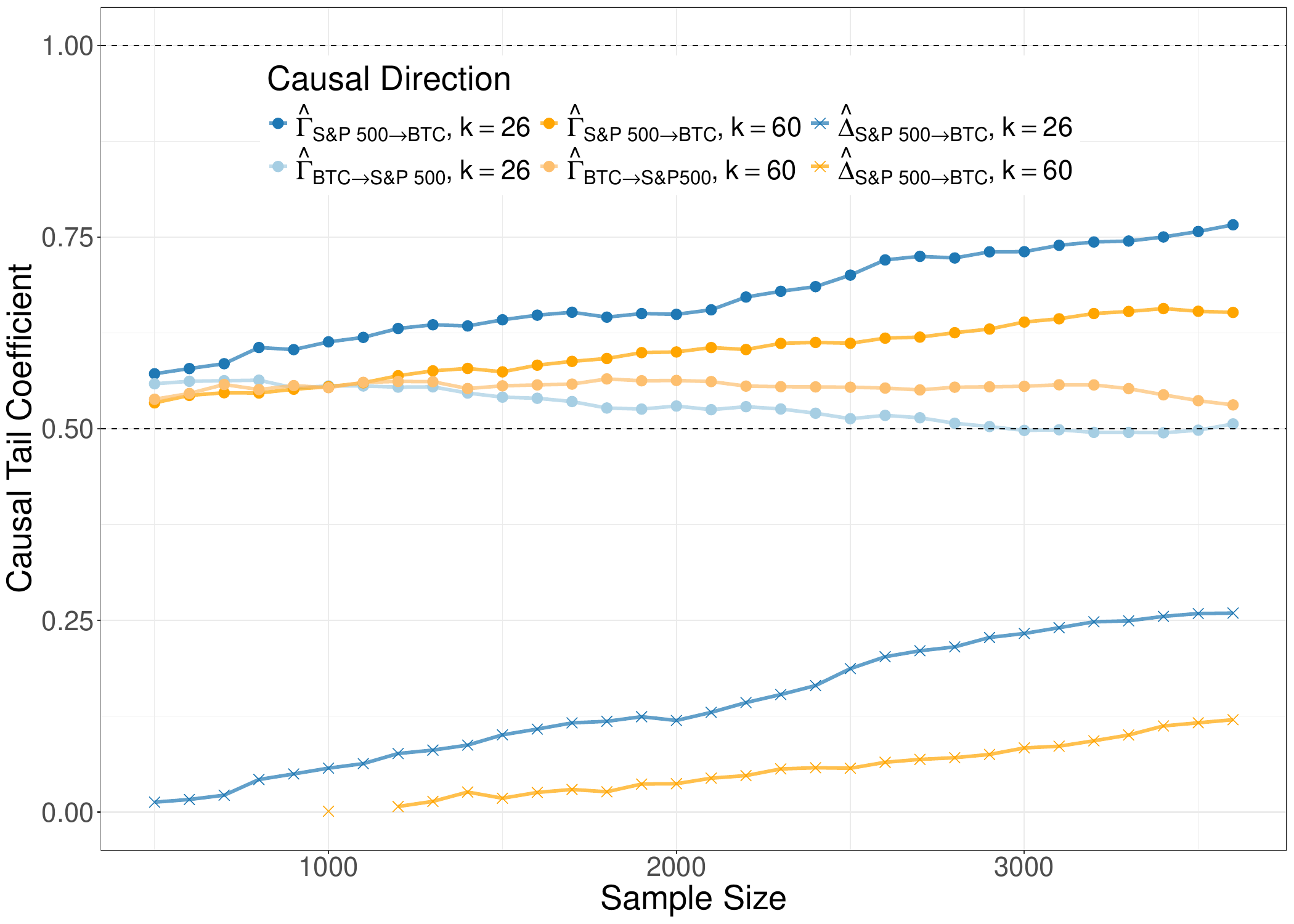}
    \end{subfigure}
    \begin{subfigure}[b]{0.49\textwidth}
        \centering
        \includegraphics[width=\textwidth]{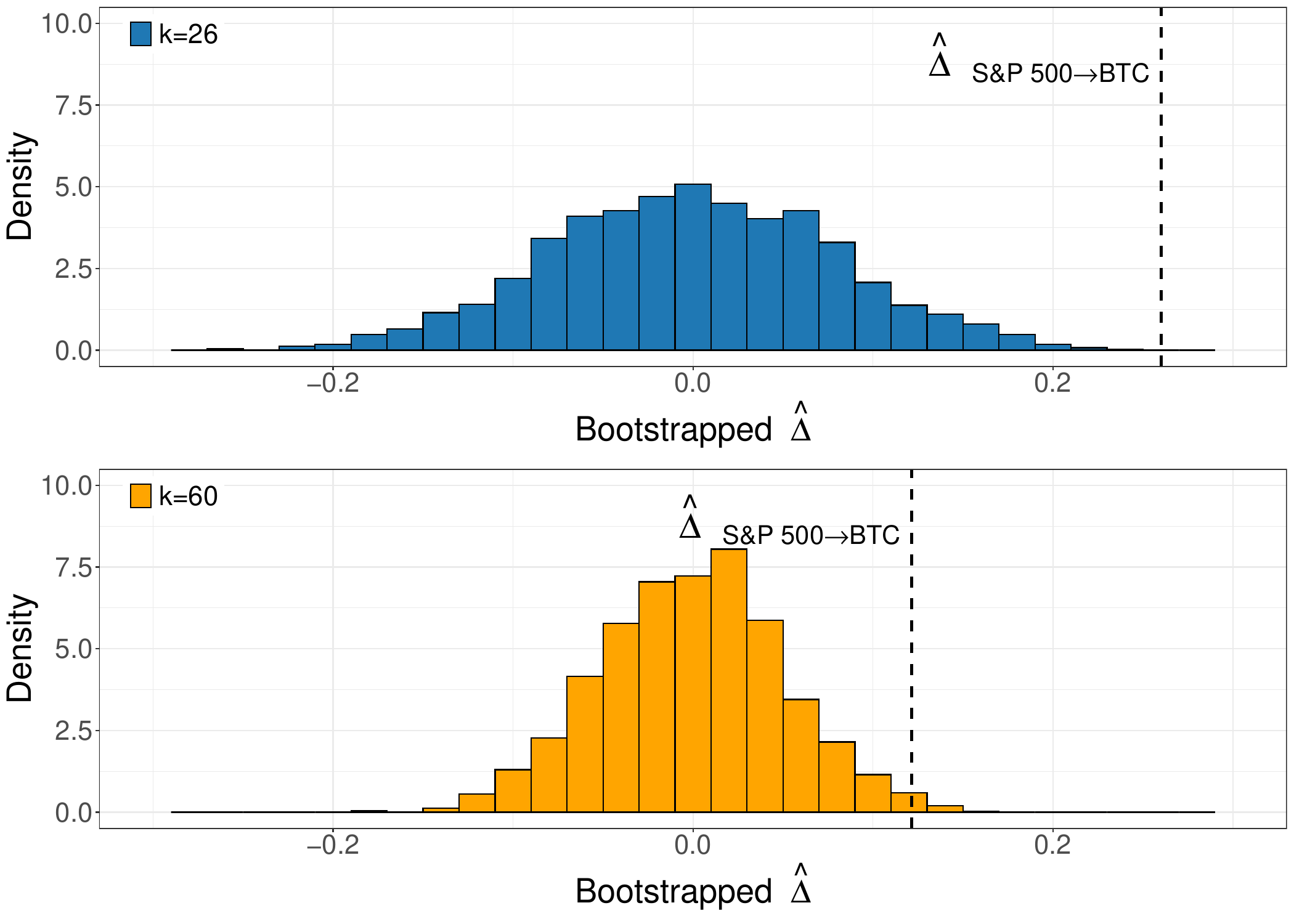}
    \end{subfigure}
    \vskip\baselineskip
    \caption{Estimates of the CTC for causality between S\&P 500 and Bitcoin for the right tails. The left panel shows the convergence of the estimator for $k\in\{26,60\}$ and the right panel the bootstrap samples of the Causality-Test and the estimated value of $\Delta_{S\&P\,500 \to BTC}$.}
    \label{fig:fiance_CTCs_right_tail}
\end{figure}

\end{document}